\def\d{\mathrm{d}}
\newcommand{\VaR}{\mathrm{VaR}}
\newcommand{\TVaR}{\mathrm{TVaR}}
\newcommand{\RV}{\mathcal{RV}}
\newcommand{\SES}{\mathrm{SES}}
\newcommand{\ES}{\mathrm{ES}}
\newcommand{\CTE}{\mathrm{CTE}}
\newcommand{\e}{\mathrm{e}}
\newcommand{\MES}{\mathrm{MES}}
\newcommand{\ICE}{\mathrm{ICE}}
\newcommand{\CE}{\mathrm{CE}}
\newcommand{\SICE}{\mathrm{SICE}}
\newcommand{\E}{\mathbb{E}}
\newcommand{\R}{\mathbb{R}}
\newcommand{\N}{\mathbb{N}}
\newcommand{\p}{\mathbb{P}}
\newcommand{\id}{\mathds{1}}
\renewcommand{\(}{\left(}
\renewcommand{\)}{\right)}
\renewcommand{\[}{\left[}
\renewcommand{\]}{\right]}
\newtheorem{theorem}{Theorem}[section]
\newtheorem{corollary}{Corollary}[section]
\newtheorem{definition}{Definition}[section]
\newtheorem{example}{Example}[section]
\newtheorem{lemma}{Lemma}[section]
\newtheorem{remark}{Remark}[section]
\newenvironment{proof}[1][Proof]{\noindent \textbf{#1.} }{\  \rule{0.5em}{0.5em}}
\newtheorem{proposition}{Proposition}[section]
\def\nn{\nonumber}
\begin{document}
\begin{CJK}{GBK}{kai}
\baselineskip=16pt

\title{
Value-at-Risk- and Expectile-based Systemic Risk Measures and Second-order Asymptotics: With Applications to Diversification
}

\author{Bingzhen Geng\thanks{\scriptsize School of Big Data and Statistics, Anhui University, 
China. Email: \texttt{gengbz@ahu.edu.cn}}\quad\quad\quad\quad Yang Liu\thanks{\scriptsize Corresponding Author. School of Science and Engineering, The Chinese University of Hong Kong, Shenzhen, 
China. Email: {\texttt{yangliu16@cuhk.edu.cn}}}
 \quad\quad\quad\quad Yimiao Zhao\thanks{\scriptsize Department of Statistics and Actuarial Science, University of Waterloo, 
Canada. Email: \texttt{l249zhao@uwaterloo.ca}}\\[10pt]
}
\date{}
\maketitle
\begin{abstract}

Systemic risk measures play a crucial role in analyzing individual losses conditional on extreme system-wide disasters. In this paper, we provide a unified asymptotic treatment for systemic risk measures. First, we classify them into two families of Value-at-Risk- (VaR-) and expectile-based systemic risk measures. While VaR-based risk measures have been extensively studied, in the latter family, we propose two new systemic risk measures named the Individual Conditional Expectile (ICE) and the Systemic Individual Conditional Expectile (SICE), as alternatives to Marginal Expected Shortfall (MES) and Systemic Expected Shortfall (SES). Second, to characterize general mutually dependent and heavy-tailed risks, we consider a multivariate loss system following a multivariate Sarmanov distribution with common marginal distributions exhibiting second-order regular variation. Third, within this setting, we provide second-order asymptotic results for both families of systemic risk measures. These results extend standard first-order asymptotics and allow for more accurate tail approximations. Through numerical and analytical examples, we demonstrate the superiority of second-order asymptotics in accurately assessing systemic risk. We further conduct a comprehensive comparison between expectile-based and VaR-based systemic risk measures. The results indicate that expectile-based measures often 
yield higher asymptotic accuracy than VaR-based ones, emphasizing the former's potential advantages in reporting extreme events and tail risk. As a financial application, we use the asymptotic treatment to discuss the diversification benefits associated with various risk measures. Finally, we extend and obtain the second-order asymptotic formulas for generalized-quantile-based systemic risk measures with power functions.

	~
	
	\noindent \textbf{Keywords}: Asymptotic approximation; Systemic risk; Expectile; Sarmanov distribution; Second-order regular variation; Diversification benefit. 
\end{abstract}

\setcounter{equation}{0}
\section{Introduction}



Financial risks are potential events that can negatively affect the stability of an individual financial institution, a specific financial market, or even the global economy. {Analyzing extreme events that align with these adverse scenarios constitutes a crucial part of risk management.} To study these extreme events, the extreme value theory (EVT) is a useful framework. EVT offers a contemporary collection of statistical tools and techniques that can be used to address various questions related to risk assessment and management in finance.  Financial risks are typically divided into different categories depending on their characteristics. Market risk, credit risk, and operational risk are the primary risk groups that have been extensively studied using quantitative assessment methods and regulated by authorities. After the global financial crisis in 2008-2009, the concept of systemic risk, which refers to the risk of multiple financial institutions failing together and causing widespread impact, has gained significant attention from regulators and researchers in the field. Various measures related to the systemic risk have been proposed in the literature, including CoVaR by \cite{adrian2016covar}, Systemic Expected Shortfall ($\SES$) and Marginal Expected Shortfall ($\MES$) by \cite{acharya2017measuring} and \cite{chen2022asymptotic},  
scenario-based risk measures by \cite{wang2021scenario}, conditional distortion risk measures by \cite{dhaene2022systemic}, generalized risk measures by \cite{fadina2024framework} and others.


Following recent studies of systemic risks in banking, finance and insurance, we quantify $\SES$ and $\MES$ in a general context of quantitative risk management. 
 {{We denote the individual risks by $X_1, \ldots, X_n$, and the aggregate risk by $S_n=\sum_{i=1}^n X_i$.}} For the sum $S_n$, for $p \in (0,1)$, according to the Euler principle (see \cite{dhaene2012optimal} or \cite{acharya2017measuring}), the
risk allocated to line $m\in \{1,\ldots,n\}$ is defined by
\begin{align*}
    \MES_{p,m}(S_n):=\E\left[X_m|S_n>\VaR_p(S_n)\right],
\end{align*}
or
\begin{align*}
    \SES_{p,m}(S_n):=\E\left[\(X_m-\VaR_p(X_m)\)_+|S_n>\VaR_p(S_n)\right],
\end{align*}
 {where Value-at-Risk ($\VaR$) is defined as the loss quantile of $X$ (the distribution is denoted by $F_X$):} 
\begin{align*}
    \VaR_{p}(X) := F_X^{\leftarrow}(p)=\inf\{t\in\R:F_X(t)\geq p\},
\end{align*}
and $F_X^{\leftarrow}$ is the generalized inverse of the distribution function.  Another popular risk measure is Expected Shortfall ($\ES$):
\begin{align*}
    \ES_p(X) := \frac{1}{1-p}\int_p^1\VaR_t(X)\d t,
\end{align*}
which is the average value on the tail beyond $\VaR_{p}$. The $\ES$ (sometimes called Tail-Value-at-Risk ($\TVaR$)) is a coherent risk measure in the sense of \cite{artzner1999coherent}. If $F_X$ is continuous, $\ES$ coincides with the Conditional Tail Expectation ($\text{CTE}$), which represents the conditional expected loss given that the loss exceeds its $\VaR$:
\begin{align*}
    \ES_{p}(X) = \text{CTE}_{p}(X) := \E\left[X|X>\VaR_{p}(X)\right].
\end{align*}

Though enjoying several merits, $\VaR$ and $\ES$ have some drawbacks. Specifically, $\VaR$ does not possess subadditivity, which excludes $\VaR$ from the good class of coherent risk measures. $\ES$ does not satisfy the {elicitability}, which is a property recently arousing interest in the field of risk management. Here, 
a risk measure is said to be elicitable if it can be defined as the minimizer of a suitable expected loss function. 
Elicitability is important in backtesting a risk measure as it provides a natural methodology for this purpose. Meaningful point forecasts and forecast performance comparisons then become possible for elicitable risk measures; see \cite{ziegel2016coherence} and \cite{HLY2025}. 

Following \cite{newey1987asymmetric}, the expectile $\e_p(X)$ with a level $p\in(0, 1)$ of the variable $X$ can be defined as the
minimizer of an expected piecewise quadratic loss function:
\begin{align*}
\e_p(X) =\arg\min_{\theta\in\mathbb{R}}\left\{p \E\left[\((X-\theta)_+\)^2\right]+(1-p)\E\left[\((X-\theta)_-\)^2\right]\right\},
\end{align*}
where $x_+:=\max(x, 0)$ and $x_-:=\min(x, 0)$. The minimization problem is well defined if $X\in\mathrm{L}^2$ (i.e., $\E|X|^2<\infty$). The related first-order necessary condition of optimality can be written in several ways, one of them being
\begin{align}
\e_p(X)-\E[X]=\frac{2p-1}{1-p}\E\left[(X-\e_p(X))_+\right].
\label{eq:1.1}
\end{align}
This equation has a unique solution for all $X\in\mathrm{L}^1$. Therefore, expectiles of a distribution function $F$ with a finite first-order moment are well defined, and we assume that $\E|X|<\infty$ throughout. Expectiles summarize the distribution function in much the same way as the quantiles; see \cite{gneiting2011making}. The expectile and $\VaR$ are elicitable while $\ES$ is not. Actually, the expectile $\e_p$ with $p \geq \frac{1}{2}$ is the only risk measure which is elicitable, law-invariant and coherent; see \cite{bellini2015elicitable}.  As a result, the expectile is suggested (see \cite{emmer2015best}) as a potential alternative to both $\VaR$ and $\ES$. The study on expectiles also becomes increasingly popular in the econometrics literature; see, for example, \cite{de2009quantiles} and \cite{kuan2009assessing}. 

In this paper, we provide a unified asymptotic treatment for the systemic risk measures. The treatment has three steps. The first step is that the systemic risk measures are classified into two representative families: VaR- and expectile-based measures. From the definitions, $\VaR_p$ and $\e_p$ exhibit distinct mathematical properties. As we can see below, both of them lead to many systemic risk measures; e.g., $\CTE_p$ (or $\ES_p$), $\MES_{p,m}$ and $\SES_{p,m}$ are categorized within the family of VaR-based risk measures. As a result, $\VaR_p$ and $\e_p$ can serve as building blocks in assessing systemic risk, and form essential foundations for further study. In particular,
\cite{taylor2008estimating} introduced an expectile-based alternative of ES, known as the Conditional Expectile ($\CE_p$):
\begin{align*}
    \CE_p(X):=\E\left[X | X > \e_p(X)\right],
\end{align*}
 {where $\CE_p$ represents the conditional expected loss given that the loss exceeds the expectile $\e_p$ and $\CE_p$ satisfies positive homogeneity and translation invariance (see \cite{daouia2020tail}). } 

To evaluate the allocation of each individual agent to the systemic risk, we further propose two new expectile-based systemic risk measures on the sum variable. We call them the \textit{Individual Conditional Expectile} ($\ICE$) and the \textit{Systemic Individual Conditional Expectile} ($\SICE$):
\begin{align*}
    \ICE_{p,m}(S_n):=\E\left[X_m|S_n>\e_p(S_n)\right],
\end{align*}
and
\begin{align*}
    \SICE_{p,m}(S_n):=\E\left[\(X_m-\e_p(X_m)\)_+|S_n>\e_p(S_n)\right].
\end{align*}
 {$\ICE$ and $\SICE$ adopt an expectile-based viewpoint to capture an individual agent's risk profile in the event of a system-wide catastrophe.} For comparison, the $\VaR$ estimation knows only whether an observation is below or above the predictor. It would be inaccurate to measure an extreme risk based on only the frequency of tail losses but not on their values. The expectile makes more efficient use of the available data since it optimizes the discrepancy between the observations and the predictor. Particularly, $\ICE$ represents the potential losses an individual would suffer conditional on the tail of the system's loss distribution. $\SICE$ is an improved version of $\ICE$ and reveals the individual's excess loss to his/her expectile $\e_p(X_m)$ conditional on the systemic catastrophe.




In the second step, to characterize a general dependence and heavy-tailed risks, we assume that the risks $X_1,\ldots,X_n$ are dependent on each other through a multivariate Sarmanov distribution.  {The Sarmanov family is a very broad and flexible dependence structure: by selecting different kernel functions, it is capable of generating a wide range of joint models, including many commonly used asymmetric and heterogeneous dependence patterns; e.g., the commonly used Farlie-Gumbel-Morgenstern (FGM) copula. Meanwhile, the FGM copula has some drawbacks in terms of correlation coefficients; see \cite{yang2013extremes} and Section \ref{sec:sarmanov}. In particular, the Sarmanov distribution is flexible in combining different types of marginals, making it suitable for modeling various risks. The advantage of using the Sarmanov distribution lies in its ability to capture dependencies and  {to facilitate} the evaluation of joint probabilities. More importantly, the analytical method developed in the paper is not specific to the Sarmanov family. The same second-order asymptotic approximation technique can be adapted to other dependence structures; for instance, by following our derivations, one can obtain second-order asymptotics of the Clayton copula. }


\begin{table}[htbp]%
  \centering
  \setlength{\tabcolsep}{2.0mm}{
  \begin{tabular}{ |c|p{14em}|p{16em}| }%
  \hline
  Systemic risk measure &   First-order asymptotic &Second-order asymptotic   \\
\hline                     
   $\VaR_p(S_n)$& \cite{bingham1989regular,barbe2006tail,embrechts2009additivity} and so on. & \cite{degen2010risk,mao2015risk}; Theorem \ref{the:3.1} of our paper\\
\hline    
 $\CTE_p(S_n)$& \cite{alink2005analysis,chen2012extreme,kley2020modelling} 
 and so on.&\cite{mao2013second,lv2013asymptotics}; Theorem \ref{the:3.1} of our paper\\
 \hline 
$\MES_{p,m}(S_n)$  &\cite{asimit2011asymptotics,joe2011tail,jaune2022asymptotic,geng2025asymptotics} and so on.&\cite{hua2011second}; Theorem \ref{the:3.2} of our paper\\
 \hline 
  $\SES_{p,m}(S_n)$&\cite{chen2022asymptotic,geng2025asymptotics} and so on.  &Theorem \ref{the:3.2} of our paper\\
   \hline                     
   $\e_{p}(S_n)$& \cite{bellini2014generalized,bellini2017risk}& \cite{mao2015asymptotic,mao2015risk}; Theorem \ref{the:4.1} of our paper\\
  \hline 
 $\CE_{p}(S_n)$ &\cite{dhaene2022systemic} & Theorem \ref{the:4.1} of our paper\\
 \hline 
 $\ICE_{p,m}(S_n)$& \cite{emmer2015best,tadese2020relative}&Theorem \ref{the:4.2} of our paper\\
 \hline 
$\SICE_{p,m}(S_n)$ & Theorem \ref{the:4.2} of our paper & Theorem \ref{the:4.2} of our paper\\
 \hline 
Generalized-quantile-based & Theorem \ref{the:7.1} of our paper & Theorem \ref{the:7.1} of our paper\\
  \hline 
  \end{tabular}}
   \caption{Contribution of our paper compared to the literature. Here VaR-based systemic risk measures include $\VaR$, $\CTE$, $\MES$ and $\SES$, while expectile-based systemic risk measures include $\e$, $\CE$, $\ICE$ and $\SICE$.}
  \label{tab:1}
\end{table}

In the third step, we obtain the second-order asymptotics of the two families of systemic risk measures, as summarized in Table \ref{tab:1}. Here, we make our most theoretical contributions on asymptotic approximations. First, we investigate the second-order asymptotic formulas of the tail probability of $S_n$ under a multivariate Sarmanov distribution (Proposition \ref{pro:sum} below), which 
generalizes the results of \cite{mao2013second} and Theorem 4.4 of \cite{mao2015risk}.  Second, we study second-order asymptotic formulas of $\VaR_p(S_n)$ and $\CTE_p(S_n)$ (Theorem \ref{the:3.1} below) as well as those of $\MES_{p,m}(S_n)$ and $\SES_{p,m}(S_n)$ (Theorem \ref{the:3.2} below). 
Third, we use a different method to obtain the second-order asymptotic formula of the expectile, which extends Theorem 3.1 of \cite{mao2015risk} (Proposition \ref{pro:e} below).  
Fourth, we consider the second-order asymptotic formulas of $\e_{p}(S_n)$ and $\CE_{p}(S_n)$ (Theorem \ref{the:4.1} below) as well as $\ICE_{p,m}(S_n)$ and $\SICE_{p,m}(S_n)$ (Theorem \ref{the:4.2} below). Lastly, we perform two examples to demonstrate different risk measures based on $\VaR$ and expectile, where we use the Monte Carlo method to conduct the numerical simulation. Numerical and analytical examples illustrate that our second-order asymptotics provide an accurate estimate and behave much better than the first-order asymptotics. Further, we conduct a comprehensive comparison between these two families of systemic risk measures. We find that expectile-based systemic risk measures often yield higher asymptotic accuracy than VaR-based systemic risk measures, and the former has a potential advantage in reporting extreme events and revealing the tail risk. In addition, we analytically compare the asymptotic behavior of these two families in a Pareto-like setting with different tail indices (Proposition \ref{pro:5.1} below). 


 {As a financial application, we use our asymptotic treatment to discuss the diversification effect.} The idea of portfolio diversification dates back to the celebrated Markowitz mean-variance model, revealing the importance of mitigating risks in the investment. After that, diversification becomes a crucial topic in banking and insurance for risk management, integral to regulatory frameworks like Basel II and Solvency II. 
A larger body of literature propose quantitative ways for diversification. For instance, \cite{chen2022ordering} delved into comparing diversification advantages under the worst-case VaR and ES in the context of dependence uncertainty; see \cite{cui2021diversification} for more results. 

Among them, the diversification benefit denotes the capital preserved by considering all risks in a portfolio collectively, rather than addressing each risk in isolation; see \cite{mcneil2015quantitative} and Section \ref{sec:app} for a detailed definition. Based on our asymptotic treatment, we obtain the second-order asymptotics for the diversification benefit based on different risk measures, including VaR, CTE, $\e$ and $\CE$. Furthermore, we extend our results of expectiles to generalized quantiles (\cite{bellini2014generalized}) with power functions and obtain the corresponding second-order asymptotic formulas for generalized-quantile-based systemic risk measures in Section \ref{sec:7} (Theorem \ref{the:7.1} below). 




Finally, we discuss our results compared with the literature. In the field of risk management and capital allocation, it is necessary to determine how to allocate the acquired economic capital among different risks. In this case, the solvency capital has already been calculated using risk aggregation techniques; 
see \cite{blanchet2020convolution} for a recent treatment of risk aggregation. It is of increasing interest to study the selection of appropriate models for multivariate risk factors, such as the choice of the dependence structure model and the marginal distributions. Some recent contributions include the use of the FGM distribution (\cite{yang2013extremes}, \cite{chen2014ruin}, etc.), the Sarmanov distribution (\cite{qu2013approximations}, \cite{yang2013tail}, etc.), and multivariate regular variation (MRV) (\cite{embrechts2009multivariate}, \cite{asimit2011asymptotics}, etc.). They usually aim to obtain the first-order asymptotics of some risk measures. On the contrary, our results provide a series of second-order asymptotics, which are much more accurate than the first-order asymptotics; this will be shown in the tables and figures later. This fact is also studied in  \cite{degen2010risk}, \cite{mao2012second} and \cite{mao2013second}, but they obtained the second-order asymptotics for independent and identically distributed (iid) random variables (rvs) with second-order regularly varying tails. This independence assumption is too restrictive for practical problems. \cite{mao2015risk} studied the case that $X_1,\ldots,X_n$ are dependent on each other through a multivariate FGM distribution, and derived second-order approximations of the risk concentrations of Value-at-Risk and expectile. 
However, our study contributes to the advancement of systemic risk measurement by introducing novel measures, developing a modeling framework and providing enhanced asymptotic tools for risk assessment. Technically, the proposed two wide families of systemic risk measures can include the risk measures in the literature of \cite{mao2015risk} and others. In particular, we provide rigorous and necessary lemmas for asymptotic treatment (Proposition \ref{pro:sum} and Lemmas \ref{lem:2rv}-\ref{lem:beta}) and offer a simpler proof for the key results. 



The rest of the paper is organized as follows. In Section \ref{pre}, we introduce the definitions of $\RV$ and $2\RV$ and discuss the $n$-dimensional Sarmanov distribution.  In Sections \ref{VaR} and \ref{expectile}, we obtain several second-order asymptotics of $\VaR$- and expectile-based systemic risk measures and present examples to explain the main results. In Section \ref{sim}, we give concrete examples to numerically illustrate these risk measures. Further, 
we apply the above asymptotic treatment to discuss financial diversification in Section \ref{sec:app}. 
In Section \ref{sec:7}, we obtain the second-order asymptotics of generalized-quantile-based systemic risk measures.
Section \ref{sec:conc} concludes the paper. The Appendix 
provides details for the proofs.

\setcounter{equation}{0}
\section{Preliminaries}\label{pre}

In this section, we first review the definitions and basic properties of regular variation ($\RV$) and the second-order regular variation (2$\RV$). $2\RV$ is a concept that is a generalization of regular variation ($\RV$), which has various applications in areas such as applied probability, statistics, risk management, telecommunication networks and so on. The idea of $2\RV$ was initially proposed to investigate the rate of convergence of the extreme order statistics in EVT; see \cite{de1996generalized} and \cite{de2006extreme}.  Next, we introduce the $n$-dimensional Sarmanov distribution. Its applications in many insurance contexts show its flexible structure when modeling the dependence
between multivariate risks given the marginal distributions; see \cite{qu2013approximations}, \cite{abdallah2016sarmanov}, \cite{ratovomirija2016mixed} and so on.

\subsection{Regular variation}

\begin{definition}(Regular variation)\\
A measurable function $f:\mathbb{R}\rightarrow\mathbb{R}_+$ is said to be regularly varying at $t_0\in[-\infty,\infty]$ with an index $\alpha\in\mathbb{R}$ and denoted by $f\in \RV_{\alpha}^{t_0}$, if
\begin{align*}
    \lim_{t\rightarrow t_0}\frac{f(tx)}{f(t)}=x^{\alpha}, ~~\mbox{for all} ~x>0.
\end{align*}
If $t_0=\infty$, we write $\RV_{\alpha}^{t_0}=\RV_{\alpha}$. In addition, if $\alpha=0$, then $f$ is said to be slowly varying at infinity.
\end{definition}

\begin{definition}(Second-order regular variation)\\
A measurable function $f:\mathbb{R}\rightarrow\mathbb{R}_+$ is said to be second-order regularly varying with a first-order index $\alpha\in\mathbb{R}$ and a second-order index $\beta\leq 0$ and denoted by $f\in 2\mathcal{RV}_{\alpha,\beta}^{t_0}$, if there exists some eventually positive or negative measurable function $A(\cdot)$ with $A(t)\rightarrow 0$ as $t\rightarrow t_0$ such that
\begin{align*}
    \lim_{t\rightarrow t_0}\frac{\frac{f(tx)}{f(t)}-x^{\alpha}}{A(t)}=x^{\alpha}\frac{x^{\beta}-1}{\beta}:=H_{\alpha,\beta}(x),~~\mbox{for all} ~x>0.
\end{align*}
Here $H_{\alpha,\beta}(x)$ is  $x^{\alpha}\log x$ if $\beta=0$, and $A(\cdot)$  is called an auxiliary function of $f$. It is worth noting that the auxiliary function $A(\cdot)$ is of $\RV_{\beta}$; see Theorem 2.3.3 of
\cite{de2006extreme}. If $t_0=\infty$, we write $2\mathcal{RV}_{\alpha,\beta}^{t_0}=2\mathcal{RV}_{\alpha,\beta}$.
\end{definition}

Let $X_1, X_2, ..., X_n$ denote the financial losses, which are identically distributed random variables with a distribution function $F$. Here $\overline{F}(\cdot)$ means the survival function $\overline{F}(x) = 1 - F(x)$ and $F^{\leftarrow}(\cdot)$ means the generalized inverse function $F^{\leftarrow}(y) =\inf\{x\in \mathbb{R}: F(x) \geq y\}$. 
The tail quantile function associated with the distribution function $F$ is denoted by $U_F(\cdot)=\(1/\overline{F}\)^{\leftarrow}(\cdot)=F^{\leftarrow}\(1-1/\cdot\)$. Note that $\overline{F}(\cdot)\in \RV_{-\alpha}$ for all $\alpha\in \R$ is equivalent to $U_F(\cdot)\in \RV_{1/\alpha}$  for all $\alpha\in \R$ (see Corollary 1.2.10 of \cite{de2006extreme}). Furthermore, if $\overline{F}(\cdot)\in 2\RV_{-\alpha,\beta}$ with $\alpha>0, \beta\leq 0$ and an auxiliary function $A(\cdot)$, by Theorem 2.3.9 of \cite{de2006extreme}, one can easily check that $U_F(\cdot)\in 2\RV_{1/\alpha,\beta/\alpha}$ with an auxiliary function $\alpha^{-2}A\circ U_F(\cdot)$. Generally, the equality $F\(F^{\leftarrow} (p)\) =p$ does not hold true. It can be shown that if $\overline{F}(\cdot)\in \RV_{-\alpha}$
 with $\alpha>0$, then $F\(F^{\leftarrow}(p)\)\sim p$ as $p\uparrow1$. If we further assume that $\overline{F}(\cdot)\in 2\RV_{-\alpha,\beta}$
with an auxiliary function $A(\cdot)$, then
\begin{align}\label{eq:2.1}
U_F(1/\overline{F}(t))=t\(1+o(A(t))\) ~\mbox{as}~ t\rightarrow \infty~ \text{ and }~F\(F^{\leftarrow}(p)\)= p\(1+o\(A\(F^{\leftarrow}(p)\)\)\)~ \mbox{as} ~p\uparrow 1;
\end{align}
see \cite{mao2015risk} and Exercise 2.11 of \cite{de2006extreme}.


\subsection{Multivariate Sarmanov distribution}\label{sec:sarmanov}


The Sarmanov distribution is widely adopted in different fields. It was originally introduced by \cite{sarmanov1966generalized} in the bivariate case and then extended by \cite{lee1996properties} and \cite{kotz2004continuous} in the multivariate case:
\begin{align}\label{eq:sarmanov}
\p\(X_1\in \d x_1,\ldots,X_n\in \d x_n\)=\(1+\sum_{1\leq i<j\leq n}a_{ij}\phi_{i}(x_i)\phi_j(x_j)\)\prod_{k=1}^{n}\d F(x_k),
\end{align}
 where $F$ is the corresponding marginal distribution of $X_i$, $i = 1, \dots, n$. Particularly, the parameters $a_{ij}$ are real numbers and the kernels $\phi_i$ are functions satisfying
 $$\E[\phi_i(X_i)]=0, ~i=1,\ldots,n, $$
 and
$$1+\sum_{1\leq i< j\leq n}a_{ij}\phi_{i}(x_i)\phi_j(x_j) \geq 0 ~\mbox{ for all}~x_i\in D_{X_i} \text{ and } x_j\in D_{X_j},$$
 where $D_{X_i}=\{x\in \R: \p\(X_i\in (x-\delta,x+\delta)\)>0~\mbox{for all}~\delta>0\}, ~i=1,\ldots, n.$
 
 Similar to those pointed out in  \cite{yang2013tail}, three common choices for the kernels $\phi_i,~i=1,\ldots,n$ are listed below:\\
 (i) $\phi_i(x)=1-2F(x)$ for all $x\in D_{X_i}$, leading to the well-known standard FGM distribution;\\
 (ii) $\phi_i(x)=x^r-\E[X_i^r]$ for all $x\in D_{X_i}$ and there exists  $r\in \R$ such that $\E[X^r_i] < \infty$;\\
(iii) $\phi_i(x)=e^{-x}-\E[e^{-X_i}]$ for all $x\in D_{X_i}$.

 We further discuss the dependence structure of two rvs $(X_1, X_2)$ following a Sarmanov distribution with different kernel functions. 
To model the dependence between the two rvs $X_1$ and $X_2$, we shall use the Pearson correlation coefficient, which is defined as
\begin{align*}
   \rho_{12}=\frac{\E[X_1X_2]-\E[X_1]\E[X_2]}{\sqrt{Var(X_1)}\sqrt{Var(X_2)}}.
\end{align*}
In the case of the Sarmanov distribution, $\rho_{12}$ can be rewritten as
\begin{align} \label{eq:rho}  
\rho_{12}=\frac{a_{12}\E[X_1\phi_1(X_1)]\E[X_2\phi_2(X_2)]}{\sqrt{Var(X_1)}\sqrt{Var(X_2)}}.
\end{align}
Based on \eqref{eq:rho}, we hereafter present Pearson correlation coefficients for different kernel functions along with their maximal and minimal values.


\noindent \textbf{Case 1:} Set the kernel function $\phi_i(x)=1-2F(x)$, which corresponds to the FGM distribution. It is well known that a Pearson correlation coefficient $\rho_{12}$ of the FGM distribution lies between $-\frac{1}{3}$ and $\frac{1}{3}$ (see \cite{schucany1978correlation}), which is an important drawback of the FGM distribution. \cite{huang1984correlation} showed that, by considering the iterated generalization of the FGM distribution proposed by \cite{johnson1977some}, the range of correlation coefficients can be enlarged.

\noindent \textbf{Case 2:} Set the kernel function  $\phi_i(x)=x^r-\E[X_i^r]$. A usual choice is $r = 1$, which leads to a Pearson correlation coefficient $\rho_{12} = a_{12}\sigma_1\sigma_2$.  {In this situation, according to \cite{lee1996properties}, we assume that the correlation coefficient of $X_1$ and $X_2$ exists. Let $T_i~(i = 1,2)$ denote the upper truncation points of the marginals, and consider the marginal densities defined only for non-negative values. Then, $a_{12}$ satisfies the condition that}
\begin{align*}
\max \left\{\frac{-1}{\mu_1\mu_2},\frac{-1}{(T_1-\mu_1)(T_2-\mu_2)}\right\} \leq a_{12} \leq \min \left\{\frac{1}{\mu_1(T_2 - \mu_2)},\frac{1}{\mu_2(T_1 -\mu_1)}\right\},
\end{align*}
where $\mu_i = \E[X_i]$. Then, the maximal and minimal values of the correlation coefficients are, respectively, given by
\begin{align*}
    \rho^{\max}_{12} = \frac{\sigma_1\sigma_2}{\max \{\mu_1(T_2  - \mu_2),\mu_2(T_1  - \mu_1)\}}, \, \rho^{\min}_{12} = \frac{-\sigma_1\sigma_2}{\max \{\mu_1\mu_2, (T_1 -\mu_1)(T_2-\mu_2)\}}.
\end{align*}

\noindent \textbf{Case 3:} Set the kernel function $\phi_i(x)=e^{-x}-\E[e^{-X_i}]$ for all $x\in D_{X_i}$. In this case, if the correlation coefficient of $X_1$ and $X_2$ exists, 
the range of $a_{12}$ is (see \cite{lee1996properties})
\begin{align*}
    \frac{-1}{\max \{\mathcal{L}_1(1)\mathcal{L}_2(1), (1-\mathcal{L}_1(1))(1-\mathcal{L}_2(1))\}} \leq a_{12} \leq  \frac{1}{\max \{\mathcal{L}_1(1)(1-\mathcal{L}_2(1)), \mathcal{L}_2(1)(1-\mathcal{L}_1(1))\}},
\end{align*}
where $\mathcal{L}_i(t) = \int^\infty_0 \exp(-tx_i)\d F_i(x_i)$. Then, according to (\ref{eq:rho}), the maximal value of the  Pearson correlation coefficient $\rho_{12}$ can be written as 
\begin{align*}
    \rho^{\max}_{12} = \frac{[-\mathcal{L}^{'}_1(1) - \mathcal{L}_1(1)\mu_1][-\mathcal{L}^{'}_2(1) - \mathcal{L}_2(1)\mu_2]}{\max \{\mathcal{L}_1(1)(1-\mathcal{L}_2(1)), \mathcal{L}_2(1)(1-\mathcal{L}_1(1))\}\sigma_1 \sigma_2},
\end{align*}
and the minimal value can be expressed as
\begin{align*}
    \rho^{\min}_{12} = -\frac{[-\mathcal{L}^{'}_1(1) - \mathcal{L}_1(1)\mu_1][-\mathcal{L}^{'}_2(1) - \mathcal{L}_2(1)\mu_2]}{\max \{\mathcal{L}_1(1)\mathcal{L}_2(1), (1-\mathcal{L}_1(1))(1-\mathcal{L}_2(1))\}\sigma_1 \sigma_2}.
\end{align*}

\setcounter{equation}{0}
\section{Second-order asymptotics of VaR-based systemic risk measures}\label{VaR}

In this section, we study the second-order asymptotics of VaR-based systemic risk measures with multivariate Sarmanov distributions.  Denote the distribution function of the aggregate risk $S_n=\sum_{i=1}^nX_i$ by $G(t):=\p(S_n\leq t)$. Before stating some results, we use the following notation:\\
(i) $\eta_{\alpha}:=\alpha\int_0^{1/2}\((1-x)^{-\alpha}-1\)x^{-\alpha-1}\d x+2^{2\alpha-1}-2^\alpha;$
~~~~(ii) $\mu:=\E[X_1] = \dots = \E[X_n];$\\
(iii)  $\mu_i(t):=\int_0^t x\phi_i(x)\d F(x), ~i=1,\dots, n;$ 
~~~~~~~~~~~~~~~~~~~~(iv) $\mu(t):=\int_0^t x \d F(x)$.

First, we establish the second-order asymptotics of the random sum under multivariate Sarmanov distributions.  

\begin{proposition}\label{pro:sum}
 Let $X_1,\ldots,X_n$ be nonnegative random variables with a common marginal distribution $F$ satisfying $\overline{F}(\cdot)\in 2\RV_{-\alpha,\beta}$ with $\alpha>0$, $\beta\leq0$ and an auxiliary function $A(\cdot)$. Suppose that $(X_1,\ldots,X_n)$ follows an $n$-dimensional Sarmanov distribution given by \eqref{eq:sarmanov} and $\lim\limits_{t\rightarrow\infty}\phi_i(t)=d_i\in \R, \phi_i(\cdot)-d_i\in \RV_{\rho_i}$ with $\rho_i\leq 0$  for each $i=1,\ldots,n$.  Then as $t\rightarrow\infty$, we get that
\begin{align*}
\frac{\overline{G}(t)}{\overline{F}(t)}=n\(1+\widetilde{A_n}(t)\(1+o(1)\)\),
\end{align*}
where
\begin{align}\label{eq:th3.1}
\widetilde{A_n}(t)=\left\{
\begin{array}
[c]{lll}%
\alpha t^{-1}\mu_n^*(t)+o\(|A(t)|+\sum\limits_{i=1}^n|\phi_i(t)-d_i|\), &  &\alpha\geq 1,  \\\\
\eta_{\alpha}\kappa_n\overline{F}(t)+o\(|A(t)|+\sum\limits_{i=1}^n|\phi_i(t)-d_i|\), &  & 0<\alpha<1,  \\
\end{array}
\right.
\end{align}
and
$$\mu_n^*(t):=(n-1)\mu(t)+\sum\limits_{1\leq i< j\leq n}\frac{a_{ij}\(d_i\mu_j(t)+d_j\mu_i(t)\)}{n},$$
$$\kappa_n:=n-1+\sum\limits_{1\leq i< j\leq n}\frac{2a_{ij}d_id_j}{n}.$$
\end{proposition}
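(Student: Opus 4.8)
The plan is to exploit the additive structure of the Sarmanov density \eqref{eq:sarmanov}: writing the joint law as the independent product plus the pairwise perturbation, I would split the tail as
$$\overline{G}(t) = \overline{G}^{\perp}(t) + \sum_{1\leq i<j\leq n} a_{ij}\, I_{ij}(t), \qquad I_{ij}(t) := \E\left[\phi_i(X_i)\phi_j(X_j)\,\id\{S_n>t\}\right],$$
where $\overline{G}^{\perp}$ denotes the tail of a sum of $n$ independent copies with marginal $F$ and the expectation defining $I_{ij}$ is taken under independence. The proof then reduces to establishing the second-order behaviour of each piece separately and recombining them. Since $\E[\phi_i(X_i)]=0$, the perturbation integral has no first-order contribution, which is precisely why the dependence enters only at the second-order level.

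For the independent part I would invoke the closure of $2\RV$ under convolution, proceeding by induction on $n$ with the convolution of two $2\RV$ tails as the base case. The resulting second-order asymptotics for i.i.d. regularly varying sums give $\overline{G}^{\perp}(t)/\overline{F}(t) = n\bigl(1 + \alpha t^{-1}(n-1)\mu(t)(1+o(1))\bigr)$ when $\alpha\geq 1$ and $\overline{G}^{\perp}(t)/\overline{F}(t) = n\bigl(1 + \eta_\alpha(n-1)\overline{F}(t)(1+o(1))\bigr)$ when $0<\alpha<1$. These supply exactly the $(n-1)\mu(t)$ term in $\mu_n^*(t)$ and the $(n-1)$ term in $\kappa_n$; the truncated mean $\mu(t)$ rather than $\mu$ is retained to cover the borderline $\alpha=1$ where the mean may diverge.

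For the perturbation part I would analyse $I_{ij}(t)$ by the one-big-jump principle. Conditioning on which coordinate carries the large value and using $\phi_k(x)\to d_k$, the dominant contributions come from $\{X_i \text{ large}\}$ and $\{X_j \text{ large}\}$; the cases where some $X_k$, $k\neq i,j$, is large vanish to this order because $\E[\phi_i(X_i)]=\E[\phi_j(X_j)]=0$ together with independence. On $\{X_i \text{ large}\}$ I would replace $\phi_i(X_i)$ by $d_i$, Taylor-expand $\overline{F}(t - X_j - \sum_{k\neq i,j}X_k)$ via $\overline{F}\in\RV_{-\alpha}$, and use $\E[\phi_j(X_j)]=0$ to annihilate the leading term, leaving $d_i\,\overline{F}(t)\,\alpha t^{-1}\E[X_j\phi_j(X_j)]$, whose truncation to $\mu_j(t)=\int_0^t x\phi_j(x)\,\d F(x)$ keeps it finite near $\alpha=1$. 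Adding the symmetric contribution from $\{X_j \text{ large}\}$ yields $I_{ij}(t)\sim \alpha t^{-1}\overline{F}(t)(d_i\mu_j(t)+d_j\mu_i(t))$ for $\alpha\geq 1$. For $0<\alpha<1$ the shift term is no longer dominant; the leading correction instead comes from the two-big-jump region where $X_i$ and $X_j$ are simultaneously of order $t$, on which $\phi_i(X_i)\phi_j(X_j)\to d_id_j$, producing $I_{ij}(t)\sim 2\eta_\alpha d_id_j\,\overline{F}(t)^2$ — the very integral defining $\eta_\alpha$. Dividing by $n\overline{F}(t)$ and adding the independent contribution reproduces $\alpha t^{-1}\mu_n^*(t)$ and $\eta_\alpha\kappa_n\overline{F}(t)$ respectively, matching \eqref{eq:th3.1}.

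The main obstacle is the uniform error control rather than the leading-order bookkeeping. I would use Potter bounds for $\overline{F}\in 2\RV_{-\alpha,\beta}$ together with the auxiliary function $A(\cdot)$ to justify the Taylor expansion of $\overline{F}(t-s)$ uniformly over the relevant range and to dominate the integrands for a dominated-convergence argument; the hypothesis $\phi_i-d_i\in\RV_{\rho_i}$ with $\rho_i\leq 0$ is what lets the remainders from replacing $\phi_i(X_i)$ by $d_i$ be absorbed into the error $o\bigl(|A(t)|+\sum_i|\phi_i(t)-d_i|\bigr)$. The most delicate point is the $0<\alpha<1$ regime: isolating the two-big-jump overlap and showing its constant equals $\eta_\alpha$ requires a careful change of variables $x_i=tu$, $x_j\approx t(1-u)$ and an asymptotic evaluation of the resulting integral, which is precisely where the preliminary second-order regular variation lemmas of Section \ref{pre} do the heavy lifting.
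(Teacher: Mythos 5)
Your proposal is correct in substance but follows a genuinely different route from the paper. The paper never analyses the signed perturbation directly: invoking the bound $|\phi_i|\le c_i-1$ (Proposition 1.1 of \cite{yang2013tail}), it defines genuine probability measures $\d\widetilde{F_i}=(1-\phi_i(x_i)/c_i)\,\d F(x_i)$, rewrites the Sarmanov density so that $\overline{G}(t)=I(t)+\sum_{1\le i<j\le n}a_{ij}c_ic_j\bigl(I(t)-I_i(t)-I_j(t)+I_{i,j}(t)\bigr)$ with every piece the tail of a sum of \emph{independent} random variables, and then quotes existing two-term expansions for such sums (Propositions 3.6--3.7 of \cite{mao2015second}, supplied with the $2\RV$ index and auxiliary function of $\widetilde{F_i}$ by Lemma \ref{lem:2rv}); the terms of exact order $|\phi_i(t)-d_i|$ then cancel automatically in the algebraic recombination. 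You instead keep the signed split $\overline{G}=\overline{G}^{\perp}+\sum_{i<j}a_{ij}I_{ij}$ and evaluate $I_{ij}=\E[\phi_i(X_i)\phi_j(X_j)\id\{S_n>t\}]$ by a direct one-/two-big-jump analysis, with $\E[\phi_i(X_i)]=0$ killing the first-order term. Your leading constants are right: they recombine exactly to $\alpha t^{-1}\mu_n^*(t)$ for $\alpha\ge1$ and to $\eta_\alpha\kappa_n\overline{F}(t)$ for $0<\alpha<1$, and your reading of the latter regime is accurate provided ``two big jumps'' is understood as both variables being of order $t$: the integral part of $\eta_\alpha$ lives on configurations $x_i\approx ut$, $x_j\approx(1-u)t$ with $u\in(0,1/2)$, not only on both variables exceeding $t/2$. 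What your route buys is probabilistic transparency (it makes visible why the dependence enters only at second order) and independence from the $c_i$-trick and from results on sums with non-identical marginals; what it costs is that the uniform error control and, above all, the cancellation bookkeeping that the paper inherits for free must be done by hand. Concretely, each one-jump region generates terms of order $|\phi_i(t)-d_i|\,\overline{F}(t)$ — from replacing $\phi_i(X_i)$ by $d_i$, and from the region where the \emph{companion} variable is the large one — and this is precisely the order that the error $o\bigl(|A(t)|+\sum_i|\phi_i(t)-d_i|\bigr)$ in \eqref{eq:th3.1} excludes; one must therefore check, as indeed happens, that these contributions either cancel between regions or carry an extra factor of $\overline{F}(t)$ or $t^{-1}\mu(t)$. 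Your sketch asserts this absorption without exhibiting it, and that verification is where most of the actual work of your proof would sit.
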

\begin{proof} We require Lemma \ref{lem:2rv} in the Appendix for this proof. 
 For $t>0$, denote the region $\Omega_t=\{(x_1,\ldots,x_n)\in \R_+^n:\sum_{i=1}^nx_i> t\}$. In addition, let $X_1^*,\ldots,X_n^*$ be iid with a distribution function $F$.   According to Proposition 1.1 of \cite{yang2013tail}, there exist $n$ constants $c_i>1, ~i=1,\ldots,n,$ such that $|\phi_i(x_i)|\leq c_i-1$ for all $x_i\in D_{X_i}$. Let $\widetilde{X_1^*},\ldots,\widetilde{X_n^*}$ be mutually independent rvs with marginal distributions $\widetilde{F_1},\ldots,\widetilde{F_n}$, which are also independent of $X_1^*,\ldots,X_n^*$. Particularly, $\widetilde{F_1},\ldots,\widetilde{F_n}$ are defined by
 \begin{align*}
\d\widetilde{F_i}(x_i):=\(1-\frac{\phi_{i}(x_i)}{c_i}\)\d F(x_i), ~~~~i=1,\ldots, n.
 \end{align*}
 By Lemma \ref{lem:2rv} in the Appendix, we have   $\overline{\widetilde{F_i}}(\cdot)\in 2\RV_{-\alpha,\gamma_i}$ with $\gamma_i=\max\{\beta,\rho_i\}$ and an auxiliary function $\widetilde{A^i}(\cdot)=A(\cdot)-\frac{\rho_i\alpha}{(c_i-d_i)(\alpha-\rho_i)}\(\phi_i(\cdot)-d_i\)$. In addition, as $t\rightarrow\infty$, we obtain
 $$\frac{\overline{\widetilde{F_i}}(t)}{\overline{F}(t)}=\(1-\frac{d_i}{c_i}\)\(1-\frac{\alpha}{(c_i-d_i)(\alpha-\rho_i)}(\phi_i(t)-d_i)(1+o(1))\),$$ for all $i=1,\dots, n.$
  Recall $\mu_i(t):=\int_0^t x\phi_i(x)\d F(x), ~i=1,\dots, n$. We have that
$$\int_0^t x\d\widetilde{F_i}( x)=\int_0^t x\(1-\frac{\phi_{i}(x)}{c_i}\)\d F(x)=\mu(t)-\frac{\mu_i(t)}{c_i}.$$
Next, we can split $\overline{G}(t)$ as
 \begin{align}\label{eq:3.4}
 \overline{G}(t)&=\int_{\Omega_t}\(1+\sum_{1\leq i< j\leq n}a_{ij}\phi_{i}(x_i)\phi_j(x_j)\)\prod_{k=1}^{n}\d F(x_k)\nonumber\\
 &=\int_{\Omega_t}\Bigg(1+\sum_{1\leq i<j\leq n}a_{ij}c_ic_j\(1-\(1-\frac{\phi_{i}(x_i)}{c_i}\)-\(1-\frac{\phi_{j}(x_j)}{c_j}\)\right.\nonumber\\
 &\left.\quad+\(1-\frac{\phi_{i}(x_i)}{c_i}\)\(1-\frac{\phi_{j}(x_j)}{c_j}\)\)\Bigg)\prod_{k=1}^{n}\d F(x_k)\nonumber\\
 &=\int_{\Omega_t}\prod_{k=1}^{n}\d F(x_k)+\sum_{1\leq i< j\leq n}a_{ij}c_i c_j\(\int_{\Omega_t}\prod_{k=1}^{n}\d F(x_k)-\int_{\Omega_t}\prod_{k=1,k\neq i}^{n}\d F(x_k)\d\widetilde{F_i}(x_i)\right.\nonumber\\
&\quad\left.-\int_{\Omega_t}\prod_{k=1,k\neq j}^{n}\d F(x_k)\d\widetilde{F_j}(x_j)+\int_{\Omega_t}\prod_{k=1,k\neq i,j}^{n}\d F(x_k)\d\widetilde{F_i}(x_i)\d\widetilde{F_j}(x_j)\)\nonumber\\
&:=I(t)+\sum_{1\leq i< j\leq n}a_{ij}c_i c_j\(I(t)-I_{i}(t)-I_{j}(t)+I_{i,j}(t)\).
 \end{align}
 To deal with $I(t)$, according to Propositions 3.6-3.7 and Remark 3.1 of \cite{mao2015second} with a common distribution $\overline{F}(\cdot)\in 2\RV_{-\alpha,\beta}$, it follows that
 \begin{align}
\frac{I(t)}{\overline{F}(t)}=\left\{
\begin{array}
[c]{lll}%
n\(1+(n-1)\alpha t^{-1}\mu(t)\(1+o(1)\)\)+o\(|A(t)|\), &  &\alpha\geq 1,\\ \\
n\(1+(n-1)\eta_{\alpha}\overline{F}(t)\(1+o(1)\)\)+o\(|A(t)|\), &  & 0<\alpha<1.\nonumber
\end{array}
\right.
\end{align}
By a similar analysis, $\overline{\widetilde{F_i}}(\cdot)\in 2\RV_{-\alpha,\gamma_i}$. According to \eqref{eq:8.1} in the Appendix, we have
 \begin{align}
\frac{I_i(t)}{\overline{F}(t)}=\left\{
\begin{array}
[c]{lll}%
\(n-\frac{d_i}{c_i}\)\(1+\alpha (n-1)t^{-1}\mu(t)\(1+o(1)\)\)-\(\frac{\alpha(n-1)\mu_i(t)}{c_it}+\frac{\alpha(\phi_i(t)-d_i)}{c_i(\alpha-\rho_i)}\)(1+o(1))\\
\quad+o\(|A(t)|\), &  &\alpha\geq 1,\\ \\
\(n-\frac{d_i}{c_i}\)+\eta_{\alpha}\(n-1\)\(n-\frac{2d_i}{c_i}\)\overline{F}(t)(1+o(1))-\frac{\alpha(\phi_i(t)-d_i)}{c_i(\alpha-\rho_i)}(1+o(1))+o\(|A(t)|\), &  & 0<\alpha<1,\nonumber
\end{array}
\right.
\end{align}
and
\begin{align}
\frac{I_{i,j}(t)}{\overline{F}(t)}=\left\{
\begin{array}
[c]{lll}%
\(n-\frac{d_i}{c_i}-\frac{d_j}{c_j}\)\(1+\alpha t^{-1}(n-1)\mu(t)\(1+o(1)\)\)-\alpha\(\(n-1-\frac{d_j}{c_j}\)\frac{\mu_i(t)}{c_i t}\right.\\
\left.\quad+\(n-1-\frac{d_i}{c_i}\)\frac{\mu_j(t)}{c_j t}+\frac{(\phi_i(t)-d_i)}{c_i(\alpha-\rho_i)}+\frac{(\phi_j(t)-d_j)}{c_j(\alpha-\rho_j)}\)\(1+o(1)\)+o\(|A(t)|\), &  &\alpha\geq 1, \\ \\
\(n-\frac{d_i}{c_i}-\frac{d_j}{c_j}\)+\eta_{\alpha}\((n-1)\(n-2\(\frac{d_i}{c_i}+\frac{d_j}{c_j}\)\)+\frac{2d_id_j}{c_i c_j}\)\overline{F}(t)\(1+o(1)\)\\
\quad-\(\frac{\alpha(\phi_i(t)-d_i)}{c_i(\alpha-\rho_i)}+\frac{\alpha(\phi_j(t)-d_j)}{c_j(\alpha-\rho_j)}\)\(1+o(1)\)+o\(|A(t)|\), &  & 0<\alpha<1.\nonumber
\end{array}
\right.
\end{align}
Plugging all the asymptotics for $I(t)$, $I_{i}(t)$, and $I_{i,j}(t)$ into \eqref{eq:3.4} yields that
\begin{align}
\frac{\overline{G}(t)}{n\overline{F}(t)}=\left\{
\begin{array}
[c]{lll}%
1+\alpha t^{-1}\((n-1)\mu(t)+\sum\limits_{1\leq i< j\leq n}\frac{a_{ij}\(d_i\mu_j(t)+d_j\mu_i(t)\)}{n}\)(1+o(1))\\
+o\(|A(t)|+\sum\limits_{i=1}^n|\phi_i(t)-d_i|\), &  &\alpha\geq 1,\\ \\
1+\eta_{\alpha}\(n-1+\sum\limits_{1\leq i< j\leq n}\frac{2a_{ij}d_id_j}{n}\)\overline{F}(t)\(1+o(1)\)\\
+o\(|A(t)|+\sum\limits_{i=1}^n|\phi_i(t)-d_i|\), &  & 0<\alpha<1. \nonumber
\end{array}
\right.
\end{align}
 Thus, this completes the proof.
\end{proof}
\begin{remark}
Recall the notation in Section \ref{pre}. If $\phi_i(x)=1-2F(x)$ for all $x\in D_{X_i}, ~i=1,\ldots,n$, then $d_i=-1$ and $\rho_i=-\alpha$.  If $\phi_i(x)=x^r-\E[X_i^r]$, then $d_i=-\E[X_i^r]$ and $\rho_i=r$ for all $r\leq 0$. If $\phi_i(x)=e^{-x}-\E[e^{-X_i}]$, then $d_i=-\E[e^{-X_i}]$ and $\rho_i=-\infty$. 
\end{remark}

In view of Proposition \ref{pro:sum}, we can easily obtain the $2\RV$ property of $\overline{G}(\cdot)$.
\begin{corollary}\label{cor:3.1}
Under the conditions of Proposition \ref{pro:sum}, we have $\overline{G}(\cdot)\in 2\RV_{-\alpha,\lambda}$ with $\lambda=\max\{-1,-\alpha,\beta\}$ and an auxiliary function $A_G^{(n)}(\cdot)$ given by
\begin{align}\label{eq:cor3.1}
A_G^{(n)}(t)=\left\{
\begin{array}
[c]{lll}%
A(t)-\alpha t^{-1}\mu_n^*(t), & & \alpha\geq1,\\ \\
A(t)-\alpha\eta_{\alpha}\kappa_n\overline{F}(t), &  & 0<\alpha<1.
\end{array}
\right.
\end{align}
\end{corollary}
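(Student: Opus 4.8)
The plan is to verify the defining limit of second-order regular variation for $\overline{G}$ directly, feeding in the ratio asymptotics supplied by Proposition \ref{pro:sum}. Writing $\overline{G}(t)=n\overline{F}(t)\(1+\widetilde{A_n}(t)(1+o(1))\)$ with $\widetilde{A_n}(t)\to 0$, I would first form
\[
\frac{\overline{G}(tx)}{\overline{G}(t)}=\frac{\overline{F}(tx)}{\overline{F}(t)}\cdot\frac{1+\widetilde{A_n}(tx)(1+o(1))}{1+\widetilde{A_n}(t)(1+o(1))},
\]
linearizing the second factor as $1+\(\widetilde{A_n}(tx)-\widetilde{A_n}(t)\)(1+o(1))$, which is legitimate since both $\widetilde{A_n}(tx)$ and $\widetilde{A_n}(t)$ vanish at infinity. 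Substituting the $2\RV_{-\alpha,\beta}$ expansion $\overline{F}(tx)/\overline{F}(t)=x^{-\alpha}+A(t)H_{-\alpha,\beta}(x)(1+o(1))$ and subtracting $x^{-\alpha}$ isolates the second-order term
\[
\frac{\overline{G}(tx)}{\overline{G}(t)}-x^{-\alpha}=A(t)H_{-\alpha,\beta}(x)(1+o(1))+x^{-\alpha}\(\widetilde{A_n}(tx)-\widetilde{A_n}(t)\)(1+o(1)).
\]

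The core of the argument is then to evaluate the increment $\widetilde{A_n}(tx)-\widetilde{A_n}(t)$. Its leading piece is the regularly varying part of $\widetilde{A_n}$: for $\alpha\geq 1$ this is $\alpha t^{-1}\mu_n^*(t)\in\RV_{-1}$ (since $\mu(t)$ and each $\mu_i(t)$ stabilise, with a slowly varying correction at $\alpha=1$), while for $0<\alpha<1$ it is $\eta_{\alpha}\kappa_n\overline{F}(t)\in\RV_{-\alpha}$. Applying the elementary difference rule $C(tx)-C(t)=C(t)(x^{\delta}-1)(1+o(1))$ for $C\in\RV_{\delta}$ gives $\widetilde{A_n}(tx)-\widetilde{A_n}(t)\sim \alpha t^{-1}\mu_n^*(t)(x^{-1}-1)$, respectively $\eta_{\alpha}\kappa_n\overline{F}(t)(x^{-\alpha}-1)$. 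Recognizing $x^{-\alpha}(x^{-1}-1)=-H_{-\alpha,-1}(x)$ and $x^{-\alpha}(x^{-\alpha}-1)=-\alpha H_{-\alpha,-\alpha}(x)$ converts the correction into $-\alpha t^{-1}\mu_n^*(t)H_{-\alpha,-1}(x)$, respectively $-\alpha\eta_{\alpha}\kappa_n\overline{F}(t)H_{-\alpha,-\alpha}(x)$, which is exactly the second summand of the claimed $A_G^{(n)}$.

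The final step is a comparison of regular-variation indices to collapse $A(t)H_{-\alpha,\beta}(x)$ together with this correction into the single form $A_G^{(n)}(t)H_{-\alpha,\lambda}(x)$ with $\lambda=\max\{-1,-\alpha,\beta\}$. I would split into cases according to which of $\beta$ and $-1$ (for $\alpha\geq 1$), respectively $\beta$ and $-\alpha$ (for $0<\alpha<1$), is larger: off the boundary the dominant term dictates both $\lambda$ and the asymptotics of $A_G^{(n)}(t)=A(t)-\alpha t^{-1}\mu_n^*(t)$, respectively $A(t)-\alpha\eta_{\alpha}\kappa_n\overline{F}(t)$, with the subordinate term absorbed into $(1+o(1))$; on the boundary ($\beta=-1$ or $\beta=-\alpha$) the two $H$-functions coincide and factor out, so the difference is exact. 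The main obstacle I anticipate is precisely this bookkeeping at the boundary, together with checking that the Proposition \ref{pro:sum} remainder $o\(|A(t)|+\sum_{i=1}^n|\phi_i(t)-d_i|\)$ stays negligible relative to $A_G^{(n)}(t)\in\RV_{\lambda}$; since the $o(|A(t)|)$ part is automatically $o(A_G^{(n)}(t))$, this reduces to requiring $\rho_i\leq\lambda$ for each kernel, which always holds for the FGM and exponential kernels of the Remark (where $\rho_i\in\{-\alpha,-\infty\}$). Alternatively, one could shortcut the index bookkeeping by invoking the transfer lemma (Lemma \ref{lem:2rv}), which upgrades a ratio expansion directly to a $2\RV$ statement; this is the route implicit in the phrase ``in view of Proposition \ref{pro:sum}''.
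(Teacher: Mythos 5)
Your proposal follows essentially the same route as the paper's proof: the paper likewise writes $\overline{G}(tx)/\overline{G}(t)$ as the product of $\overline{F}(tx)/\overline{F}(t)$ with the ratio of $1+\widetilde{A_n}(\cdot)(1+o(1))$ factors, uses $\widetilde{A_n}\in\RV_{\widetilde{\lambda}}$ with $\widetilde{\lambda}=\max\{-1,-\alpha\}$ to linearize the increment as $\widetilde{A_n}(t)(x^{\widetilde{\lambda}}-1)(1+o(1))$, and then invokes the identities $x^{-\alpha}(x^{-1}-1)=-H_{-\alpha,-1}(x)$ and $x^{-\alpha}(x^{-\alpha}-1)=-\alpha H_{-\alpha,-\alpha}(x)$ to read off $A_G^{(n)}$. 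Your additional case analysis for merging the two $H$-terms when $\beta\neq-1$ (resp.\ $\beta\neq-\alpha$) versus the boundary case, and your check on the $o\(|A(t)|+\sum_i|\phi_i(t)-d_i|\)$ remainder, only make explicit bookkeeping that the paper leaves implicit.
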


\begin{proof}
According to the definition of $\widetilde{A_n}(\cdot)$, it is easy to check that $\widetilde{A_n}(\cdot)\in \RV_{\widetilde{\lambda}}$, where $\widetilde{\lambda}=\max\{-1,-\alpha\}$. Note that  $\overline{F}(\cdot)\in 2\RV_{-\alpha,\beta}$ with an auxiliary function $A(\cdot)$. For any $x>0$, as $t\rightarrow\infty$, we have that
\begin{align*}
\frac{\overline{G}(tx)}{\overline{G}(t)}&=\frac{\overline{G}(tx)}{\overline{F}(tx)}\frac{\overline{F}(tx)}{\overline{F}(t)}\frac{\overline{F}(t)}{\overline{G}(t)}\\
&=\frac{n\(1+\widetilde{A_n}(tx)(1+o(1))\)}{n\(1+\widetilde{A_n}(t)(1+o(1))\)}\(x^{-\alpha}+H_{-\alpha,\beta}(x)A(t)\(1+o(1)\)\)\\
&=x^{-\alpha}+H_{-\alpha,\beta}(x)A(t)\(1+o(1)\)+x^{-\alpha}(x^{\widetilde{\lambda}}-1)\widetilde{A_n}(t)\(1+o(1)\)\\
&=\left\{
\begin{array}
[c]{lll}%
x^{-\alpha}+\(H_{-\alpha,\beta}(x)A(t)-H_{-\alpha,-1}(x)\alpha t^{-1}\mu_n^*(t)\)\(1+o(1)\)+o\(\sum\limits_{i=1}^n|\phi_i(t)-d_i|\), &  &\alpha\geq1, \\ 
x^{-\alpha}+\(H_{-\alpha,\beta}(x)A(t)-H_{-\alpha,-\alpha}(x)\alpha\eta_{\alpha}\kappa_n\overline{F}(t)\)\(1+o(1)\)+o\(\sum\limits_{i=1}^n|\phi_i(t)-d_i|\), &  & 0<\alpha<1.
\end{array}
\right.
\end{align*}
Thus, we complete this proof.
\end{proof}

\begin{remark}
(1) If $\phi_i(x)=1-2F(x)$ for all $x\in D_{X_i},~ i=1,\dots,n$, Proposition \ref{pro:sum} and Corollary \ref{cor:3.1} reduce to Theorem 4.4 and Corollary 4.5 of \cite{mao2015risk}.

(2) If $\alpha\geq 1$ and $\beta<-1$, then $A(t)=o(\mu_n^*(t))$. If $\alpha<1$ and $\beta<-\alpha$, then $A(t)=o(\overline{F}(t))$. If $\beta>-\(1\wedge\alpha\)$, then $\mu_n^*(t)=o(A(t))$ and $\overline{F}(t)=o(A(t))$.
\end{remark}

Second, we are ready to show the second-order asymptotics of $\VaR_p(S_n)$ and $\CTE_p(S_n)$.

\begin{theorem}\label{the:3.1}
Under the conditions of Proposition \ref{pro:sum}, we have, as $p\uparrow1$,
\begin{align*}
    \frac{\VaR_p(S_n)}{F^{\leftarrow}(p)}=\left\{
\begin{array}
[c]{lll}%
n^{1/\alpha}\(1+\(\frac{\mu_n^*\(F^{\leftarrow}(p)\)}{n^{1/\alpha}F^{\leftarrow}(p)}+\frac{n^{\beta/\alpha}-1}{\alpha\beta}A\(F^{\leftarrow}(p)\)\)\(1+o(1)\)\)\\
\quad +o\(\sum\limits_{i=1}^n|\phi_i\(F^{\leftarrow}(p)\)-d_i|\), &  &\alpha\geq 1,\\ \\
n^{1/\alpha}\(1+\(\frac{\eta_{\alpha}\kappa_n}{\alpha n}\overline{F}\(F^{\leftarrow}(p)\)+\frac{n^{\beta/\alpha}-1}{\alpha\beta}A\(F^{\leftarrow}(p)\)\)\(1+o(1)\)\)\\
\quad+o\(\sum\limits_{i=1}^n|\phi_i\(F^{\leftarrow}(p)\)-d_i|\), &  & 0<\alpha<1.
\end{array}
\right.
\end{align*}
For $\alpha>1$, as $p\uparrow1$,
\begin{align*}
    \frac{\CTE_p(S_n)}{F^{\leftarrow}(p)}&=
\frac{\alpha n^{1/\alpha}}{\alpha-1}\(1+\zeta_{\alpha,\beta}^nA\(F^{\leftarrow}(p)\)\(1+o(1)\)\)+\frac{\mu_n^*\(F^{\leftarrow}(p)\)}{F^{\leftarrow}(p)}\(1+o(1)\)\nonumber\\
&\quad+o\(\sum\limits_{i=1}^n|\phi_i\(F^{\leftarrow}(p)\)-d_i|\),
\end{align*}
and
\begin{align}\label{zate}
\zeta_{\alpha,\beta}^n=\left\{
\begin{array}
[c]{lll}%
\frac{1}{\alpha\beta}\(\frac{n^{\beta/\alpha}(\alpha-1)}{\alpha-\beta-1}-1\),& & \beta<0,\\ \\
\alpha^{-2}\log n+\frac{1}{\alpha(\alpha-1)}, & & \beta=0.
\end{array}
\right.
\end{align}
 Clearly, the first-order asymptotics of $\VaR_p(S_n)$ and $\CTE_p(S_n)$ are $n^{1/\alpha}F^{\leftarrow}(p)$ and $\frac{\alpha n^{1/\alpha}}{\alpha-1}F^{\leftarrow}(p)$.
\end{theorem}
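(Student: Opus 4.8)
The plan is to treat the two measures by different devices: $\VaR_p(S_n)$ by inverting the tail expansion of Proposition~\ref{pro:sum}, and $\CTE_p(S_n)$ by combining a mean-excess representation with a second-order Karamata argument.

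For $\VaR_p(S_n)=G^{\leftarrow}(p)$, I would start from the identity $\overline{G}(G^{\leftarrow}(p))=1-p=\overline{F}(F^{\leftarrow}(p))$, valid up to the $(1+o(\cdot))$ corrections recorded in \eqref{eq:2.1}. Writing $\xi:=\VaR_p(S_n)/F^{\leftarrow}(p)$ and $t_p:=F^{\leftarrow}(p)$, Proposition~\ref{pro:sum} turns this identity into $\overline{F}(t_p\xi)/\overline{F}(t_p)=n^{-1}\bigl(1-\widetilde{A_n}(t_p\xi)(1+o(1))\bigr)$. The left-hand side I would expand by the $2\RV_{-\alpha,\beta}$ property of $\overline{F}$, namely $\overline{F}(t_p\xi)/\overline{F}(t_p)=\xi^{-\alpha}+H_{-\alpha,\beta}(\xi)A(t_p)(1+o(1))$. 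The leading balance $\xi^{-\alpha}=n^{-1}$ gives $\xi\to n^{1/\alpha}$; substituting $\xi=n^{1/\alpha}(1+\epsilon)$, Taylor-expanding $(1+\epsilon)^{-\alpha}$, evaluating $H_{-\alpha,\beta}$ and $\widetilde{A_n}$ at the leading value $n^{1/\alpha}t_p$ (so that $A(n^{1/\alpha}t_p)=n^{\beta/\alpha}A(t_p)(1+o(1))$ and $\overline{F}(n^{1/\alpha}t_p)=n^{-1}\overline{F}(t_p)(1+o(1))$), and solving the resulting linear equation for $\epsilon$ yields the two stated cases, the split $\alpha\ge1$ versus $0<\alpha<1$ coming directly from the two branches of $\widetilde{A_n}$ in \eqref{eq:th3.1}.

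For $\CTE_p(S_n)$ with $\alpha>1$ I would use $\CTE_p(S_n)=\VaR_p(S_n)+\frac{1}{1-p}\int_{\VaR_p(S_n)}^\infty\overline{G}(u)\,\d u$. Rather than feeding the combined auxiliary function of Corollary~\ref{cor:3.1} into a single Karamata estimate, I would substitute the split $\overline{G}(u)=n\overline{F}(u)\bigl(1+\widetilde{A_n}(u)(1+o(1))\bigr)$ and integrate the two pieces separately: the $n\overline{F}$ piece by a second-order Karamata theorem (producing the factor $1/(\alpha-1)$ and an $A(s_p)/(\alpha-\beta-1)$ correction), and the $n\overline{F}\,\widetilde{A_n}$ piece by an ordinary Karamata estimate, using $\mu_n^*(u)\to\mu_n^*(\infty)$ for $\alpha>1$. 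Writing $s_p:=\VaR_p(S_n)$ and using $\overline{G}(s_p)=1-p$ together with $\widetilde{A_n}(s_p)=\alpha s_p^{-1}\mu_n^*(s_p)$, the two contributions combine into $\int_{s_p}^\infty\overline{G}(u)\,\d u=\frac{1-p}{\alpha-1}\bigl(s_p-\mu_n^*(s_p)+\frac{s_pA(s_p)}{\alpha-\beta-1}\bigr)(1+o(1))$. Finally I would substitute the already-proved $\VaR$ asymptotics for $s_p$, use $A(s_p)=n^{\beta/\alpha}A(t_p)(1+o(1))$ and $\mu_n^*(s_p)=\mu_n^*(t_p)(1+o(1))$, and collect terms: the $\mu_n^*$ pieces assemble to exactly $\mu_n^*(t_p)$, while the $A$ pieces assemble into the coefficient $\zeta_{\alpha,\beta}^n$ of \eqref{zate} via the algebraic identity $\frac{n^{\beta/\alpha}-1}{\alpha\beta}+\frac{n^{\beta/\alpha}}{\alpha(\alpha-\beta-1)}=\zeta_{\alpha,\beta}^n$.

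The main obstacle is the bookkeeping in the $\CTE$ step. Because the auxiliary function $A_G^{(n)}$ is a genuine sum of two regularly varying terms of different indices ($\beta$ from $A$ and $-1$ from $t^{-1}\mu_n^*$), a single application of a second-order Karamata lemma keyed to $\lambda=\max\{-1,-\alpha,\beta\}$ would retain only the dominant term and silently drop the other, whereas the answer carries both; the split integration above is what keeps both second-order contributions alive simultaneously. Care is also needed to confirm that replacing $s_p$ by its leading value $n^{1/\alpha}t_p$ inside the already-second-order terms perturbs them only at order $o(1)$, and to treat the degenerate case $\beta=0$ through the stated logarithmic conventions $\frac{n^{\beta/\alpha}-1}{\alpha\beta}\to\alpha^{-2}\log n$ and $\zeta_{\alpha,\beta}^n\to\alpha^{-2}\log n$, which arise as limits of the $\beta\neq0$ formulas.
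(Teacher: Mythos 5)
Your proposal is correct, and on the $\CTE$ half it takes a genuinely different route from the paper. For $\VaR_p(S_n)$ the difference is mostly mechanical: the paper transfers second-order regular variation to the tail quantile function ($U_F\in 2\RV_{1/\alpha,\beta/\alpha}$ with auxiliary function $\alpha^{-2}A\circ U_F$) and evaluates $U_F$ at the ratio $\overline{G}/\overline{F}\to n$ supplied by Proposition \ref{pro:sum}, whereas you solve the implicit equation $\overline{F}(t_p\xi)/\overline{F}(t_p)=n^{-1}(1-\widetilde{A_n}(t_p\xi)(1+o(1)))$ by linearizing around $\xi=n^{1/\alpha}$; both rest on the same two ingredients (Proposition \ref{pro:sum} plus $2\RV$ of $\overline{F}$, with \eqref{eq:2.1} absorbing the quantile-inversion error) and produce identical expansions. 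For $\CTE_p(S_n)$, however, the paper applies its conditional-expectation lemma (Lemma \ref{lem:cr}) to $\overline{G}\in 2\RV_{-\alpha,\lambda}$ using the \emph{combined} auxiliary function $A_G^{(n)}=A-\alpha t^{-1}\mu_n^*$ of Corollary \ref{cor:3.1} and the single index $\lambda=\max\{-1,-\alpha,\beta\}$, then multiplies by the $\VaR$ expansion; you instead use the mean-excess representation and integrate the split $\overline{G}=n\overline{F}(1+\widetilde{A_n}(1+o(1)))$ term by term. Your intermediate formula $\int_{s_p}^\infty\overline{G}(u)\,\d u=\frac{1-p}{\alpha-1}\bigl(s_p-\mu_n^*(s_p)+\frac{s_pA(s_p)}{\alpha-\beta-1}\bigr)(1+o(1))$ checks out, as does the closing identity $\frac{n^{\beta/\alpha}-1}{\alpha\beta}+\frac{n^{\beta/\alpha}}{\alpha(\alpha-\beta-1)}=\zeta_{\alpha,\beta}^n$, so the argument is sound. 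Moreover, the obstacle you flag is genuine: a single second-order Karamata step keyed to $\lambda$ attaches the constant $1/(\alpha(\alpha-\lambda-1))$ to \emph{both} summands of $A_G^{(n)}$, which is the wrong constant for whichever of $A\in\RV_{\beta}$ and $t^{-1}\mu_n^*\in\RV_{-1}$ is subdominant. In the paper's proof this is harmless only because the mis-weighted term is then $o$ of the dominant one, so the displayed two-term answer remains asymptotically valid in each regime $\beta>-1$, $\beta=-1$, $\beta<-1$; but the fact that the exact coefficients ($1$ on $\mu_n^*(F^{\leftarrow}(p))/F^{\leftarrow}(p)$ and $\zeta_{\alpha,\beta}^n$ on $A$) hold simultaneously is really certified only by a term-by-term computation such as yours. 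In short: the paper's route is shorter and reuses a lemma needed elsewhere (e.g., in Theorems \ref{the:3.2} and \ref{the:4.1}), while your route is more transparent about why both second-order contributions survive with those precise constants.
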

\begin{proof}
    Since $\overline{F}(\cdot)\in 2\RV_{-\alpha,\beta}$ with an auxiliary function $A(\cdot)$, by Theorem 2.3.9 of \cite{de2006extreme}, one can easily check that $U_F(\cdot)\in 2\RV_{1/\alpha,\beta/\alpha}$ with an auxiliary function $\alpha^{-2}A\circ U_F(\cdot)$. Let $t=G^{\leftarrow}(p)$. If $p\uparrow 1$, then $t\rightarrow\infty$. By the relation \eqref{eq:2.1}, Lemma 2.1 of \cite{mao2015risk} and Proposition \ref{pro:sum}, we get that
\begin{align*}
\frac{\VaR_p(S_n)}{F^{\leftarrow}(p)}&=\frac{G^{\leftarrow}(p)}{F^{\leftarrow}(p)}=\frac{U_F\(1/\overline{F}(t)\)}{U_F\(1/\overline{G}(t)\)}+o\(A(t)\)\\
&=\(\frac{\overline{G}(t)}{\overline{F}(t)}\)^{1/\alpha}\(1+\frac{\(\frac{\overline{G}(t)}{\overline{F}(t)}\)^{\beta/\alpha}-1}{\beta/\alpha}\alpha^{-2}A\circ U_F\(\frac{1}{\overline{G}(t)}\)\)\\
&=\(n\(1+\widetilde{A_n}(t)(1+o(1))\)\)^{1/\alpha}\(1+\frac{n^{\beta/\alpha}-1}{\alpha\beta}A\circ U_F\(\frac{1}{\overline{G}(t)}\)\(1+o(1)\)\),
\end{align*}
 where we use the transformation $t\mapsto G^{\leftarrow}(p)$ and  {Taylor's expansion}. Note that $\widetilde{A_n}\(G^{\leftarrow}(p)\)\sim\widetilde{A_n}\(n^{1/\alpha}F^{\leftarrow}(p)\)\sim n^{\widetilde{\lambda}/\alpha}\widetilde{A_n}\(F^{\leftarrow}(p)\)$ with $\widetilde{\lambda}=\max\{-1,-\alpha\}$. As $p\uparrow 1$, we have that
\begin{align*}
\frac{\VaR_p(S_n)}{F^{\leftarrow}(p)}&=n^{1/\alpha}\(1+\frac{1}{\alpha  }\widetilde{A_n}\(G^{\leftarrow}(p)\)\(1+o(1)\)+\frac{n^{\beta/\alpha}-1}{\alpha\beta}A\(F^{\leftarrow}(p)\)\(1+o(1)\)\)\\
&=n^{1/\alpha}\(1+\frac{1}{\alpha }n^{\widetilde{\lambda}/\alpha}\widetilde{A_n}\(F^{\leftarrow}(p)\)\(1+o(1)\)+\frac{n^{\beta/\alpha}-1}{\alpha\beta}A\(F^{\leftarrow}(p)\)\(1+o(1)\)\)\\
&=\left\{
\begin{array}
[c]{lll}%
n^{1/\alpha}\(1+\(\frac{\mu_n^*\(F^{\leftarrow}(p)\)}{n^{1/\alpha}F^{\leftarrow}(p)}+\frac{n^{\beta/\alpha}-1}{\alpha\beta}A\(F^{\leftarrow}(p)\)\)\(1+o(1)\)\)\\
\quad+o\(\sum\limits_{i=1}^n|\phi_i\(F^{\leftarrow}(p)\)-d_i|\), &  &\alpha\geq 1,\\
n^{1/\alpha}\(1+\(\frac{\eta_{\alpha}\kappa_n}{\alpha n}\overline{F}\(F^{\leftarrow}(p)\)+\frac{n^{\beta/\alpha}-1}{\alpha\beta}A\(F^{\leftarrow}(p)\)\)\(1+o(1)\)\)\\
\quad+o\(\sum\limits_{i=1}^n|\phi_i\(F^{\leftarrow}(p)\)-d_i|\),&  & 0<\alpha<1.
\end{array}
\right.
\end{align*}
Furthermore, by the definition of $\CTE_p(S_n)$ and Lemma \ref{lem:cr} in the Appendix, for $\alpha>1$, we have
\begin{align*}
\frac{\CTE_p(S_n)}{F^{\leftarrow}(p)}&=\frac{\E\left[S_n\big|S_n>\VaR_p(S_n)\right]}{F^{\leftarrow}(p)}\\
&=\frac{\alpha}{\alpha-1}\(1+\frac{1}{\alpha(\alpha-\lambda-1)}A_G^{(n)}\(G^{\leftarrow}(p)\)(1+o(1))\)\frac{\VaR_p(S_n)}{F^{\leftarrow}(p)}\\
&=\frac{\alpha n^{1/\alpha}}{\alpha-1}\(1+\frac{n^{\lambda/\alpha}}{\alpha(\alpha-\lambda-1)}A_G^{(n)}\(F^{\leftarrow}(p)\)\(1+o(1)\)\)\\
&\quad \cdot \(1+\(\frac{\mu_n^*\(F^{\leftarrow}(p)\)}{n^{1/\alpha}F^{\leftarrow}(p)}+\frac{n^{\beta/\alpha}-1}{\alpha\beta}A\(F^{\leftarrow}(p)\)\)\(1+o(1)\)\)+o\(\sum\limits_{i=1}^n|\phi_i\(F^{\leftarrow}(p)\)-d_i|\)\\
&=\frac{\alpha n^{1/\alpha}}{\alpha-1}\(1+\frac{1}{\alpha\beta}\(\frac{n^{\beta/\alpha}(\alpha-1)}{\alpha-\beta-1}-1\)A\(F^{\leftarrow}(p)\)\(1+o(1)\)\)+\frac{\mu_n^*\(F^{\leftarrow}(p)\)}{F^{\leftarrow}(p)}\(1+o(1)\)\\
&\quad+o\(\sum\limits_{i=1}^n|\phi_i\(F^{\leftarrow}(p)\)-d_i|\).
\end{align*}
This completes the proof.
\end{proof}

\begin{remark}
    For $\VaR_p(S_n)$ in Theorem \ref{the:3.1}, our results cover Theorem 4.6 of \cite{mao2015risk} and provide a simpler proof. If $a_{ij}=0$ for all $1\leq i\neq j\leq n$, the $n$-dimensional Sarmanov distribution reduces to the independence structure. In this case, $\CTE_p(S_n)$ of Theorem \ref{the:3.1} is consistent with Theorem 3.1 of \cite{mao2012second}. 
\end{remark}

The following example is used to illustrate Theorem \ref{the:3.1} under the Pareto distribution with different values of the tail parameter $\alpha$.

\begin{example}\label{ex:pareto}
(Pareto distribution) A Pareto distribution function $F$ satisfies that
$$F(x)=1-\(\frac{k}{x+k}\)^\alpha,~~~~ x>0,$$
 with  parameters $\alpha,k>0$. It can be described that $\overline{F}(\cdot)\in 2\RV_{-\alpha,-1}$  with an auxiliary function $A(t)=\frac{\alpha k}{t}$ and $F^{\leftarrow}(p)=k\((1-p)^{-1/\alpha}-1\)$. If $\alpha>1$, we have $\mu=\frac{k}{\alpha-1}$ and $\mu(t)=\frac{k}{\alpha-1}-\frac{\alpha t+k}{\alpha-1}\overline{F}(t)$. Let $X_1$ and $X_2$ have an identical Pareto distribution $F$. Suppose that the random vector $(X_1, X_2)$ follows a Sarmanov distribution in \eqref{eq:sarmanov} with $\phi_i(\cdot)=1-2F(\cdot), ~i=1,2$. Clearly, $d_i=-1,~\rho_i=-\alpha$ and $\mu_i(t)=\frac{k}{2\alpha-1}-\frac{2\alpha t+k}{2\alpha-1}\(\overline{F}(t)\)^2-\mu(t)$, $i=1,2$. Then
\begin{align*}
    \VaR_p(S_2)=\left\{
\begin{array}
[c]{lll}%
2^{1/\alpha}F^{\leftarrow}(p)+\mu\(F^{\leftarrow}(p)\)-a_{1 2}\mu_1\(F^{\leftarrow}(p)\)+k\(2^{1/\alpha}-1\),
& &\alpha\geq 1,\\ \\
2^{1/\alpha}F^{\leftarrow}(p)\(1+\frac{\eta_{\alpha}(1+a_{1 2})}{2\alpha}(1-p)\)+k\(2^{1/\alpha}-1\), &  & 0<\alpha<1,
\end{array}
\right.
\end{align*}
and 
\begin{align*}
    \CTE_p(S_2)=
\frac{2^{1/\alpha}\alpha}{\alpha-1}\(F^{\leftarrow}(p)+k\)-k+\mu\(F^{\leftarrow}(p)\)-a_{12}\mu_{1}\(F^{\leftarrow}(p)\), \quad \alpha>1.
\end{align*}
\end{example}

\begin{table}[htbp]%
  \fontsize{6}{2}\centering
  \setlength{\tabcolsep}{0.1mm}{
  \begin{tabular}{cccc ccc cccc}%
  \hline\hline\noalign{\smallskip}
  \textbf{$\alpha$} & $\VaR_{p}(S_2)_{MC}$&$\VaR_{p}(S_2)_{1st}$& $\VaR_{p}(S_2)_{2nd}$ &$\CTE_{p}(S_2)_{MC}$&$\CTE_{p}(S_2)_{1st}$&$\CTE_{p}(S_2)_{2nd}$&$\frac{\VaR_{p}(S_2)_{1st}}{\VaR_{p}(S_2)_{MC}}$ &$\frac{\VaR_{p}(S_2)_{2nd}}{\VaR_{p}(S_2)_{MC}}$&$\frac{\CTE_{p}(S_2)_{1st}}{\CTE_{p}(S_2)_{MC}}$&$\frac{\CTE_{p}(S_2)_{2nd}}{\CTE_{p}(S_2)_{MC}}$ \\
\noalign{\smallskip}\hline\noalign{\smallskip}     
  1.1& 126.8065& 121.5690& 126.1943&1561.6015& 1337.2588 &1360.6628& 0.9587& 0.9952 &0.8563& 0.8713\\
1.5&  35.3132&  32.6000&  34.9843&  102.9805& 97.8000 & 103.3591& 0.9232& 0.9907 &0.9497& 1.0037\\
2.0&  14.4215&  12.7262&  14.1893& 28.5141&  25.4524&   28.3298& 0.8824& 0.9839& 0.8926& 0.9935\\
2.5&   8.2435&   7.0060&   8.0581 &13.8310&  11.6767&   13.6085& 0.8499& 0.9775& 0.8442& 0.9839\\
3.0&   5.5535 &  4.5847 &  5.4053& 8.5145&   6.8771&    8.3277& 0.8255& 0.9733& 0.8077& 0.9781\\
4.0&  3.2479&   2.5708&   3.1404&4.5244&    3.4277&    4.3938& 0.7915& 0.9669& 0.7576& 0.9711\\
5.0&  2.2591 &  1.7378&   2.1739&2.9994&    2.1722&    2.8956& 0.7692 &0.9623& 0.7242& 0.9654 \\
 \noalign{\smallskip}\hline 
  \end{tabular}}
\caption{Simulation values (MC) versus first-order (1st) and second-order (2nd) asymptotic values of $\VaR_p(S_2)$ and $\CTE_p(S_2)$ with various values of $\alpha$. We use the Pareto distribution with $k=1$ and $p=0.99$, $a_{12}=0.5$.}
\label{tab:2}
\end{table}

 In Table \ref{tab:2}, we find that the second-order asymptotics of $\VaR$ and $\CTE$ are closer to the simulation values than the first-order asymptotics.
 Specifically, the asymptotic values of $\CTE$ are comparatively less accurate when $\alpha$ approaches 1, as $\CTE$ may not exist under extremely heavy tails.
Moreover, the accuracy of both asymptotic methods increases as the tail becomes heavier (i.e., smaller $\alpha$). 
 {For a Pareto distribution, the auxiliary function is $A(t) = \frac{\alpha k}{t}$, which is proportional to $\alpha$ for a given threshold $t$. Hence, a larger $\alpha$ amplifies the deviation from the first-order limit for a given threshold $t$, slowing convergence to the extreme value limit.}  
Overall, the results confirm the efficiency of our asymptotics in capturing tail behaviors under extreme scenarios.


The following theorem obtains second-order asymptotics of $\MES$ and $\SES$ under an $n$-dimensional Sarmanov distribution. These results are important in a wide range of systemic risk.

\begin{theorem}\label{the:3.2}
Let $X_1,\ldots,X_n$ be nonnegative random variables with a common marginal distribution $F$ satisfying $\overline{F}(\cdot)\in 2\RV_{-\alpha,\beta}$ with $\alpha>1$, $\beta\leq0$ and an auxiliary function $A(\cdot)$. Suppose that $(X_1,\ldots,X_n)$ follows an $n$-dimensional Sarmanov distribution given by \eqref{eq:sarmanov} and $\lim\limits_{x_i\rightarrow\infty}\phi_i(x_i)=d_i\in \R, \phi_i(\cdot)-d_i\in \RV_{\rho_i}$ with $\rho_i\leq 0$  for each $i=1,\ldots,n$.  Then as $p\uparrow1$, we get that
\begin{align*}
    \frac{\MES_{p,m}(S_n)}{F^{\leftarrow}(p)}=\frac{\alpha n^{1/\alpha} }{(\alpha-1)n}\(1+\zeta_{\alpha,\beta}^nA\(F^\leftarrow(p)\)(1+o(1))\)+o\(\frac{1}{F^{\leftarrow}(p)}+\sum\limits_{i=1}^n\left|\phi_i\(F^{\leftarrow}(p)\)-d_i\right|\),
\end{align*}
where $\zeta_{\alpha,\beta}^n$ is defined in \eqref{zate}
and
\begin{align*}
    \frac{\SES_{p,m}(S_n)}{F^\leftarrow(p)}=\frac{\MES_{p,m}(S_n)}{F^\leftarrow(p)}-\frac{1}{n}.
\end{align*}
Clearly, the first-order asymptotics of $\MES_p(S_n)$ and $\SES_p(S_n)$ are $\frac{\alpha n^{1/\alpha}F^\leftarrow(p)}{(\alpha-1)n}$ and $\frac{\alpha n^{1/\alpha}F^\leftarrow(p)}{(\alpha-1)n}-\frac{F^\leftarrow(p)}{n}$.
\end{theorem}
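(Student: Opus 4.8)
The plan is to work straight from the definition $\MES_{p,m}(S_n)=\E[X_m\one_{\{S_n>t_p\}}]/\overline{G}(t_p)$ with $t_p:=\VaR_p(S_n)$, so that the whole problem reduces to a second-order expansion of the weighted tail integral $N_m(t):=\E[X_m\one_{\{S_n>t\}}]$, after which I divide by $\overline{G}(t_p)$ (Proposition \ref{pro:sum}) and substitute the expansion of $t_p$ from Theorem \ref{the:3.1}. Expanding the Sarmanov density exactly as in the proof of Proposition \ref{pro:sum} — writing $\phi_i=c_i(1-(1-\phi_i/c_i))$ and introducing the tilted laws $\widetilde{F_i}$ — splits $N_m(t)$ into one genuinely iid piece $\int_{\Omega_t}x_m\prod_k\d F(x_k)=\E[X_m^*\one_{\{S_n^*>t\}}]$ plus cross terms carrying the factors $\phi_i(x_i)\phi_j(x_j)$, with $S_n^*=\sum_k X_k^*$.

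For the iid piece I would use the single-big-jump decomposition: the dominant contribution comes from $X_m^*$ itself being large, giving $\int_t^\infty x\,\d F(x)=t\overline{F}(t)+\int_t^\infty\overline{F}(s)\,\d s\sim\frac{\alpha}{\alpha-1}t\overline{F}(t)$ by Karamata, while the events where another coordinate is the big one contribute only $\E[X_m^*]\sum_{k\ne m}\overline{F}(t)=(n-1)\mu\,\overline{F}(t)=O(\overline{F}(t))$. Since $\overline{F}\in 2\RV_{-\alpha,\beta}$, the Karamata step upgrades to a second-order estimate (the content of the tail-moment lemmas in the Appendix and of \cite{mao2012second,hua2011second}), which is what produces the coefficient $\zeta_{\alpha,\beta}^n$ after combining with the auxiliary function. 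The cross terms are lower order: when the big coordinate is some $x_k$ with $k\notin\{i,j\}$ the leading contribution carries a factor $\E[\phi_i(X_i)]=0$ and vanishes, and when $k\in\{i,j\}$ one is left with $\int_t^\infty\phi_j\,\d F\sim d_j\overline{F}(t)$, so every cross term is $O(\overline{F}(t)+\sum_i|\phi_i(t)-d_i|)$. Assembling $N_m(t_p)/\overline{G}(t_p)$ with $\overline{G}(t_p)\sim n\overline{F}(t_p)$ and $t_p/F^\leftarrow(p)=n^{1/\alpha}(1+\cdots)$, the leading terms collapse to $\frac{\alpha n^{1/\alpha}}{(\alpha-1)n}F^\leftarrow(p)$ and the surviving $2\RV$-order term is $\zeta_{\alpha,\beta}^nA(F^\leftarrow(p))$ — the same coefficient as in $\CTE_p(S_n)$, which is forced by the identity $\sum_{m=1}^n\MES_{p,m}(S_n)=\E[S_n\mid S_n>t_p]=\CTE_p(S_n)$, so each highlighted $2\RV$ term must be $1/n$ of the corresponding $\CTE_p$ term. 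The remaining $(n-1)\mu\,\overline{F}$ piece, the index-dependent $\mu_n^*$-type pieces inherited from $\VaR_p(S_n)$, and the Sarmanov cross terms are all of order $t^{-1}$ and $|\phi_i(t)-d_i|$, and are precisely what is relegated to the stated remainder.

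For $\SES_{p,m}$ I would subtract: $\MES_{p,m}(S_n)-\SES_{p,m}(S_n)=\E[\min(X_m,F^\leftarrow(p))\mid S_n>t_p]$, so it suffices to expand $D_m(t):=\E[\min(X_m,a)\one_{\{S_n>t\}}]$ with $a:=F^\leftarrow(p)$. Writing $\min(X_m,a)=a\one_{\{X_m>a\}}+X_m\one_{\{X_m\le a\}}$, the second summand is $\E[X_m\one_{\{X_m\le a\}}]\,\p(\sum_{k\ne m}X_k>t)\sim\mu(n-1)\overline{F}(t)=O(\overline{F}(t))$, negligible against $a\,\overline{F}(t)$; the first gives $a\,\p(X_m>a,S_n>t)\sim a\,\overline{F}(t)$, because $\p(X_m>a,S_n>t)=\overline{G}(t)-\p(X_m\le a,S_n>t)\sim n\overline{F}(t)-(n-1)\overline{F}(t)=\overline{F}(t)$. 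Dividing by $\overline{G}(t_p)\sim n\overline{F}(t_p)$ yields $\MES_{p,m}-\SES_{p,m}\sim a/n=F^\leftarrow(p)/n$, which is exactly the asserted $\SES_{p,m}(S_n)/F^\leftarrow(p)=\MES_{p,m}(S_n)/F^\leftarrow(p)-1/n$.

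The main obstacle I anticipate is the second-order control of the iid weighted tail integral $\E[X_m^*\one_{\{S_n^*>t\}}]$: isolating the $\frac{\alpha}{\alpha-1}t\overline{F}(t)$ term together with its genuine $2\RV$ correction (not merely its first-order value) requires uniform second-order Karamata and convolution estimates, and one must verify carefully that every $\mu$- and $\phi_i$-dependent contribution falls inside the claimed remainder while the retained coefficient reproduces $\zeta_{\alpha,\beta}^n$ consistently with the $\CTE_p(S_n)$ computation of Theorem \ref{the:3.1}.
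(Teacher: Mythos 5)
Your overall architecture --- expand the weighted tail integral, divide by $\overline{G}(t_p)$ via Proposition \ref{pro:sum}, substitute the $\VaR_p(S_n)$ expansion from Theorem \ref{the:3.1}, and obtain $\SES$ by subtracting a thresholded version --- is the same skeleton as the paper's proof, which runs through the conditional-expectation expansions of Lemmas \ref{lem:csum} and \ref{lem:5.2}. The genuine gap is your treatment of the terms of relative order $t^{-1}$. You assert that the $(n-1)\mu\overline{F}$ piece, the $\mu_n^*$-type pieces inherited from $\VaR_p(S_n)$, and the Sarmanov cross terms are ``relegated to the stated remainder''; but the stated remainder is $o(t^{-1})$, and a term of exact order $t^{-1}$ cannot be absorbed into $o(t^{-1})$. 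The theorem's form --- only the $A$-term with coefficient $\zeta^n_{\alpha,\beta}$ survives --- can hold only because of an exact cancellation that your sketch never verifies: the paper's Lemma \ref{lem:csum} pins down the expansion $\E[X_m\mid S_n>t]=\frac{\alpha t}{(\alpha-1)n}\bigl(1+\widetilde{B}(t)(1+o(1))\bigr)+\cdots$ with the precise coefficient $\widetilde{B}(t)=\frac{1}{\alpha(\alpha-\beta-1)}A(t)-\frac{\mu_n^*(t)}{t}$, and in the proof of Theorem \ref{the:3.2} the $-\mu_n^*/t$ part of $\widetilde{B}(\VaR_p(S_n))$ cancels exactly against the $+\mu_n^*(F^{\leftarrow}(p))/(n^{1/\alpha}F^{\leftarrow}(p))$ correction carried by $\VaR_p(S_n)$. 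Obtaining that coefficient is the hard part (the $Q_1$/$Q_2$ split, the Sarmanov tilting, Lemma \ref{lem:beta}, and a uniformity-in-$z$ issue with a boundary layer near $z=0$); your single-big-jump sketch even omits one of the $t^{-1}$-order contributions, namely the ``shift'' effect that the big coordinate need only exceed $t-\sum_{k\neq m}X_k$, which enters at the same order as the $(n-1)\mu\overline{F}(t)$ term you do list. Whether these order-$t^{-1}$ contributions cancel or survive is the entire content of the second-order statement; declaring them negligible assumes the conclusion.

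Your fallback argument via $\sum_{m}\MES_{p,m}(S_n)=\CTE_p(S_n)$ does not close this gap; if anything it exposes it. First, the identity constrains only the sum over $m$, so it can determine the individual coefficients only under exchangeability, whereas the Sarmanov model here allows asymmetric kernels $\phi_i$ and coefficients $a_{ij}$. Second, and more seriously, Theorem \ref{the:3.1}'s expansion of $\CTE_p(S_n)$ \emph{retains} the term $\mu_n^*(F^{\leftarrow}(p))/F^{\leftarrow}(p)$, which is exactly of order $t^{-1}$; under exchangeability the identity forces each $\MES_{p,m}$ to carry the term $\mu_n^*(F^{\leftarrow}(p))/(nF^{\leftarrow}(p))$ of that same order. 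This directly contradicts your step of pushing all $t^{-1}$-order terms into an $o(t^{-1})$ remainder: the consistency between Theorems \ref{the:3.1} and \ref{the:3.2} (where $\CTE$ keeps its $\mu_n^*$ term while $\MES$ is claimed to lose it) is precisely a statement about delicate $t^{-1}$-order bookkeeping, carried in the paper by the claimed cancellation inside Lemma \ref{lem:csum}, and it cannot be waved through by an appeal to the identity. The same issue recurs for $\SES$: your estimate $\E[\min(X_m,F^{\leftarrow}(p))\mid S_n>t_p]\sim F^{\leftarrow}(p)/n$ is a first-order computation, while the displayed identity $\SES_{p,m}/F^{\leftarrow}(p)=\MES_{p,m}/F^{\leftarrow}(p)-1/n$ is asserted to the same precision as the $\MES$ expansion; the paper needs Lemma \ref{lem:5.2}, with the same second-order machinery, to justify it.
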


\begin{proof}
  Here we require Lemma \ref{lem:csum} in the Appendix and Theorem \ref{the:3.1}. Define $\widetilde{B}(t)=\frac{1}{\alpha(\alpha-\beta-1)}A(t)-\frac{\mu_n^*(t)}{t}$. We have that
   $\widetilde{B}(\cdot)\in \RV_\kappa$ with $\kappa=\max\{-1,\beta\}$. It follows that, as $p\uparrow 1$,
\begin{align*}
\frac{\MES_{p,m}(S_n)}{F^{\leftarrow}(p)}&=\frac{\E\left[X_m\big|S_n>\VaR_p(S_n)\right]}{F^{\leftarrow}(p)}\\
&=\frac{\alpha}{(\alpha-1)n}\(1+\widetilde{B}\(\VaR_p(S_n)\)(1+o(1))\)\frac{\VaR_p(S_n)}{F^{\leftarrow}(p)}+o\(\sum\limits_{i=1}^n|\phi_i\(F^{\leftarrow}(p)\)-d_i|\)\nonumber\\
&=\frac{\alpha n^{1/\alpha}}{(\alpha-1)n}\(1+n^{\kappa/\alpha}\widetilde{B}\(F^\leftarrow(p)\)(1+o(1)) \)\nonumber\\
&\quad \cdot \(1+\(\frac{\mu_n^*\(F^{\leftarrow}(p)\)}{n^{1/\alpha}F^{\leftarrow}(p)}+\frac{n^{\beta/\alpha}-1}{\alpha\beta}A\(F^{\leftarrow}(p)\)\)\(1+o(1)\)\)+o\(\sum\limits_{i=1}^n|\phi_i\(F^{\leftarrow}(p)\)-d_i|\)\\
&=\frac{\alpha n^{1/\alpha}}{(\alpha-1)n}\(1+\frac{1}{\alpha\beta}\(\frac{n^{\beta/\alpha}(\alpha-1)}{\alpha-\beta-1}-1\)A\(F^\leftarrow(p)\)\)(1+o(1))\nonumber\\
&\quad+o\(\frac{1}{F^{\leftarrow}(p)}+\sum\limits_{i=1}^n|\phi_i\(F^{\leftarrow}(p)\)-d_i|\).
\end{align*}
Using Lemma \ref{lem:5.2} in the Appendix and Theorem \ref{the:3.1}, we conclude that
\begin{align*}
\frac{\SES_{p,m}(S_n)}{F^\leftarrow(p)}=\frac{\E\left[\(X_m-\VaR_p(X_m)\)_+\big|S_n>\VaR_p(S_n)\right]}{F^\leftarrow(p)}=\frac{\MES_{p,m}(S_n)}{F^\leftarrow(p)}-\frac{1}{n}.
\end{align*}
This completes the proof for $\MES_{p,m}(S_n)$ and $\SES_{p,m}(S_n)$.
\end{proof}

Lastly,  Example \ref{ex:burr} is used to explain Theorem \ref{the:3.2} with a Burr distribution. 

\begin{example}\label{ex:burr}
(Burr distribution) A Burr distribution function $F$ satisfies that
$$F(x)=1-\(1+x^{-\beta}\)^{\alpha/\beta},~~~~ x>1,$$
 with  parameters $\alpha>1$ and $\beta<0$. It is easy to check that $\overline{F}(\cdot)\in 2\RV_{-\alpha,\beta}$  with an auxiliary function $A(t)=\alpha t^{\beta}$, $\mu=-\frac{\alpha}{\beta} B\(\frac{1-\alpha}{\beta},\frac{\beta-1}{\beta}\)$ and $F^{\leftarrow}(p)=\((1-p)^{\beta/\alpha}-1\)^{-1/\beta}$, where $B\(\cdot,\cdot\)$ represents the Beta function. Let $X_1$ and $X_2$ have an identical Burr distribution $F$. Suppose that the random vector $(X_1,X_2)$ follows a Sarmanov distribution in \eqref{eq:sarmanov} with $\phi_i(\cdot)=1-2F(\cdot)$. Clearly, $d_i=-1, ~i=1,2$. Then,  
\begin{align*}
    \MES_{p,1}(S_2)=
\frac{2^{1/\alpha}\alpha F^\leftarrow(p)}{2(\alpha-1)}\(1+\frac{1}{\beta}\(\frac{2^{\beta/\alpha}(\alpha-1)}{\alpha-\beta-1}-1\)\(F^\leftarrow(p)\)^{\beta}\),
\end{align*}
 and 
\begin{align*}
    \SES_{p,1}(S_2)=
\frac{2^{1/\alpha}\alpha F^\leftarrow(p)}{2(\alpha-1)}\(1+\frac{1}{\beta}\(\frac{2^{\beta/\alpha}(\alpha-1)}{\alpha-\beta-1}-1\)\(F^\leftarrow(p)\)^{\beta}\)-\frac{F^\leftarrow(p)}{2}.
\end{align*}
It is easy to see that the first-order asymptotics of $\MES_{p,1}(S_2)$ and $\SES_{p,1}(S_2)$ are $\frac{2^{1/\alpha}\alpha F^\leftarrow(p)}{2(\alpha-1)}$ and $\frac{2^{1/\alpha}\alpha F^\leftarrow(p) }{2(\alpha-1)}-\frac{F^\leftarrow(p)}{2}$. By Figure \ref{f:1}, the second-order asymptotics of $\MES_p(S_2)$ and $\SES_p(S_2)$ are much closer to the simulation values than the first-order asymptotics for $p\in[0.95,1)$.
\begin{figure}[htbp]%
  \begin{minipage}[t]{0.47\textwidth}
    \centering
    \includegraphics[width=\textwidth]{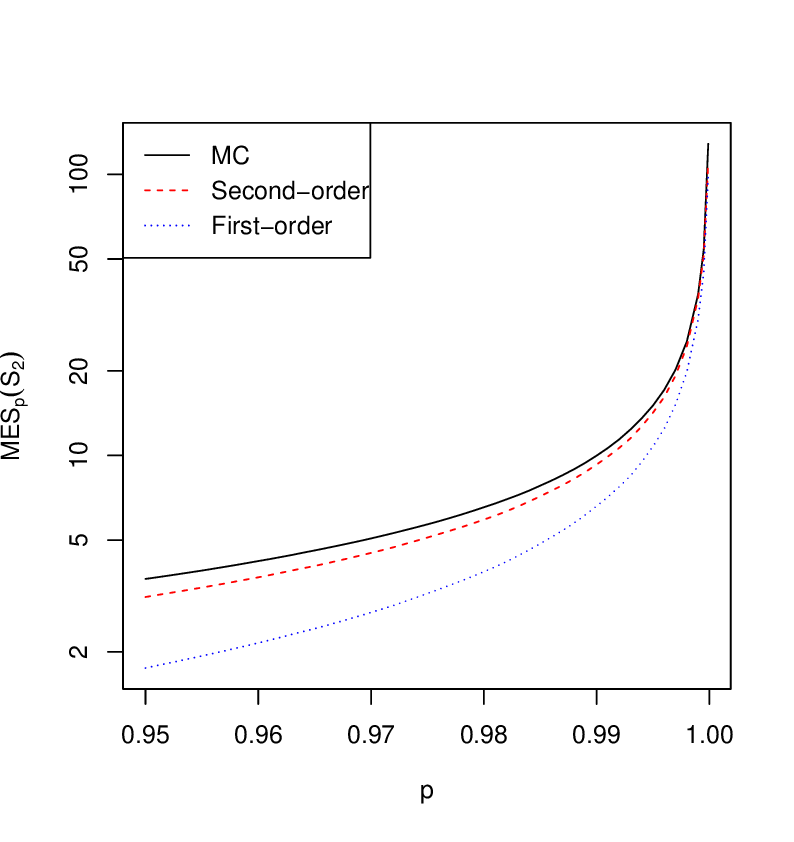}
  \end{minipage}%
  \begin{minipage}[t]{0.47\textwidth}
    \centering
    \includegraphics[width=\textwidth]{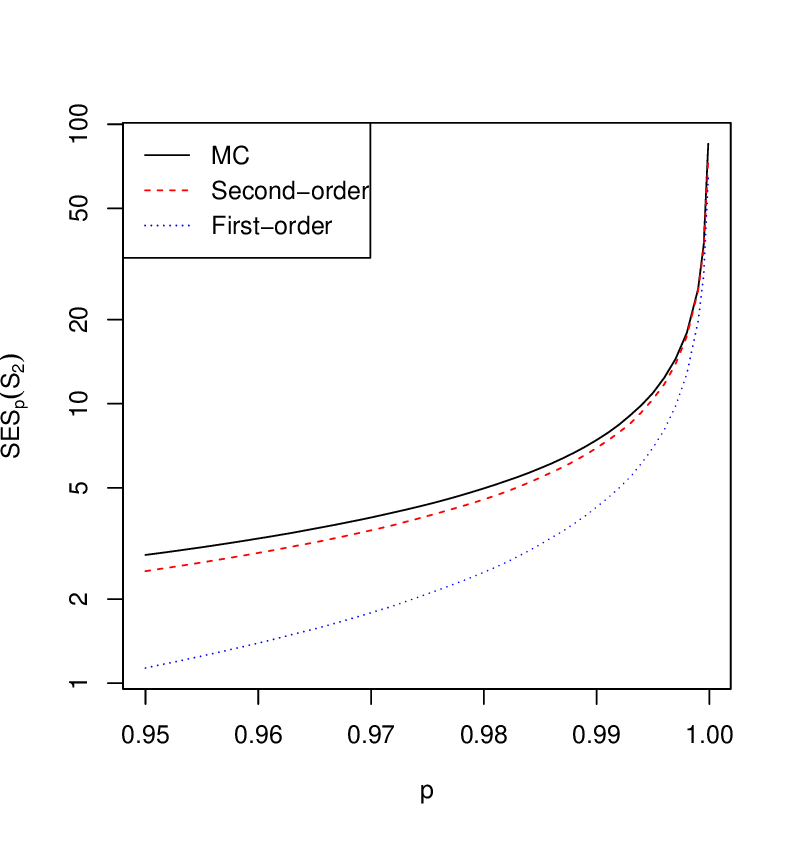}
  \end{minipage}%
 \caption{Simulation values (MC) versus the first-order and second-order asymptotic values of $\MES_{p,1}(S_2)$ for the left panel and $\SES_{p,1}(S_2)$ for the right panel. We use the Burr distribution with $\alpha=2$, $\beta=-0.5$ and $a_{12}=0.5$. To save space in the figures, we use the notation $\MES_{p}(S_2)$ and $\SES_{p}(S_2)$; a same convention is used for the figures below.}\label{f:1}
\end{figure}
\end{example}

\setcounter{equation}{0}
\section{Second-order asymptotics of expectile-based systemic risk measures}\label{expectile}

In this section, we consider the second-order asymptotics of expectile-based systemic risk measures under multivariate Sarmanov distributions. Firstly, we establish the second-order asymptotics of the expectile using a method different from those in the literature.

\begin{proposition}\label{pro:e}
Let $\overline{F}(\cdot)\in 2\RV_{-\alpha,\beta}$ with $\alpha>1,~\beta\leq 0$ and an auxiliary function $A(\cdot)$. Then we have that as $p\uparrow1$,
\begin{align*}
\frac{\e_p(X)}{F^{\leftarrow}(p)}=\(\alpha-1\)^{-1/\alpha}\(1+\xi_{\alpha,\beta}A(F^{\leftarrow}(p))\(1+o(1)\)\)+\frac{\mu}{\alpha F^{\leftarrow}(p)}\(1+o(1)\),
\end{align*}
where
\begin{align}\label{eq:3.1}
\xi_{\alpha,\beta}=\left\{
\begin{array}
[c]{lll}%
\frac{1}{\alpha\beta}\(\frac{(\alpha-1)^{1-\beta/\alpha}}{\alpha-\beta-1}-1\), & & \beta<0,\\ \\
-\alpha^{-2}\log (\alpha-1)+\frac{1}{\alpha(\alpha-1)}, & & \beta=0.
\end{array}
\right.
\end{align}
\end{proposition}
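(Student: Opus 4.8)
The plan is to reduce everything to the first-order optimality condition \eqref{eq:1.1} and then feed in the $2\RV$ structure of $\overline{F}$ through a second-order Karamata expansion of the stop-loss transform. Writing $e:=\e_p(X)$ and $\Pi(e):=\E[(X-e)_+]=\int_e^\infty\overline{F}(x)\,\d x$, equation \eqref{eq:1.1} reads $\frac{2p-1}{1-p}\Pi(e)=e-\mu$. First I would record the qualitative facts that anchor the expansion: since $\overline{F}\in\RV_{-\alpha}$ with $\alpha>1$, one has $e\to\infty$ and the known first-order relation $e/F^{\leftarrow}(p)\to r_0:=(\alpha-1)^{-1/\alpha}$ as $p\uparrow1$. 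These let me treat $e$ as a large argument and expand around the constant ratio $r_0$.

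The first real computation is a second-order expansion of $\Pi(e)$. Substituting $x=eu$ and using $\overline{F}\in 2\RV_{-\alpha,\beta}$, namely $\overline{F}(eu)/\overline{F}(e)=u^{-\alpha}+H_{-\alpha,\beta}(u)A(e)(1+o(1))$, I would integrate term by term over $u\in(1,\infty)$ (justified by the uniform convergence theorem for $2\RV$) to obtain
\[
\Pi(e)=\frac{e\,\overline{F}(e)}{\alpha-1}\Big(1+\tfrac{1}{\alpha-\beta-1}A(e)(1+o(1))\Big),
\]
where $\int_1^\infty u^{-\alpha}\,\d u=\frac1{\alpha-1}$ and $\frac1\beta\int_1^\infty(u^{\beta-\alpha}-u^{-\alpha})\,\d u=\frac1{(\alpha-1)(\alpha-\beta-1)}$ produce the two constants (the $\beta=0$ case follows by the usual $\log$ convention in $H_{-\alpha,0}$).

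Next I would substitute this into $\frac{2p-1}{1-p}\Pi(e)=e-\mu$ and solve for $\overline{F}(e)$, expanding the factors $\frac{e-\mu}{e}=1-\mu/e$, $(1+\tfrac{A(e)}{\alpha-\beta-1})^{-1}=1-\tfrac{A(e)}{\alpha-\beta-1}+o(A(e))$, and $\frac{1-p}{2p-1}=(1-p)(1+O(1-p))$; the $O(1-p)$ term is of order $(F^{\leftarrow}(p))^{-\alpha}$ and is absorbed into the error since $\alpha>1$. Using $\overline{F}(F^{\leftarrow}(p))=1-p$ this gives
\[
\frac{\overline{F}(e)}{\overline{F}(F^{\leftarrow}(p))}=(\alpha-1)\Big(1-\frac{\mu}{e}-\frac{A(e)}{\alpha-\beta-1}+o(A(e))+o(1/e)\Big).
\]
I would then invert: writing $v=F^{\leftarrow}(p)$ and $r=e/v=r_0(1+\varepsilon)$ with $\varepsilon\to0$, the $2\RV$ property gives $\overline{F}(e)/\overline{F}(v)=r^{-\alpha}+H_{-\alpha,\beta}(r_0)A(v)(1+o(1))$. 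Matching leading terms confirms $r_0^{-\alpha}=\alpha-1$; linearizing $r^{-\alpha}=(\alpha-1)(1-\alpha\varepsilon+O(\varepsilon^2))$ and replacing $A(e)\sim r_0^\beta A(v)$, $\mu/e\sim\mu/(r_0v)$, the two sides reduce to a scalar equation for $\varepsilon$, whose solution is $\varepsilon=\frac1\alpha\big[(\frac{r_0^\beta-1}{\beta}+\frac{r_0^\beta}{\alpha-\beta-1})A(v)+\frac{\mu}{r_0v}\big]$. Then $e/v=r_0+r_0\varepsilon$, and a short simplification of the $A(v)$-coefficient collapses to $r_0\xi_{\alpha,\beta}$ with $\xi_{\alpha,\beta}$ as in \eqref{eq:3.1}, while the $\mu$-term yields exactly $\frac{\mu}{\alpha v}$, giving the asserted formula.

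The step I expect to be the main obstacle is the inversion in the last paragraph: one must evaluate the $2\RV$ expansion at a ratio $r$ that is itself converging to $r_0$, so the uniform convergence of the $2\RV$ limit (and of $A(e)/A(v)\to r_0^\beta$) must be invoked to justify freezing $H_{-\alpha,\beta}$ and $A$ at the limiting ratio while retaining the $O(\varepsilon)$ term. A secondary bookkeeping difficulty is verifying that the three second-order scales — $A(F^{\leftarrow}(p))\in\RV_{\beta}$, $\mu/F^{\leftarrow}(p)\in\RV_{-1}$, and the discarded $(1-p)$-term $\in\RV_{-\alpha}$, all read in the variable $F^{\leftarrow}(p)$ — are correctly ordered so that only the first two survive; since $\alpha>1$ this is automatic but should be stated explicitly, together with the $\beta=0$ degenerate conventions for $\xi_{\alpha,\beta}$.
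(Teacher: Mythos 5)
Your proposal is correct, and through its first three steps it coincides with the paper's own proof: both start from the optimality condition \eqref{eq:1.1}, both derive the same second-order Karamata expansion of the stop-loss quantity (your $\Pi(e)/e$ is exactly the paper's $\E\bigl[(X/\e_p(X)-1)\mathbf{1}_{\{X\geq \e_p(X)\}}\bigr]$, with the same constants $\tfrac{1}{\alpha-1}$ and $\tfrac{1}{(\alpha-1)(\alpha-\beta-1)}$), both anchor on the first-order relation $\e_p(X)\sim(\alpha-1)^{-1/\alpha}F^{\leftarrow}(p)$, and both discard the $2(1-p)$ correction via $\overline{F}(F^{\leftarrow}(p))=o(1/F^{\leftarrow}(p))$, valid since $\alpha>1$. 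Where you genuinely diverge is the inversion step. The paper inverts by passing to the tail quantile function, writing $\e_p(X)/F^{\leftarrow}(p)=U_F\bigl(1/\overline{F}(\e_p(X))\bigr)/U_F\bigl(1/(1-p)\bigr)$ and applying $U_F\in 2\RV_{1/\alpha,\beta/\alpha}$ with auxiliary function $\alpha^{-2}A\circ U_F$ (Theorem 2.3.9 of de Haan and Ferreira), so all the second-order bookkeeping is delegated to a composition of two known expansions. You instead linearize the direct $2\RV$ expansion of $\overline{F}$ at the moving ratio $r=r_0(1+\varepsilon)$ and solve a scalar equation for $\varepsilon$. Your route is more self-contained (no inverse-function $2\RV$ theorem needed), but it carries two obligations the paper's route sidesteps: (i) the locally uniform, Drees-type form of the $2\RV$ convergence to justify evaluating the expansion at $r\to r_0$ and replacing $A(e)$ by $r_0^{\beta}A(v)$ — you flag this correctly, and it is the same inequality the paper invokes in its Lemma \ref{lem:2rv}; and (ii) a bootstrap showing $\varepsilon=O(|A(v)|)+O(1/v)$ so that the dropped $O(\varepsilon^2)$ is genuinely lower order — implicit in your write-up but standard, since the linearized equation itself furnishes the bound. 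I checked your constants: with $r_0^{-\alpha}=\alpha-1$ one has $\tfrac{1}{\alpha}\bigl(\tfrac{r_0^{\beta}-1}{\beta}+\tfrac{r_0^{\beta}}{\alpha-\beta-1}\bigr)=\tfrac{1}{\alpha\beta}\bigl(\tfrac{(\alpha-1)^{1-\beta/\alpha}}{\alpha-\beta-1}-1\bigr)=\xi_{\alpha,\beta}$ and $r_0\cdot\tfrac{1}{\alpha}\cdot\tfrac{\mu}{r_0 v}=\tfrac{\mu}{\alpha v}$, exactly as claimed.

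One caveat on the degenerate case: do not simply defer to "the stated $\beta=0$ convention." Carrying your derivation through at $\beta=0$ (using $\int_1^\infty u^{-\alpha}\log u\,\d u=(\alpha-1)^{-2}$) gives the coefficient $\tfrac{1}{\alpha(\alpha-1)}-\alpha^{-2}\log(\alpha-1)$, which is also the continuous limit $\beta\uparrow 0$ of the displayed formula \eqref{eq:3.1}; the value $-\alpha^{-2}\log(\alpha-1)$ declared in the statement omits the term $\tfrac{1}{\alpha(\alpha-1)}$ contributed by the stop-loss expansion. This discrepancy sits in the statement's convention (the paper's own proof, run with the log conventions, produces the same constant as yours), so your argument is the consistent one there — but you should state the $\beta=0$ constant explicitly rather than appeal to the convention.
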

\begin{proof}
See the Appendix.
\end{proof}

 Proposition \ref{pro:e} is derived by a different method with Corollary 1 of \cite{daouia2018estimation}.  In addition, due to the fact that $1-p\sim\overline{F}(F^{\leftarrow}(p))$ as $p\uparrow 1$ and $\alpha>1$, we derive that $1-p=o\(1/F^{\leftarrow}(p)\)$. Thus, Proposition \ref{pro:e} is consistent with Proposition 3.1 of \cite{mao2015asymptotic} and Theorem 3.1 of \cite{mao2015risk}.

Secondly, we obtain the second-order asymptotics of $\e_p(S_n)$ and $\CE_p(S_n)$ with an $n$-dimensional Sarmanov distribution.

\begin{theorem}\label{the:4.1}
Let $X_1,\ldots,X_n$ be nonnegative random variables with a common marginal distribution $F$ satisfying $\overline{F}(\cdot)\in 2\RV_{-\alpha,\beta}$ with $\alpha>1$, $\beta\leq0$ and an auxiliary function $A(\cdot)$. Suppose that ($X_1,\ldots,X_n$) follows an $n$-dimensional Sarmanov distribution given by \eqref{eq:sarmanov} and $\lim\limits_{x_i\rightarrow\infty}\phi_i(x_i)=d_i\in \R, \phi_i(\cdot)-d_i\in \RV_{\rho_i}$ with $\rho_i\leq 0$  for each $i=1,\ldots,n$. Then as $p\uparrow 1$, we get that
 \begin{align*}
\frac{\e_p(S_n)}{F^{\leftarrow}(p)}&=\(\frac{n}{\alpha-1}\)^{1/\alpha}\(1+\frac{1}{\alpha\beta}\(\frac{n^{\beta/\alpha}(\alpha-1)^{1-\beta/\alpha}}{\alpha-\beta-1}-1\)A\(F^{\leftarrow}(p)\)\(1+o(1)\)\)\\
&\quad+\frac{(\alpha-1)\mu_n^*\(F^{\leftarrow}(p)\)+n\mu}{\alpha F^{\leftarrow}(p) }\(1+o(1)\)+o\(\sum\limits_{i=1}^n|\phi_i\(F^{\leftarrow}(p)\)-d_i|\),
\end{align*}
and
\begin{align*}
\frac{\CE_p(S_n)}{F^{\leftarrow}(p)}&=
\frac{\alpha n^{1/\alpha}}{(\alpha-1)^{1/\alpha+1}}\(1+\chi_{\alpha,\beta}^nA\(F^{\leftarrow}(p)\)\(1+o(1)\)\)\\
&\quad+\frac{(\alpha-2)\mu_n^*\(F^{\leftarrow}(p)\)+n\mu }{(\alpha-1)F^{\leftarrow}(p)} \(1+o(1)\)+o\(\sum\limits_{i=1}^n|\phi_i\(F^{\leftarrow}(p)\)-d_i|\),
\end{align*}
where
\begin{align}\label{eq:4.2}
\chi_{\alpha,\beta}^n=\left\{
\begin{array}
[c]{lll}%
\frac{1}{\alpha\beta}\(\(\frac{n}{\alpha-1}\)^{\beta/\alpha}\frac{\alpha+\beta-1}{\alpha-\beta-1}-1\), & & \beta<0,\\ \\
\alpha^{-2}(\log n-\log(\alpha-1))+\frac{2}{\alpha(\alpha-1)}, & & \beta=0.
\end{array}
\right.
\end{align}
Clearly, the first-order asymptotics of $\e_p(S_n)$ and  $\CE_p(S_n)$ are $\frac{n^{1/\alpha}F^{\leftarrow}(p)}{(\alpha-1)^{1/\alpha}}$ and $\frac{\alpha n^{1/\alpha}F^{\leftarrow}(p)}{(\alpha-1)^{1/\alpha+1}}$.
\end{theorem}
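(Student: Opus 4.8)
The plan is to treat the aggregate loss $S_n$ as a single heavy-tailed random variable and reduce both statements to the single-variable expectile expansion of Proposition \ref{pro:e}, feeding in the tail information of $G$ obtained earlier. The three ingredients I would assemble are: the two-term second-order expansion of $\overline{G}$ from Corollary \ref{cor:3.1} (so that $\overline{G}(\cdot)\in2\RV_{-\alpha,\lambda}$ with auxiliary function $A_G^{(n)}(t)=A(t)-\alpha t^{-1}\mu_n^*(t)$ when $\alpha\geq1$), the expansion of $\VaR_p(S_n)=G^{\leftarrow}(p)$ relative to $F^{\leftarrow}(p)$ from Theorem \ref{the:3.1}, and the elementary identity $\E[S_n]=n\mu$.

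For $\e_p(S_n)$ I would apply Proposition \ref{pro:e} with $F$ replaced by $G$, $\mu$ by $n\mu$, and the auxiliary function by $A_G^{(n)}$. The key observation is that the second-order term in Proposition \ref{pro:e} enters linearly in the $2\RV$ perturbation, so when $A_G^{(n)}=A-\alpha\mu_n^*/t$ is a sum of a $\RV_\beta$ piece and a $\RV_{-1}$ piece, the expansion splits additively: the $A$-piece contributes through $\xi_{\alpha,\beta}$ and the $\mu_n^*/t$-piece through $\xi_{\alpha,-1}$. Writing $\e_p(S_n)/F^{\leftarrow}(p)=\big(\e_p(S_n)/G^{\leftarrow}(p)\big)\cdot\big(G^{\leftarrow}(p)/F^{\leftarrow}(p)\big)$, substituting the $\VaR$ expansion of Theorem \ref{the:3.1}, and using $A\big(n^{1/\alpha}F^{\leftarrow}(p)\big)\sim n^{\beta/\alpha}A\big(F^{\leftarrow}(p)\big)$ together with $G^{\leftarrow}(p)\sim n^{1/\alpha}F^{\leftarrow}(p)$, I would then collect the terms of common order $A(F^{\leftarrow}(p))$ and of order $1/F^{\leftarrow}(p)$. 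The $A$-coefficient telescopes: the contribution $\xi_{\alpha,\beta}n^{\beta/\alpha}$ from the expectile expansion combines with the shift term $(n^{\beta/\alpha}-1)/(\alpha\beta)$ from the $\VaR$ expansion to collapse into $\tfrac{1}{\alpha\beta}\big(n^{\beta/\alpha}(\alpha-1)^{1-\beta/\alpha}/(\alpha-\beta-1)-1\big)$. For the $\mu_n^*$ term, combining the $\xi_{\alpha,-1}$ contribution, the $\VaR$ shift, and the mean term $n\mu/(\alpha G^{\leftarrow}(p))$ produces the coefficient $(\alpha-1)/\alpha$, giving the stated block $\big((\alpha-1)\mu_n^*(F^{\leftarrow}(p))+n\mu\big)/\big(\alpha F^{\leftarrow}(p)\big)$.

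For $\CE_p(S_n)=\E[S_n\mid S_n>\e_p(S_n)]$ I would read it as a tail conditional expectation of $S_n$ evaluated at the threshold $u=\e_p(S_n)$ rather than at $\VaR_p(S_n)$, and apply the same second-order Karamata-type expansion of $\E[S_n\mid S_n>u]$ that underlies the $\CTE$ part of Theorem \ref{the:3.1}; to first order this gives $\tfrac{\alpha}{\alpha-1}u$ plus a mean-level correction, with a second-order factor built from $A_G^{(n)}(u)$. Substituting $u=\e_p(S_n)\sim\big(n/(\alpha-1)\big)^{1/\alpha}F^{\leftarrow}(p)$ from the first part, expanding $A(u)\sim\big(n/(\alpha-1)\big)^{\beta/\alpha}A(F^{\leftarrow}(p))$ and $\mu_n^*(u)/u$ by regular variation, and collecting yields the leading factor $\alpha n^{1/\alpha}/(\alpha-1)^{1/\alpha+1}$, the $A$-coefficient $\chi_{\alpha,\beta}^n$ of \eqref{eq:4.2}, and the mean-level block $\big((\alpha-2)\mu_n^*(F^{\leftarrow}(p))+n\mu\big)/\big((\alpha-1)F^{\leftarrow}(p)\big)$.

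The main obstacle is the bookkeeping forced by the fact that three distinct second-order effects live at comparable asymptotic orders and must be combined before anything simplifies: the genuine $2\RV$ auxiliary function $A\in\RV_\beta$, the mean-induced correction $\mu_n^*(t)/t\in\RV_{-1}$ baked into $A_G^{(n)}$, and the threshold shift from $F^{\leftarrow}(p)$ to $G^{\leftarrow}(p)$ (respectively $\e_p(S_n)$). The delicate point is to carry the two pieces of $A_G^{(n)}$ through Proposition \ref{pro:e} with their own indices $\beta$ and $-1$ (hence their own $\xi_{\alpha,\beta}$ and $\xi_{\alpha,-1}$) and to verify that the resulting coefficients telescope into the compact closed forms claimed; the change of the mean coefficient from $\alpha-1$ in the expectile case to $\alpha-2$ in the $\CE$ case is exactly an artifact of this combination. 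One must also treat the $\beta=0$ logarithmic limit separately, where $\xi_{\alpha,\beta}$, $\chi_{\alpha,\beta}^n$ and $(n^{\beta/\alpha}-1)/(\alpha\beta)$ pass to their stated logarithmic values, and keep the residual $o\big(\sum_i|\phi_i(t)-d_i|\big)$ terms inherited from Proposition \ref{pro:sum} under control throughout.
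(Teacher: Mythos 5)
Your proposal is correct and follows essentially the same route as the paper's own proof: apply Proposition \ref{pro:e} to the aggregate distribution $G$ (with mean $n\mu$ and the auxiliary function $A_G^{(n)}$ from Corollary \ref{cor:3.1}), convert $G^{\leftarrow}(p)$ to $F^{\leftarrow}(p)$ via the $\VaR$ expansion of Theorem \ref{the:3.1}, and obtain $\CE_p(S_n)$ by the Karamata-type expansion of $\E[S_n\mid S_n>u]$ (the paper's Lemma \ref{lem:cr}) at the threshold $u=\e_p(S_n)$. If anything, your explicit splitting of $A_G^{(n)}=A-\alpha\mu_n^*/t$ into its $\RV_\beta$ and $\RV_{-1}$ pieces, each carrying its own $\xi_{\alpha,\beta}$ and $\xi_{\alpha,-1}$, is a more careful rendering of what the paper compresses into the single symbol $\xi_{\alpha,\lambda}A_G^{(n)}$, and it reproduces the stated coefficients $(\alpha-1)/\alpha$ and $\chi_{\alpha,\beta}^n$ exactly.
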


\begin{proof}
Firstly,  to deal with $\E[S_n]$, owing to the fact that $(X_1,\ldots,X_n)$ follows an $n$-dimensional Sarmanov distribution in \eqref{eq:sarmanov}, we have
\begin{align*}    \E[S_n]&=\int_{[0,\infty]^n}\sum_{i=1}^nx_i\(1+\sum_{j=i+1}^na_{ij}\phi_{i}(x_i)\phi_j(x_j)\)\prod_{k=1}^{n}\d F(x_k)\\
  &=\sum_{i=1}^n\int_{[0,\infty]^n}x_i\(1+\sum_{j=i+1}^na_{ij}\phi_{i}(x_i)\phi_j(x_j)\)\prod_{k=1}^{n}\d F(x_k)\\
    &=n\mu.
\end{align*}
 According to Proposition \ref{pro:e} and Corollary \ref{cor:3.1}, we have
\begin{align*}
\e_p(S_n)&=\(\alpha-1\)^{-1/\alpha}G^{\leftarrow}(p)\(1+\xi_{\alpha,\lambda}A_{G}^{(n)}\(G^{\leftarrow}(p)\)\(1+o(1)\)\)+\frac{\E[S_n]}{\alpha}\(1+o(1)\)\\
&=\(\frac{n}{\alpha-1}\)^{1/\alpha}F^{\leftarrow}(p)\(1+\(\frac{\mu_n^*\(F^{\leftarrow}(p)\)}{n^{1/\alpha}F^{\leftarrow}(p)}+\frac{n^{\beta/\alpha}-1}{\alpha\beta}A\(F^{\leftarrow}(p)\)\)\(1+o(1)\)\)\\
&\quad \cdot\( 1+n^{\lambda/\alpha}\xi_{\alpha,\lambda}A_{G}^{(n)}\(F^{\leftarrow}(p)\)\(1+o(1)\)\)+\frac{n\mu}{\alpha}\(1+o(1)\)\\
&=\(\frac{n}{\alpha-1}\)^{1/\alpha}F^{\leftarrow}(p)\(1+\frac{1}{\alpha\beta}\(\frac{n^{\beta/\alpha}(\alpha-1)^{1-\beta/\alpha}}{\alpha-\beta-1}-1\)A\(F^{\leftarrow}(p)\)\(1+o(1)\)\)\\
&\quad+\alpha^{-1}\((\alpha-1)\mu_n^*\(F^{\leftarrow}(p)\)+n\mu\)\(1+o(1)\)+o\(\sum\limits_{i=1}^n|\phi_i\(F^{\leftarrow}(p)\)-d_i|\).
\end{align*}
Furthermore, by the definition of $\CE_p(S_n)$ and Lemma \ref{lem:cr} in the Appendix, we obtain, as $p\uparrow 1$,
\begin{align*}
\frac{\CE_p(S_n)}{F^{\leftarrow}(p)}&=\frac{\E\left[S_n\big|S_n>\e_p(S_n)\right]}{F^{\leftarrow}(p)}\\
&=\frac{\alpha}{\alpha-1}\(1+\frac{1}{\alpha(\alpha-\lambda-1)}A_G^{(n)}(\e_p(S_n))(1+o(1))\)\frac{\e_p(S_n)}{F^{\leftarrow}(p)}\\
&=
\frac{\alpha n^{1/\alpha}}{(\alpha-1)^{1/\alpha+1}}\(1+\frac{1}{\alpha\beta}\(\(\frac{n}{\alpha-1}\)^{\beta/\alpha}\frac{\alpha+\beta-1}{\alpha-\beta-1}-1\)A\(F^{\leftarrow}(p)\)\(1+o(1)\)\)\\
&\quad+\frac{(\alpha-2)\mu_n^*\(F^{\leftarrow}(p)\)+n\mu}{(\alpha-1)F^{\leftarrow}(p)}\(1+o(1)\)+o\(\sum\limits_{i=1}^n|\phi_i\(F^{\leftarrow}(p)\)-d_i|\).
\end{align*}
Thus, this completes the proof. 
\end{proof}

The following example is applied to interpret Theorem \ref{the:4.1} with the absolute Student $t_\alpha$ distribution.

 \begin{example}\label{ex:t}
    (Absolute Student $t_\alpha$ distribution) A standard Student $t_\alpha$ distribution has a density function 
    $$f(x)=\frac{\Gamma((\alpha+1)/2)}{\sqrt{\alpha\pi}\Gamma(\alpha/2)}\(1+\frac{x^2}{\alpha}\)^{-(\alpha+1)/2}, ~~~x\in \R, $$
  with a parameter $\alpha>1$. Denote by $F$ the distribution function of $|X|$. According to Example 4.2 of \cite{mao2012second} and Example 4.4 of \cite{hua2011second}, we know that $\overline{F}(\cdot)\in2\RV_{-\alpha,-2}$ with an auxiliary function 
 $A(t)=\frac{\alpha^2}{\alpha+2}t^{-2}$, $\mu=\E|X|=\frac{\alpha}{\alpha-1}$ and $F^{\leftarrow}(p)=t_{\alpha}^{\leftarrow}((1+p)/2)$. Let $X_1$ and $X_2$  have an identical Student $t_\alpha$ distribution $F$.  Suppose that the random vector $(X_1,X_2)$ follows a Sarmanov distribution in \eqref{eq:sarmanov} with $\phi_i(\cdot)=1-2F(\cdot), ~i=1,2$. Clearly, $d_i=-1$. Then
 \begin{align*}
\e_p(S_2)&=\frac{2^{1/\alpha}}{(\alpha-1)^{1/\alpha}}F^{\leftarrow}(p)+\frac{\alpha-1}{\alpha }\big[\mu\(F^{\leftarrow}(p)\)-a_{12}\mu_1\(F^{\leftarrow}(p)\)\big]+\frac{2}{\alpha-1},
\end{align*}
and
 \begin{align*}
\CE_p(S_2)&=\frac{2^{1/\alpha}\alpha}{(\alpha-1)^{1/\alpha+1}}F^{\leftarrow}(p)+\frac{\alpha-2}{\alpha-1 }\big[\mu\(F^{\leftarrow}(p)\)-a_{12}\mu_1\(F^{\leftarrow}(p)\)\big]+\frac{2\alpha}{(\alpha-1)^2  }.
\end{align*}
\begin{table}[htbp]%
   \fontsize{8}{2}\centering
  \setlength{\tabcolsep}{0.5mm}{
  \begin{tabular}{cccc ccc cccc}%
  \hline\hline\noalign{\smallskip}
  $p$ & $\e_{p}(S_2)_{MC}$& $\e_{p}(S_2)_{1st}$& $\e_{p}(S_2)_{2nd}$ &$\CE_{p}(S_2)_{MC}$&$\CE_{p}(S_2)_{1st}$&$\CE_{p}(S_2)_{2nd}$  & $\frac{\e_{p}(S_2)_{1st}}{\e_{p}(S_2)_{MC}}$ & $\frac{\e_{p}(S_2)_{2nd}}{\e_{p}(S_2)_{MC}}$& $\frac{\CE_{p}(S_2)_{1st}}{\CE_{p}(S_2)_{MC}}$ & $\frac{\CE_{p}(S_2)_{2nd}}{\CE_{p}(S_2)_{MC}}$    \\
\noalign{\smallskip}\hline\noalign{\smallskip}
0.9500&  5.7991&  4.0097&  5.5194&  9.2238&  6.6828&  9.0031& 0.6914& 0.9518& 0.7245& 0.9761\\
0.9600&  6.2368&  4.4366&  5.9522 & 9.9000&  7.3943 & 9.7178 &0.7114& 0.9544& 0.7469& 0.9816\\
0.9700&  6.8510&  5.0387&  6.5609& 10.8592&  8.3978& 10.7250& 0.7355& 0.9577& 0.7733& 0.9876\\
0.9800&  7.8270&  6.0016&  7.5313& 12.4033& 10.0027& 12.3340& 0.7668& 0.9622& 0.8065& 0.9944\\
0.9900&  9.8652 & 8.0309&  9.5699& 15.6877& 13.3849& 15.7213& 0.8141& 0.9701& 0.8532& 1.0021\\
0.9950& 12.5120& 10.6764& 12.2216& 20.0218& 17.7940& 20.1339& 0.8533& 0.9768& 0.8887& 1.0056\\
0.9990& 22.2665& 20.4956& 22.0482& 36.1840& 34.1594& 36.5034& 0.9205& 0.9902& 0.9440& 1.0088\\
0.9999 &53.1965& 51.3901& 52.9461& 88.2526& 85.6502& 87.9961& 0.9660& 0.9953& 0.9705& 0.9971\\
  \noalign{\smallskip}\hline
  \end{tabular}}
  \caption{Simulation values (MC) versus first-order (1st) and second-order (2nd) asymptotic values of $\e_p(S_2)$ and $\CE_p(S_2)$ with different $p$. We use the Student $t_\alpha$ distribution with $\alpha=2.5$ and $a_{12}=-0.5$.}
  \label{tab:3}
\end{table}
According to Table \ref{tab:3}, the second-order asymptotics of $\e_p(S_2)$ and $\CE_p(S_2)$ are much closer to the simulation values than their first-order asymptotics for $p\in[0.95,1)$. 
\end{example}
Next, we get the second-order asymptotics of $\ICE_{p,m}(S_n)$ and $\SICE_{p,m}(S_n)$ with an $n$-dimensional Sarmanov distribution. 
\begin{theorem}\label{the:4.2}
Under the conditions of Theorem \ref{the:4.1}, as $p\uparrow1$, we get that
\begin{align*}
\frac{\ICE_{p,m}(S_n)}{F^{\leftarrow}(p)}&=
\frac{\alpha}{(\alpha-1)n}\Bigg(\(\frac{n}{\alpha-1}\)^{1/\alpha}\Big(1+\chi_{\alpha,\beta}^nA\(F^\leftarrow(p)\)(1+o(1))\Big)\\
&\quad+\frac{n\mu-\mu_n^*\(F^\leftarrow(p)\)}{\alpha F^{\leftarrow}(p)}(1+o(1))\Bigg)+o\(\sum\limits_{i=1}^n|\phi_i\(F^{\leftarrow}(p)\)-d_i|\),
\end{align*}
with $\chi_{\alpha,\beta}^n$ defined in  \eqref{eq:4.2} and
\begin{align*}
\frac{\SICE_{p,m}(S_n)}{F^\leftarrow(p)}&=\frac{\ICE_{p,m}(S_n)}{F^\leftarrow(p)}-\frac{\e_p(X_m)}{nF^\leftarrow(p)}.
\end{align*}
Clearly, the first-order asymptotics of $\ICE_{p,m}(S_n)$ and $\SICE_{p,m}(S_n)$ are $\frac{n^{1/\alpha}\alpha F^\leftarrow(p) }{n(\alpha-1)^{1+1/\alpha}}$ and $\frac{n^{1/\alpha}\alpha F^\leftarrow(p) }{n(\alpha-1)^{1+1/\alpha}}- \frac{F^\leftarrow(p)}{n(\alpha-1)^{1/\alpha}}$.
\end{theorem}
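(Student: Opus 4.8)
The plan is to treat $\ICE_{p,m}(S_n)=\E[X_m\mid S_n>\e_p(S_n)]$ as the expectile counterpart of $\MES_{p,m}(S_n)=\E[X_m\mid S_n>\VaR_p(S_n)]$ from Theorem \ref{the:3.2}. The conditional-mean expansion supplied by Lemma \ref{lem:csum} --- which, in the notation of the proof of Theorem \ref{the:3.2}, reads $\E[X_m\mid S_n>t]=\frac{\alpha t}{(\alpha-1)n}(1+\widetilde{B}(t)(1+o(1)))$ with $\widetilde{B}(t)=\frac{1}{\alpha(\alpha-\beta-1)}A(t)-\frac{\mu_n^*(t)}{t}\in\RV_{\max\{-1,\beta\}}$ --- holds for any diverging threshold $t$. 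Since $\e_p(S_n)\to\infty$ as $p\uparrow1$, I would simply evaluate this expansion at $t=\e_p(S_n)$ in place of $t=\VaR_p(S_n)$; the only structural change from $\MES$ is that the conditioning threshold is now the expectile of the sum rather than its quantile.

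First I would substitute the second-order expansion of $\e_p(S_n)$ from Theorem \ref{the:4.1}, whose first-order term is $(n/(\alpha-1))^{1/\alpha}F^{\leftarrow}(p)$. Because $\widetilde{B}\in\RV_\rho$ with $\rho=\max\{-1,\beta\}$, a regular-variation transfer yields $\widetilde{B}(\e_p(S_n))\sim(n/(\alpha-1))^{\rho/\alpha}\widetilde{B}(F^{\leftarrow}(p))$, so that $\widetilde{B}(\e_p(S_n))\,\e_p(S_n)$ separates into an $A(F^{\leftarrow}(p))$-contribution (from the $A$-part of $\widetilde{B}$) and a $\mu_n^*(F^{\leftarrow}(p))/F^{\leftarrow}(p)$-contribution (from the $-\mu_n^*(t)/t$ part). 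Multiplying $\frac{\alpha}{(\alpha-1)n}(1+\widetilde{B}(\e_p(S_n))(1+o(1)))$ against $\e_p(S_n)/F^{\leftarrow}(p)$ and collecting the two comparable second-order scales $A(F^{\leftarrow}(p))$ and $1/F^{\leftarrow}(p)$ is the core computation. The $A$-coefficients combine through the identity $\frac{1}{\alpha\beta}(\frac{n^{\beta/\alpha}(\alpha-1)^{1-\beta/\alpha}}{\alpha-\beta-1}-1)+\frac{(n/(\alpha-1))^{\beta/\alpha}}{\alpha(\alpha-\beta-1)}=\chi_{\alpha,\beta}^n$ of \eqref{eq:4.2}, while the additive mean terms carried by $\e_p(S_n)$ (namely $\frac{(\alpha-1)\mu_n^*+n\mu}{\alpha F^{\leftarrow}(p)}$ from Theorem \ref{the:4.1}, with $\mu/\alpha$ traced back to Proposition \ref{pro:e}) and the $-\mu_n^*/t$ part of $\widetilde{B}$ telescope to $\frac{n\mu-\mu_n^*(F^{\leftarrow}(p))}{\alpha F^{\leftarrow}(p)}$. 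This produces the asserted formula for $\ICE_{p,m}(S_n)$.

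For $\SICE_{p,m}(S_n)$ I would use the identity $(X_m-\e_p(X_m))_+=X_m-\min\{X_m,\e_p(X_m)\}$ to write
\[
\SICE_{p,m}(S_n)=\ICE_{p,m}(S_n)-\E\left[\min\{X_m,\e_p(X_m)\}\mid S_n>\e_p(S_n)\right],
\]
and then evaluate the truncated conditional mean by the single-big-jump mechanism underlying the $\SES$ reduction (Lemma \ref{lem:5.2}). Since $\overline{F}(\cdot)\in\RV_{-\alpha}$, the event $\{S_n>\e_p(S_n)\}$ is asymptotically realized by exactly one summand exceeding the threshold, each index carrying equal weight $1/n$; on the event that $X_m$ is that dominant summand the truncation saturates at $\e_p(X_m)$, whereas on the remaining events $X_m$ stays below the diverging level $\e_p(X_m)$. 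This localizes the truncated mean at $\e_p(X_m)/n$ at the retained order and gives $\SICE_{p,m}(S_n)/F^{\leftarrow}(p)=\ICE_{p,m}(S_n)/F^{\leftarrow}(p)-\e_p(X_m)/(nF^{\leftarrow}(p))$.

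I expect the main obstacle to be the bookkeeping in the $\ICE$ step. The two second-order rates $A(F^{\leftarrow}(p))\in\RV_{\beta/\alpha}$ and $1/F^{\leftarrow}(p)\in\RV_{-1/\alpha}$ exchange dominance at $\beta=-1$, so every intermediate $(1+o(1))$ must be controlled against $\max\{|A(F^{\leftarrow}(p))|,1/F^{\leftarrow}(p)\}$ rather than a single scale, and the $\mu$- and $\mu_n^*$-terms --- which cancel completely for $\MES$ but leave the residual $\frac{n\mu-\mu_n^*}{\alpha F^{\leftarrow}(p)}$ here, because the expectile threshold carries its own additive mean correction --- must be telescoped without dropping any $\RV_{-1}$ piece. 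The single-big-jump step for $\SICE$ is conceptually identical to Theorem \ref{the:3.2}, but is delicate because the truncation level $\e_p(X_m)$ and the conditioning threshold $\e_p(S_n)$ diverge simultaneously and in fixed proportion, so the truncated-mean estimate cannot be read off a single one-dimensional tail limit.
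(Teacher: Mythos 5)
Your $\ICE_{p,m}(S_n)$ argument is essentially the paper's own proof: the paper likewise evaluates the conditional-mean expansion of Lemma \ref{lem:csum} at $t=\e_p(S_n)$, inserts the expansion of $\e_p(S_n)$ from Theorem \ref{the:4.1}, transfers $\widetilde{B}$ along the scale $(n/(\alpha-1))^{1/\alpha}$ by regular variation, and collects coefficients. Your identity producing $\chi_{\alpha,\beta}^n$ and the telescoping of the mean terms to $\frac{n\mu-\mu_n^*(F^{\leftarrow}(p))}{\alpha F^{\leftarrow}(p)}$ are exactly right; tracking the two pieces of $\widetilde{B}$ with their own indices $\beta/\alpha$ and $-1/\alpha$ is in fact cleaner than the paper's single exponent $\rho/\alpha$.

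The $\SICE_{p,m}(S_n)$ step is where you deviate, and there is a genuine gap. The paper does not localize by a single big jump; it applies Lemma \ref{lem:5.2} with $t_1=\e_p(X_m)$ and $t=\e_p(S_n)$, and that lemma is proved by the same Sarmanov integral decomposition as Lemma \ref{lem:csum}, i.e.\ by controlling $\p(S_n>t,\,X_m\le zt)$ over $z\in[t_1/t,1]$ and integrating. Your heuristic --- the big jump is $X_m$ with weight $1/n$, in which case the truncation saturates at $\e_p(X_m)$, and otherwise $X_m$ stays below the truncation level, hence the truncated mean is $\e_p(X_m)/n$ --- discards two contributions living exactly at the order the theorem retains. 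First, on the events where some $X_j$ with $j\neq m$ is the dominant summand, one has $\min\{X_m,\e_p(X_m)\}=X_m$, whose conditional mean is of order $\mu$, not $o(1)$; this is a constant-order term, i.e.\ of order $1/F^{\leftarrow}(p)$ after normalization, which is the same order as the retained term $\frac{n\mu-\mu_n^*(F^{\leftarrow}(p))}{\alpha F^{\leftarrow}(p)}$. Second, the intermediate regime $X_m\in(\e_p(X_m),\e_p(S_n)]$ with another summand covering the deficit has probability only $O(\overline{F}(t)/t)$, but on it the truncated variable sits at the saturated value $\e_p(X_m)\asymp t$, so it too contributes at constant order; vanishing probability does not imply vanishing contribution to a conditional expectation. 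Whether and how such terms combine or cancel is precisely the bookkeeping that Lemma \ref{lem:5.2}'s integral analysis performs; it cannot be read off the big-jump picture, as your own closing paragraph concedes without resolving. To close the gap you should prove, or directly invoke, Lemma \ref{lem:5.2} and combine it with the $\ICE$ expansion, as the paper does, rather than substitute the localization heuristic.
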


\begin{proof}We require Lemmas \ref{lem:csum}-\ref{lem:5.2} in the Appendix for this proof. 
 By the definition of $\ICE_{p,m}(S_n)$, Theorem \ref{the:4.1} and Lemma \ref{lem:csum}, as $p\uparrow1$, we have that
\begin{align*}
\frac{\ICE_{p,m}(S_n)}{F^\leftarrow(p)}&=\frac{\E\left[X_m\big|S_n>\e_p(S_n)\right]}{F^{\leftarrow}(p)}\\
&=\frac{\alpha}{(\alpha-1)n}\(1+\widetilde{B}\(\e_p(S_n)\)(1+o(1))\)\frac{\e_p(S_n)}{F^{\leftarrow}(p)}\nonumber\\
&=\frac{\alpha}{(\alpha-1)n}\Bigg(\(\frac{n}{\alpha-1}\)^{1/\alpha}\(1+\frac{1}{\alpha\beta}\(\frac{n^{\beta/\alpha}(\alpha-1)^{1-\beta/\alpha}}{\alpha-\beta-1}-1\)A\(F^{\leftarrow}(p)\)\(1+o(1)\)\)\\
&\quad+\frac{(\alpha-1)\mu_n^*\(F^{\leftarrow}(p)\)+n\mu}{\alpha F^{\leftarrow}(p)}\(1+o(1)\)\Bigg)\(1+\(\frac{n}{\alpha-1}\)^{\rho/\alpha}\widetilde{B}\(F^\leftarrow(p)\)\(1+o(1)\)\)\nonumber\\
&=
\frac{\alpha }{(\alpha-1)n}\(\(\frac{n}{\alpha-1}\)^{1/\alpha}\(1+\chi_{\alpha,\beta}^nA\(F^\leftarrow(p)\)(1+o(1))\)\right.\\
&\quad\left.+\frac{n\mu-\mu_n^*\(F^{\leftarrow}(p)\)}{\alpha F^{\leftarrow}(p)}(1+o(1))\)+o\(\sum\limits_{i=1}^n|\phi_i\(F^{\leftarrow}(p)\)-d_i|\).
\end{align*}
Using Theorem \ref{the:4.1} and Lemma \ref{lem:5.2}, we conclude that
\begin{align*}
\frac{\SICE_{p,m}(S_n)}{F^\leftarrow(p)}=\frac{\E\left[\(X_m-\e_p(X_m)\)_+|S_n>\e_p(S_n)\right]}{F^\leftarrow(p)}=\frac{\ICE_{p,m}(S_n)}{F^\leftarrow(p)}-\frac{\e_p(X_m)}{nF^\leftarrow(p)}.
\end{align*}
This completes the proof.
\end{proof}


Lastly, the following example illustrates the result of Theorem \ref{the:4.2} under the Fr\'{e}chet distribution.
\begin{example}\label{ex:Frechet}
    (Fr\'{e}chet distribution) A Fr\'{e}chet  distribution function $F$ satisfies that
\begin{align*}
    F(x)=1-\exp(-x^{-\alpha}),~~~~ x>0,
\end{align*}
 with a parameter $\alpha>1$. Clearly, $\overline{F}(\cdot)\in 2\RV_{-\alpha,-\alpha}$ with the auxiliary function $A(t)=\frac{\alpha}{2} t^{-\alpha}$.  Let $X_1$ and $X_2$ have an identical Fr\'{e}chet distribution $F$. Suppose that the random vector $(X_1,X_2)$ follows a Sarmanov distribution in \eqref{eq:sarmanov} with $\phi_i(X_i)=X_i^{-1}-\E[X_i^{-1}]$. Then, $d_i=-\E[X_i^{-1}]$ and $\rho_i=-1,~i=1,2$. Thus,
\begin{align*}
\ICE_{p,1}(S_2)&=
\frac{2^{1/\alpha-1}\alpha F^{\leftarrow}(p)}{(\alpha-1)^{1/\alpha+1}}+\frac{2\mu-\mu\(F^{\leftarrow}(p)\)-a_{12}d_1\mu_1\(F^{\leftarrow}(p)\)}{2(\alpha-1)},
\end{align*}
and
\begin{align*}
\SICE_{p,1}(S_2)=\ICE_{p,1}(S_2)-\(\frac{F^{\leftarrow}(p)}{2(\alpha-1)^{1/\alpha}}+\frac{\mu}{2\alpha }\).
\end{align*}

\begin{table}[t]%
  \centering
  \setlength{\tabcolsep}{2.0mm}{
  \begin{tabular}{cccc cc}%
  \hline\hline\noalign{\smallskip}
  $p$ & $\ICE_{p,1}(S_2)_{MC}$& $\ICE_{p,1}(S_2)_{1st}$& $\ICE_{p,1}(S_2)_{2nd}$  & $\frac{\ICE_{p,1}(S_2)_{1st}}{\ICE_{p,1}(S_2)_{MC}}$ &$\frac{\ICE_{p,1}(S_2)_{2nd}}{\ICE_{p,1}(S_2)_{MC}}$  \\
\noalign{\smallskip}\hline\noalign{\smallskip}
 0.9500 & 8.0518 &  6.2481 &  7.2566 &0.7760& 0.9012\\
 0.9600 &   8.8178 &  7.0051&   7.9763& 0.7944& 0.9046\\
0.9700&   9.9471&   8.1096 &  9.0379& 0.8153& 0.9086\\
 0.9800&    11.8197&   9.9749&10.8508& 0.8439 &0.9180\\
0.9900&  16.0617 & 14.1623&  14.9689&0.8817& 0.9320\\
 0.9990 & 47.6943 & 45.0164 & 45.7053& 0.9439& 0.9583\\
0.9999&146.0483& 144.7467& 145.3974& 0.9911& 0.9955\\
  \noalign{\smallskip}\hline
  \end{tabular}}
  \caption{Simulation values $\ICE_{p,1}(S_2)_{MC}$ versus the first-order asymptotic values $\ICE_{p,1}(S_2)_{1st}$  and the second-order asymptotic values $\ICE_{p,1}(S_2)_{2nd}$ with $\alpha=2,a_{12}=0.5$.}
  \label{tab:4}
\end{table}
\begin{table}[t]%
  \centering
  \setlength{\tabcolsep}{2.0mm}{
  \begin{tabular}{cccc cc}%
  \hline\hline\noalign{\smallskip}
  $p$ & $\SICE_{p,1}(S_2)_{MC}$&$\SICE_{p,1}(S_2)_{1st}$  &$\SICE_{p,1}(S_2)_{2nd}$  & $\frac{\SICE_{p,1}(S_2)_{1st}}{\SICE_{p,1}(S_2)_{MC}}$& $\frac{\SICE_{p,1}(S_2)_{2nd}}{\SICE_{p,1}(S_2)_{MC}}$    \\
\noalign{\smallskip}\hline\noalign{\smallskip}
 0.9500 & 4.4381 &  4.0390 &  4.6044&  0.9101& 1.0375\\
 0.9600 &  4.9085&   4.5284&   5.0565& 0.9226& 1.0301\\
0.9700&  5.6135 &  5.2424 &  5.7276 &0.9339& 1.0203\\
 0.9800&    6.7875&   6.4482&   6.8810&0.9500& 1.0138\\
0.9900&  9.4881  & 9.1552  & 9.5187&0.9649& 1.0032\\
 0.9990 & 29.4512&  29.1007 & 29.4465&0.9881& 0.9998\\
0.9999&93.8148 & 93.5710 & 93.7786& 0.9974&0.9996\\
  \noalign{\smallskip}\hline
  \end{tabular}}
  \caption{Simulation values $\SICE_{p,1}(S_2)_{MC}$ versus the first-order asymptotic values $\SICE_{p,1}(S_2)_{1st}$  and the second-order asymptotic values $\SICE_{p,1}(S_2)_{2nd}$ with $\alpha=2,a_{12}=0.5$.}
  \label{tab:5}
\end{table}
Again, Tables \ref{tab:4}-\ref{tab:5} reveal that the second-order asymptotics of  $\ICE_p(S_2)$ and $\CE_p(S_2)$ are closer to simulation values for $p\in[0.95,1)$ and provide better estimates than the first-order asymptotics.
\end{example}

\setcounter{equation}{0}
\section{Numerical illustration}\label{sim}

In this section, we numerically illustrate our asymptotic results with a comprehensive comparison among different families and types of systemic risk measures.
\begin{figure}[htbp]
\vspace{-0.5cm}
  \begin{minipage}[t]{0.33\textwidth}
    \centering
\includegraphics[width=\textwidth]{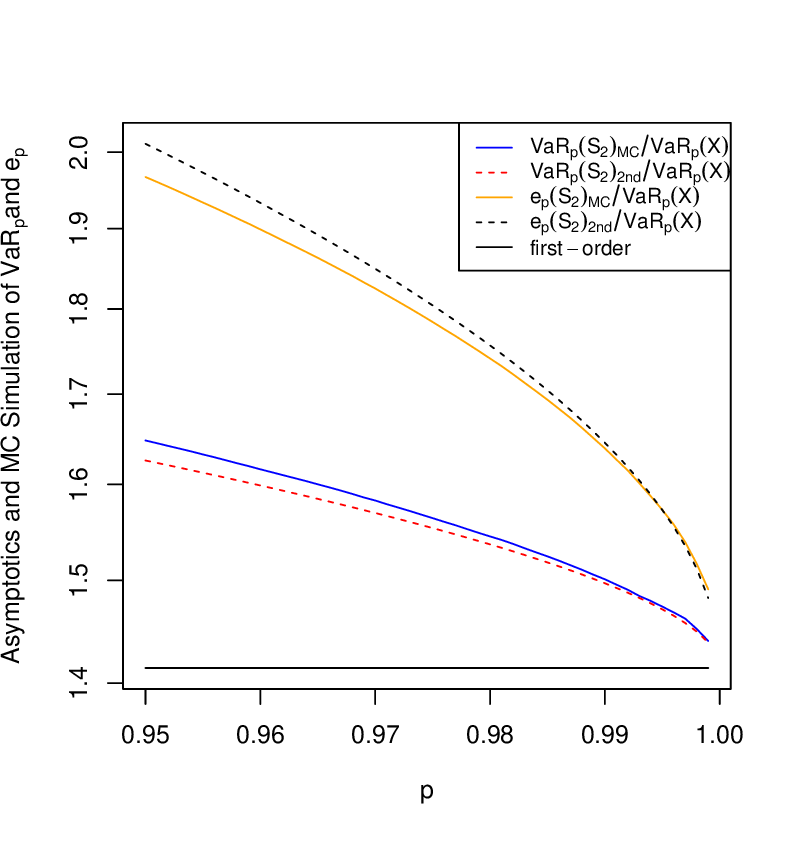}
  \end{minipage}%
  \begin{minipage}[t]{0.33\textwidth}
    \centering \includegraphics[width=\textwidth]{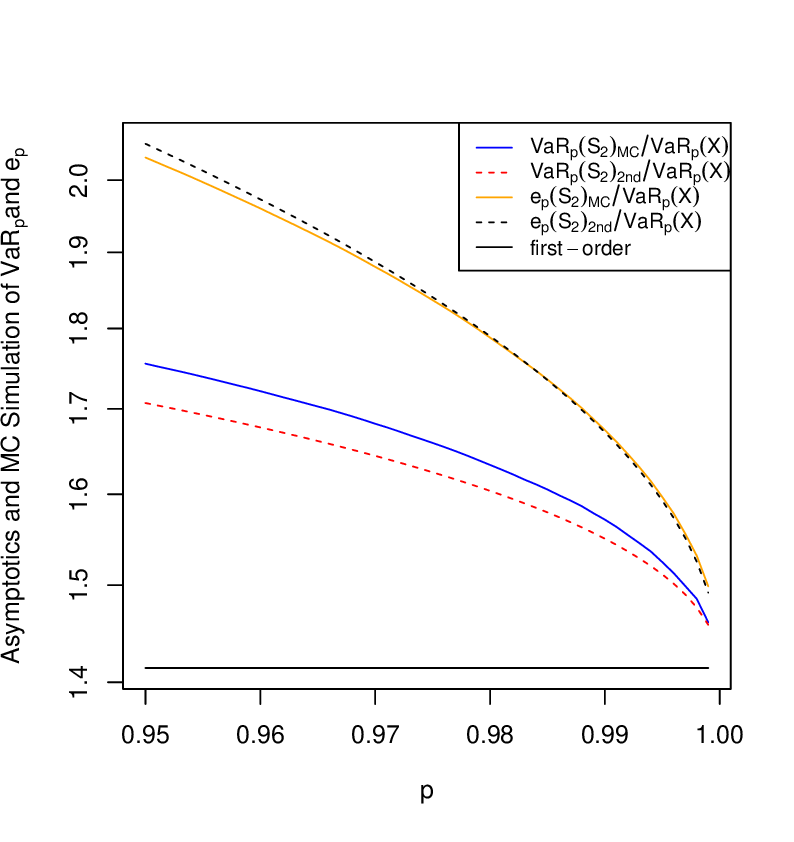}
  \end{minipage}%
\begin{minipage}[t]{0.33\textwidth}
    \centering \includegraphics[width=\textwidth]{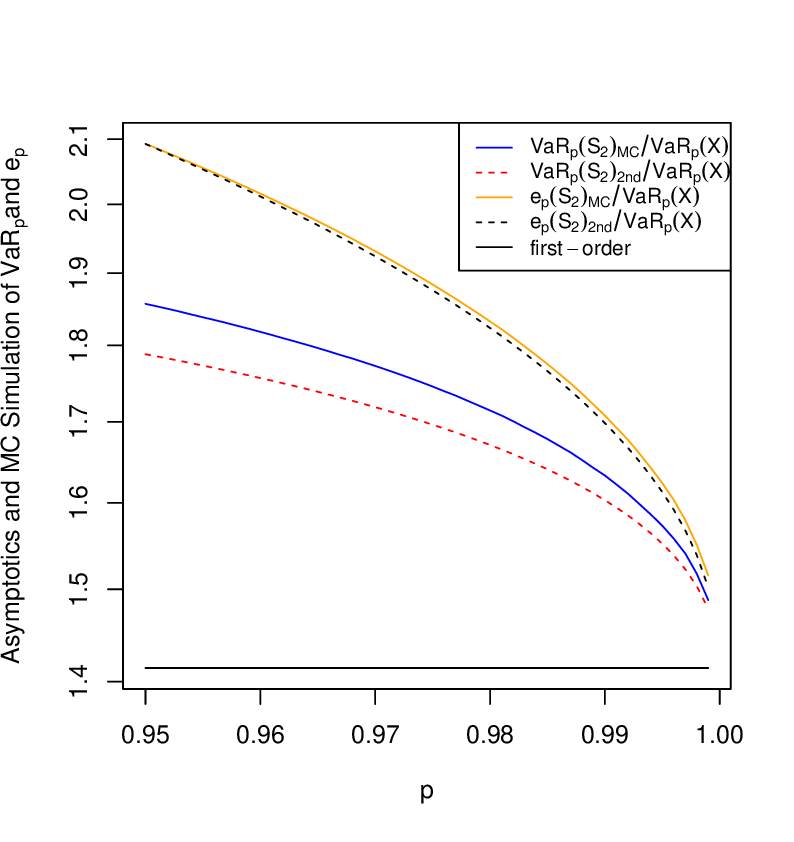}
  \end{minipage}%
\vspace{-0.5cm}  
  \begin{minipage}[t]{0.33\textwidth}
    \centering
\includegraphics[width=\textwidth]{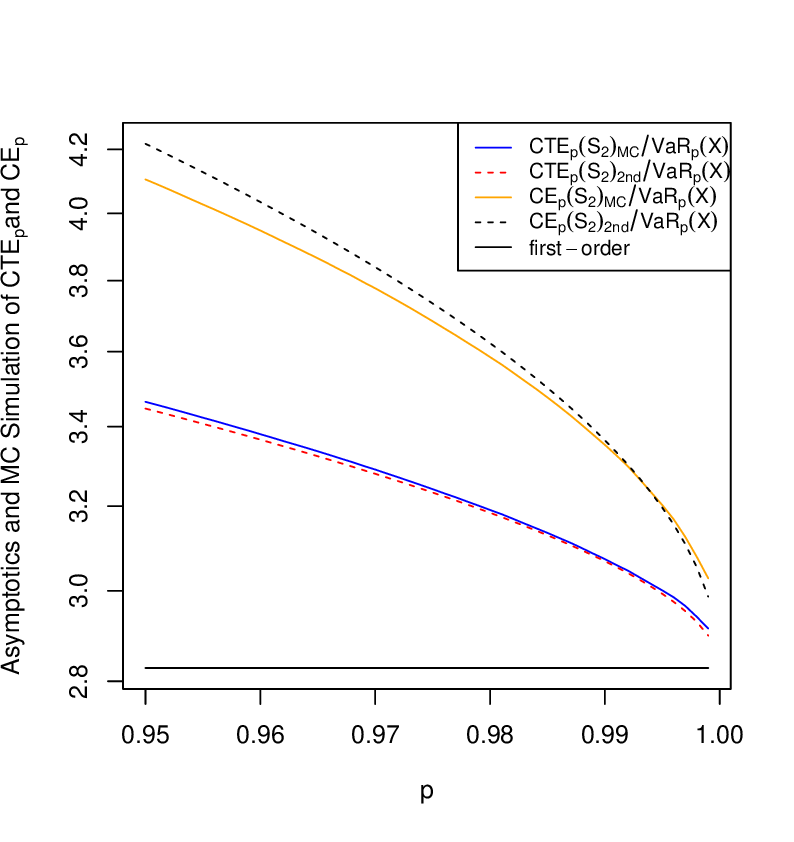}
  \end{minipage}%
  \begin{minipage}[t]{0.33\textwidth}
    \centering
\includegraphics[width=\textwidth]{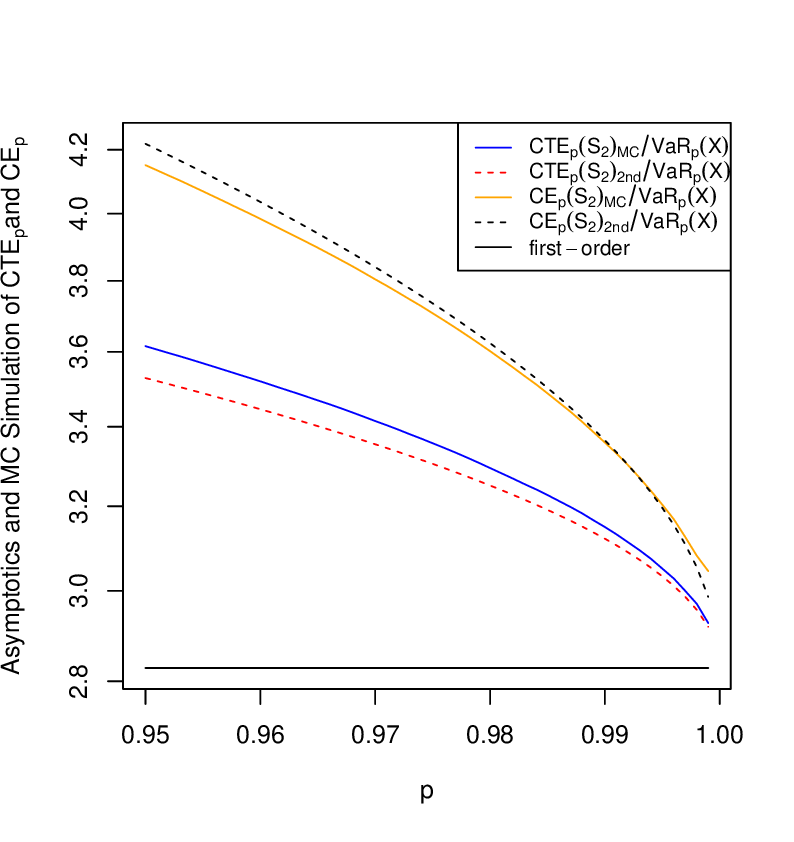}
  \end{minipage}
\begin{minipage}[t]{0.33\textwidth}
    \centering  \includegraphics[width=\textwidth]{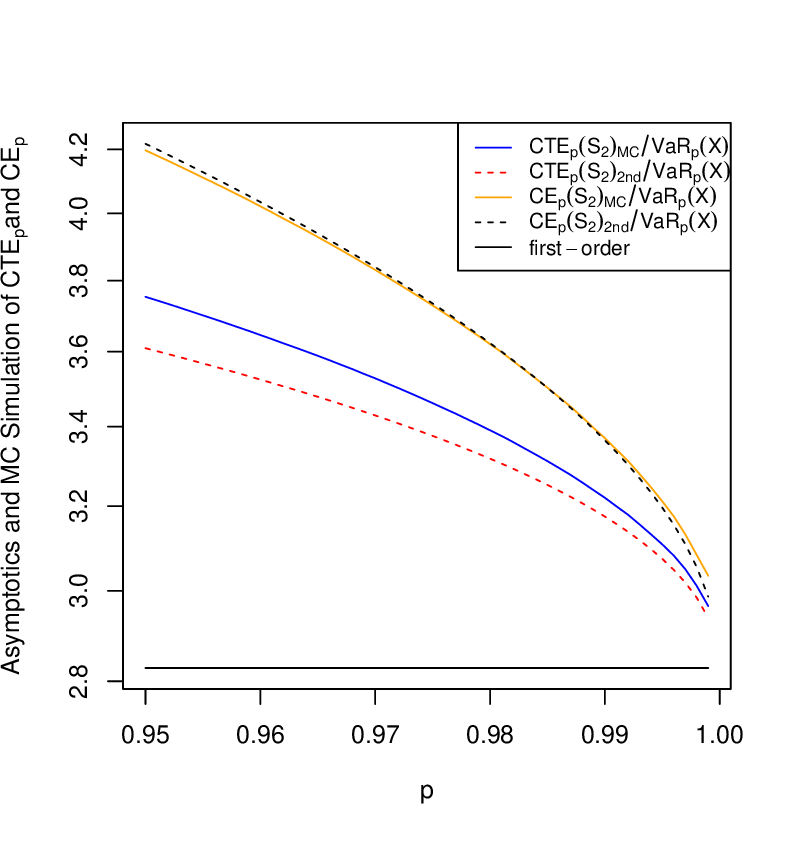} 
    \end{minipage}%
    \vspace{-0.5cm}
  \begin{minipage}[t]{0.33\textwidth}
    \centering
    \includegraphics[width=\textwidth]{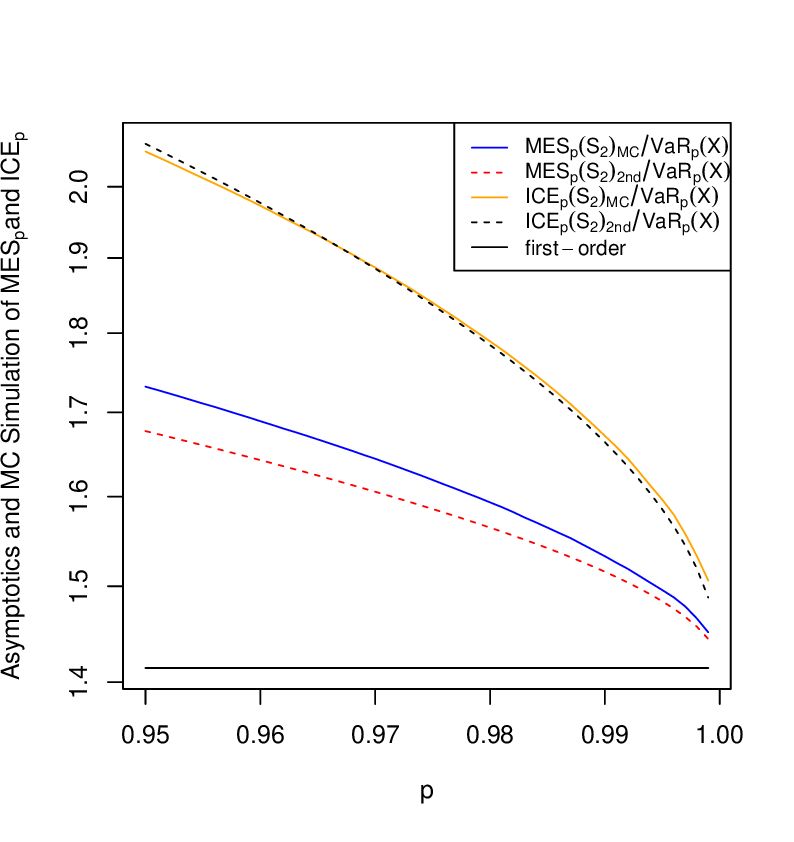}
  \end{minipage}%
  \begin{minipage}[t]{0.33\textwidth}
    \centering
    \includegraphics[width=\textwidth]{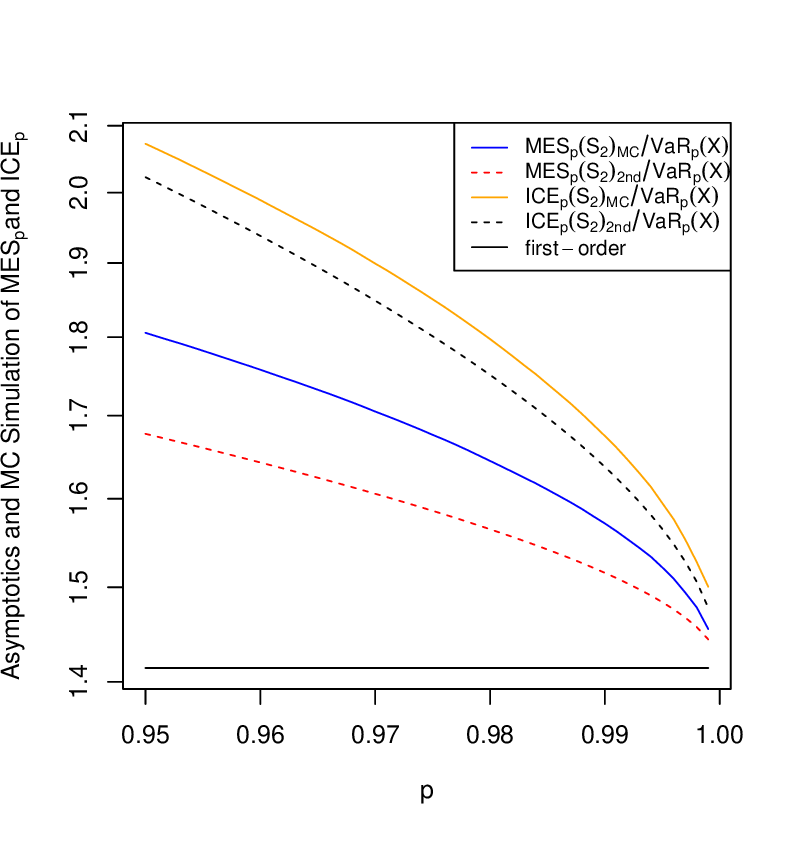}
  \end{minipage}
\begin{minipage}[t]{0.33\textwidth}
    \centering
    \includegraphics[width=\textwidth]{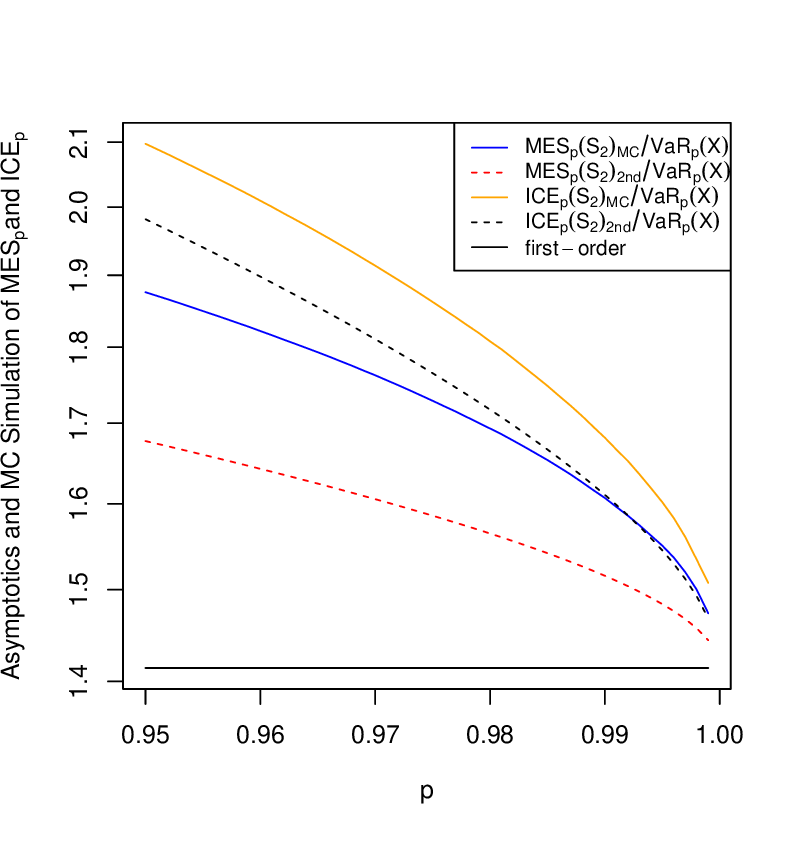}
  \end{minipage}%
  \vspace{-0.5cm}
  \begin{minipage}[t]{0.33\textwidth}
    \centering
    \includegraphics[width=\textwidth]{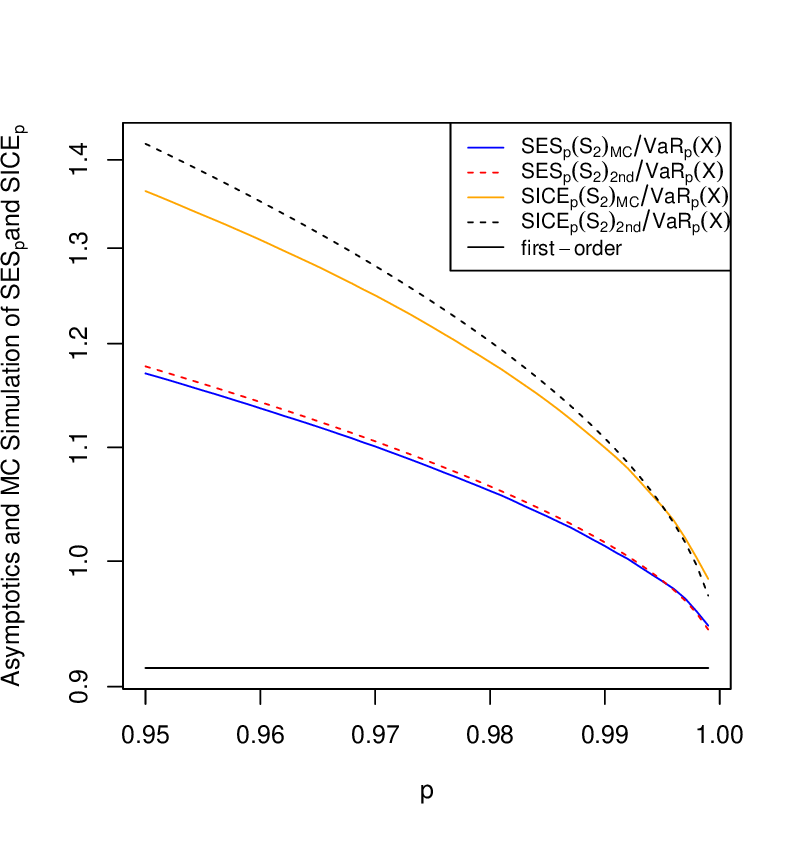}
  \end{minipage}%
  \begin{minipage}[t]{0.33\textwidth}
    \centering
    \includegraphics[width=\textwidth]{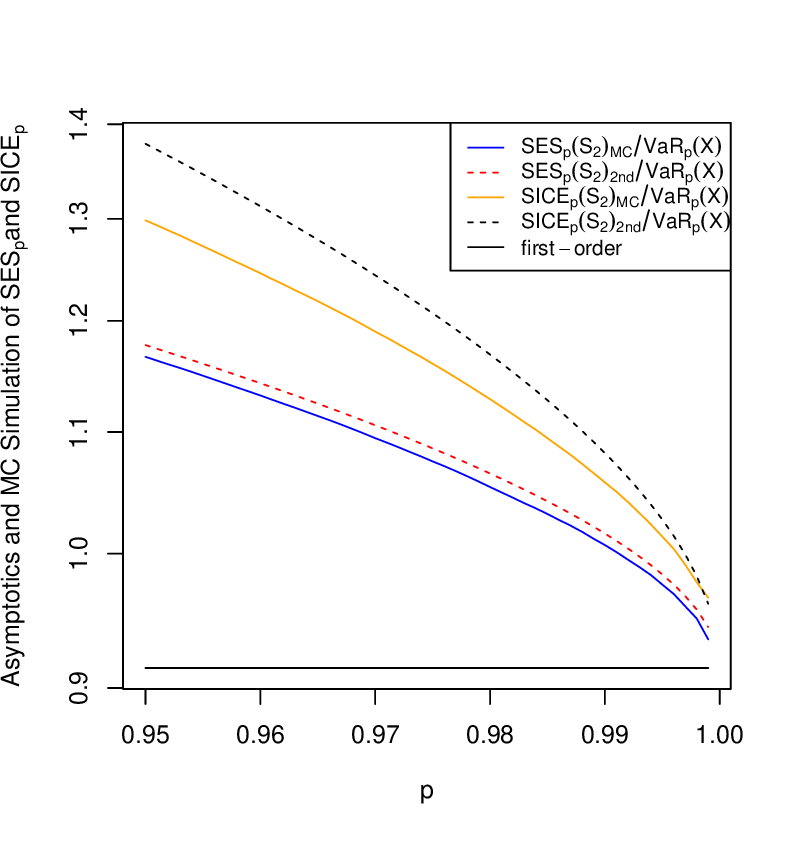}
  \end{minipage}
\begin{minipage}[t]{0.33\textwidth}
    \centering
    \includegraphics[width=\textwidth]{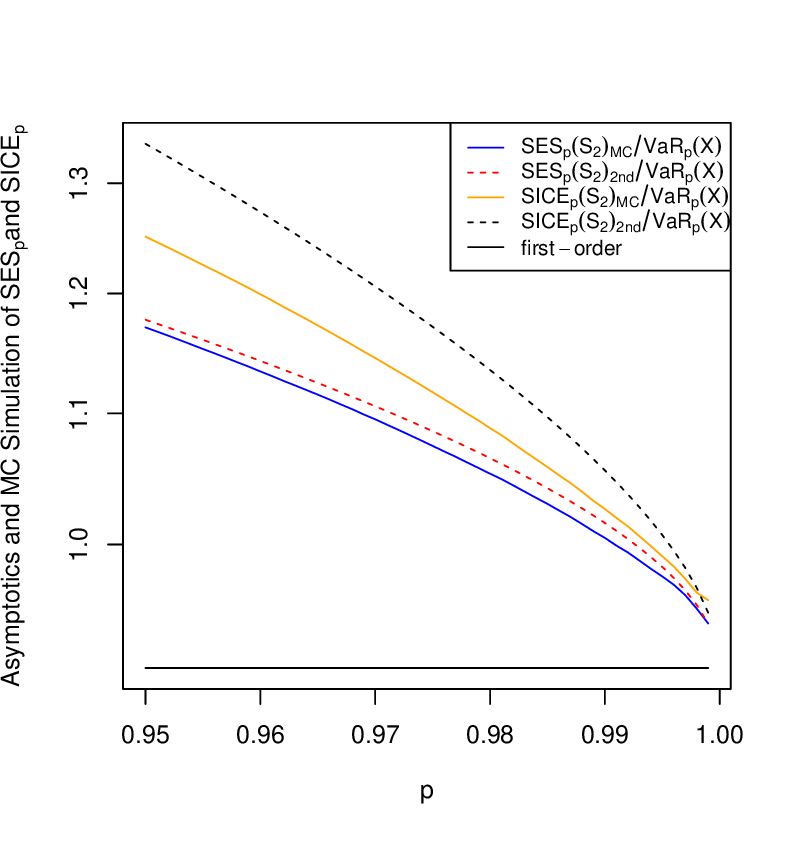}
  \end{minipage}%
  \vspace{-0.3cm}
\caption{\scriptsize Comparison of the simulation values (MC), first-order and second-order asymptotic values of  $\VaR_p(S_2)/\VaR_p(X)$ and $\e_p(S_2)/\VaR_p(X)$, $\CTE_p(S_2)/\VaR_p(X)$ and $\CE_p(S_2)/\VaR_p(X)$, $\MES_p(S_2)/\VaR_p(X)$ and $\ICE_p(S_2)/\VaR_p(X)$, as well as $\SES_p(S_2)/\VaR_p(X)$ and $\SICE_p(S_2)/\VaR_p(X)$. We use the Pareto distribution and the Sarmanov distribution with $a_{12}=-1$ for the left panel, $a_{12}=0$ for the middle panel and $a_{12}=1$ for the right panel. Since the first-order asymptotic values of all risk measures coincide within each plot, they are displayed as a single solid black flat line.}\label{f:2}
\end{figure}

\begin{example}
Under the setting of Example \ref{ex:pareto}  with $\alpha=2$, $k=1$ and $a_{12}=-1,0,1$,  by Theorems \ref{the:3.1}, \ref{the:3.2},  \ref{the:4.1} and \ref{the:4.2}, we have that the first-order asymptotics (as $p\uparrow1$) of $\VaR$, $\e$, $\MES$ and $\ICE$ are equivalent (the ratio to $F^{\leftarrow}(p)$ is $2^{1/2}$). Those of $\CTE$ and $\CE$ are equivalent (the ratio to $F^{\leftarrow}(p)$ is $2^{3/2}$). Those of $\SES$ and $\SICE$ are equivalent (the ratio to $F^{\leftarrow}(p)$ is $2^{1/2}-1/2$). In Figure \ref{f:2}, we have the following observations: 
\begin{itemize}
\item  The second-order asymptotics of systemic risk measures are closer to simulation values than the first-order asymptotics as  $p\uparrow 1$. Except for the comparison between SES and SICE, the expectile-based systemic risk measures usually achieve a higher asymptotic accuracy than their $\VaR$-based alternatives. Expectile-based risk measures are typically calibrated at higher confidence levels than VaR-based risk measures (e.g., $p = 0.99855$, which is comparable to $\VaR_{0.99}$; see \cite{bellini2017risk}). Our results further demonstrate that the second-order asymptotics of the expectile-based systemic risk measures maintain a satisfactory accuracy even at moderately lower levels and improve progressively with increasing levels, thereby supporting their potential practical applicability for regulatory use.

\item {The second-order asymptotics of $\VaR$, $\CTE$ and $\MES$ are closer to the simulation values as $a_{12}$ (i.e., the dependence coefficient) decreases. The second-order asymptotics of expectile and $\CE$ are closer to the simulation values as $a_{12}$ increases.} The numerical evidence does not clearly indicate that subadditivity leads to this asymptotic performance, as expectile ($p \uparrow 1$) and CTE are subadditive risk measures and VaR and CE are not (see \cite{daouia2020tail}).

\item  {The second-order asymptotics of $\ICE$ are closer to the simulation values as $a_{12}$ decreases and those of $\ICE$ are closer to the simulation values than those of $\MES$.}

\item  {The second-order asymptotics of $\SICE$ are  closer to the simulation values as $a_{12}$ decreases and those of $\SES$ are closer to the simulation values than those of $\SICE$.}
\end{itemize}

In addition, we check the asymptotic performance under different dependence structures ranging from independence to comonotonicity (positive dependence) in Table \ref{tab:9}. While $a_{12} = 0$ represents independence, the comonotonicity cannot be presented by a specific value of $a_{12}$. Table \ref{tab:9} shows that the second-order asymptotic results perform well for both independence and comonotonicity, but it is not conclusively established whether a positive dependence would affect asymptotic accuracy. The asymptotic approximation formula under countermonotonicity is currently unavailable and would be an interesting problem left for future research.

\begin{table}[t]%
  \centering
  \setlength{\tabcolsep}{1.0mm}{
  \begin{tabular}{ccccccc }%
  \hline\hline\noalign{\smallskip}
\multicolumn{2}{c}{dependence level} &$\MES_{p,1}(S_2)$ & $\ICE_{p,1}(S_2)$&$\SES_{p,1}(S_2)$ &$\SICE_{p,1}(S_2)$\\
 \noalign{\smallskip}\hline
\multirow{3}{*}{comonotonicity (i.e., $S_2 = X+X$)} 
&$\text{Asy}(X+X)$  & 18.9990 &20.4990&9.9995&10.4995\\
 &$\text{MC}(X+X)$  & 19.0204 & 20.9517&10.0209&10.9904 \\
&$\frac{\text{Asy}(X+X)}{\text{MC}(X+X)}$&  0.9989 &0.9784&0.9979 & 0.9553\\
 \noalign{\smallskip}\hline
 \multirow{3}{*}{independence (i.e., $a_{12}=0$)} 
 &$\text{Asy}(S_2)$& 13.6414& 14.7364&9.1417 &9.7367\\
 &$\text{MC}(S_2)$ &   14.1833& 15.1080&9.1003&9.5411\\
&$\frac{\text{Asy}(S_2)}{\text{MC}(S_2)}$& 0.9618 & 0.9754&1.0045& 1.0205\\
  \noalign{\smallskip}\hline
  \end{tabular}}
  \caption{ Comparison of second-order asymptotic approximations (Asy) and Monte Carlo simulations (MC) for $\MES_{p,1}(S_2)$, $\ICE_{p,1}(S_2)$, $\SES_{p,1}(S_2)$ and $\SICE_{p,1}(S_2)$ at $p=0.99$ under comonotonicity and independence. We use the same setting of Figure \ref{f:2}.}
  \label{tab:9}
\end{table}

\end{example}

Based on the above example, inspired by \cite{koenker1993when} and \cite{bellini2014generalized}, we obtain the following asymptotic results of the Pareto-like distribution. 

\begin{proposition}\label{pro:5.1}
Let $X_1,\dots,X_n$ be nonnegative random variables with a common marginal distribution $F$ satisfying $\overline{F}(\cdot) \in \RV_{-\alpha}$ with $\alpha>1$ (e.g., a Pareto distribution). Suppose that $(X_1,\ldots,X_n)$ follows an $n$-dimensional Sarmanov distribution given by \eqref{eq:sarmanov} and $\lim\limits_{t\rightarrow\infty}\phi_i(t)=d_i\in \R, ~\phi_i(\cdot)-d_i\in \RV_{\rho_i}$ with $\rho_i\leq 0$ for all $i=1,\ldots,n$. Then, as $p\uparrow1$, we have the following asymptotic relationship between VaR-based and expectile-based risk measures: 
\begin{align}\label{eq:5.1}
   \lim_{p\uparrow1}\frac{\e_p(S_n)}{\VaR_p(S_n)}=\lim_{p\uparrow1}\frac{\CE_p(S_n)}{\CTE_p(S_n)}=\lim_{p\uparrow1}\frac{\ICE_p(S_n)}{\MES_p(S_n)} = \lim_{p\uparrow1}\frac{\SICE_p(S_n)}{\SES_p(S_n)}=(\alpha-1)^{-1/\alpha}.
\end{align}
Hence, we have: 
\begin{enumerate}
    \item If $\alpha=2$, then VaR-based risk measures are asymptotically equal to expectile-based risk measures as $p \uparrow 1$ (i.e., $\VaR_p(S_n)\sim\e_p(S_n),~ \CTE_p(S_n)\sim\CE_p(S_n),~ \MES_p(S_n) \sim \ICE_p(S_n),~ \SES_p(S_n)\sim \SICE_p(S_n)$ as $p\uparrow1$).
    \item  If $\alpha<2$, then 
    VaR-based risk measures are asymptotically less than expectile-based risk measures as $p \uparrow 1$ (i.e., $\VaR_p(S_n)<\e_p(S_n),~ \CTE_p(S_n) < \CE_p(S_n),~ \MES_p(S_n) < \ICE_p(S_n),~ \SES_p(S_n) < \SICE_p(S_n)$ as $p \uparrow 1$).
    \item If $\alpha>2$,  then 
    VaR-based risk measures are asymptotically greater than expectile-based risk measures as $p \uparrow 1$ (i.e., $\VaR_p(S_n)>\e_p(S_n),~ \CTE_p(S_n) > \CE_p(S_n),~ \MES_p(S_n) >\ICE_p(S_n),~ \SES_p(S_n) >\SICE_p(S_n)$ as $p\uparrow1$). 
\end{enumerate}
  
\end{proposition}
\begin{proof}
According to Theorems \ref{the:3.1}-\ref{the:3.2}, as $p\uparrow 1$, it follows that the first-order asymptotics of VaR-based risk measures are 
$$
\begin{aligned}
&\VaR_p(S_n)\sim n^{1/\alpha}F^{\leftarrow}(p),
~~~~~~~~~~~&&\CTE_p(S_n)\sim\frac{\alpha n^{1/\alpha}}{\alpha-1}F^{\leftarrow}(p),\\&\MES_p(S_n)\sim\frac{\alpha n^{1/\alpha}F^\leftarrow(p)}{n(\alpha-1)},
~~~~&&\SES_p(S_n)\sim\frac{\alpha n^{1/\alpha}F^\leftarrow(p)}{n(\alpha-1)}-\frac{F^\leftarrow(p)}{n}.
\end{aligned}
$$
By Theorems \ref{the:4.1}-\ref{the:4.2}, as $p\uparrow 1$, it follows that the first-order asymptotics of expectile-based risk measures are
$$
\begin{aligned}
&\e_p(S_n)\sim\frac{n^{1/\alpha}F^{\leftarrow}(p)}{(\alpha-1)^{1/\alpha}}, ~~~~~~~~~~~~&&\CE_p(S_n)\sim\frac{\alpha n^{1/\alpha}F^{\leftarrow}(p)}{(\alpha-1)^{1/\alpha+1}},\\
&\ICE_{p,m}(S_n)\sim\frac{\alpha n^{1/\alpha} F^\leftarrow(p) }{n(\alpha-1)^{1/\alpha+1}},~~~~ &&\SICE_{p,m}(S_n)\sim\frac{\alpha n^{1/\alpha} F^\leftarrow(p) }{n(\alpha-1)^{1/\alpha+1}}- \frac{F^\leftarrow(p)}{n(\alpha-1)^{1/\alpha}}.
\end{aligned}
$$
Thus, \eqref{eq:5.1} holds. It is easy to see that $(\alpha-1)^{-1/\alpha} <  1$ if $\alpha>2$ and $(\alpha-1)^{-1/\alpha} > 1$ if $\alpha<2$. This completes the proof.
\end{proof}

\begin{example} 
Under the conditions of Example \ref{ex:burr} with $\alpha=1.5$, $\beta=-0.5$ or $-2$ under 
a Sarmanov distribution in \eqref{eq:sarmanov} with $\phi_i(\cdot)=1-2F(\cdot), ~i=1,2$ and $a_{12}=-0.5$, by Theorems \ref{the:3.1}, \ref{the:3.2},  \ref{the:4.1} and \ref{the:4.2}, we have the following facts.
\begin{itemize}
    \item Figures \ref{f:3}-\ref{f:6} show the second-order asymptotics and simulation values of Theorems \ref{the:3.1}, \ref{the:3.2},  \ref{the:4.1} and \ref{the:4.2}. Again, the second-order asymptotics can approximate the simulation values as $p\uparrow 1$ better than the first-order asymptotics. 
    
    \item  {Figures \ref{f:3}-\ref{f:6} reveal that almost all the second-order asymptotics of $\VaR$, $\e$, $\CTE$, $\CE$, $\MES$, $\ICE$, $\SES$ and $\SICE$ with $\beta=-0.5$ are more accurate than those with $\beta=-2$.} 

\end{itemize}

\end{example}

\begin{figure}[htpb]
  \begin{minipage}[t]{0.5\textwidth}
    \centering
    \includegraphics[width=\textwidth]{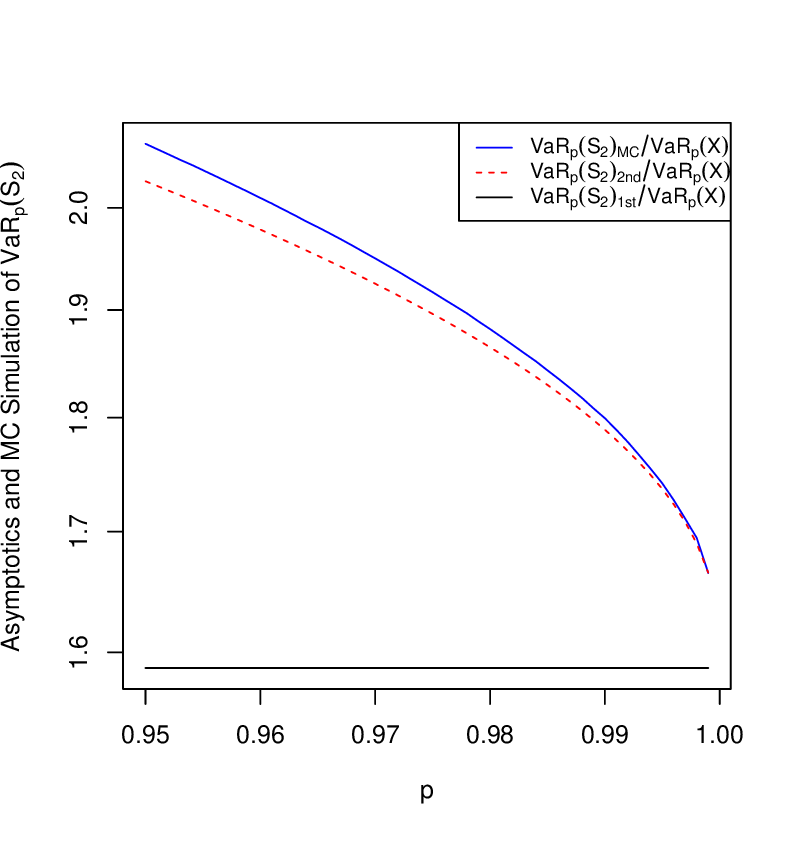}
  \end{minipage}%
  \begin{minipage}[t]{0.5\textwidth}
    \centering
    \includegraphics[width=\textwidth]{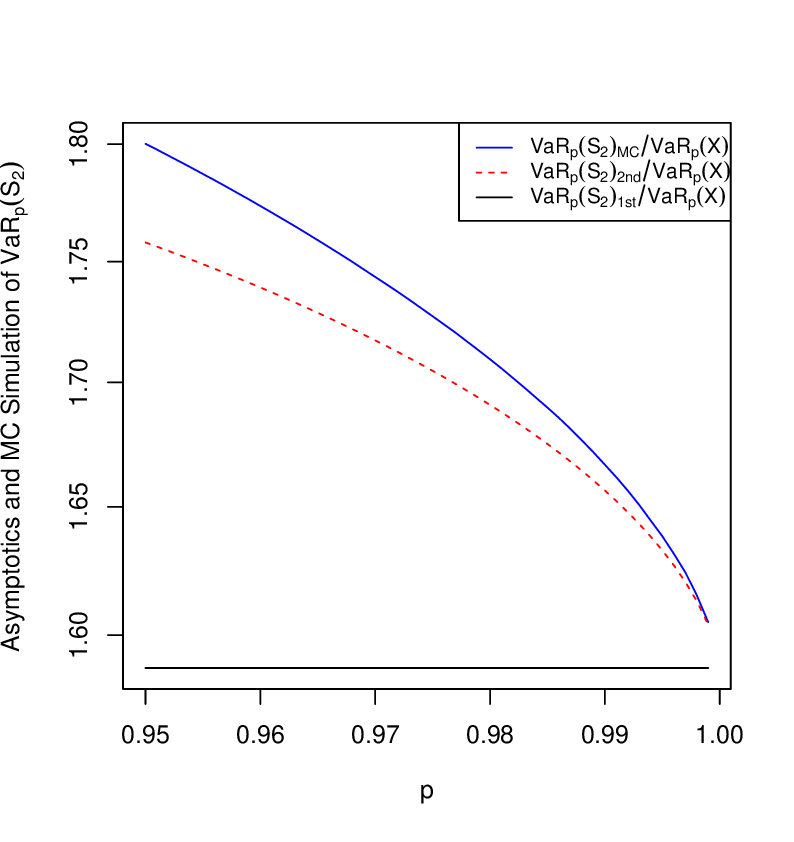}
  \end{minipage}
  \begin{minipage}[t]{0.5\textwidth}
    \centering
    \includegraphics[width=\textwidth]{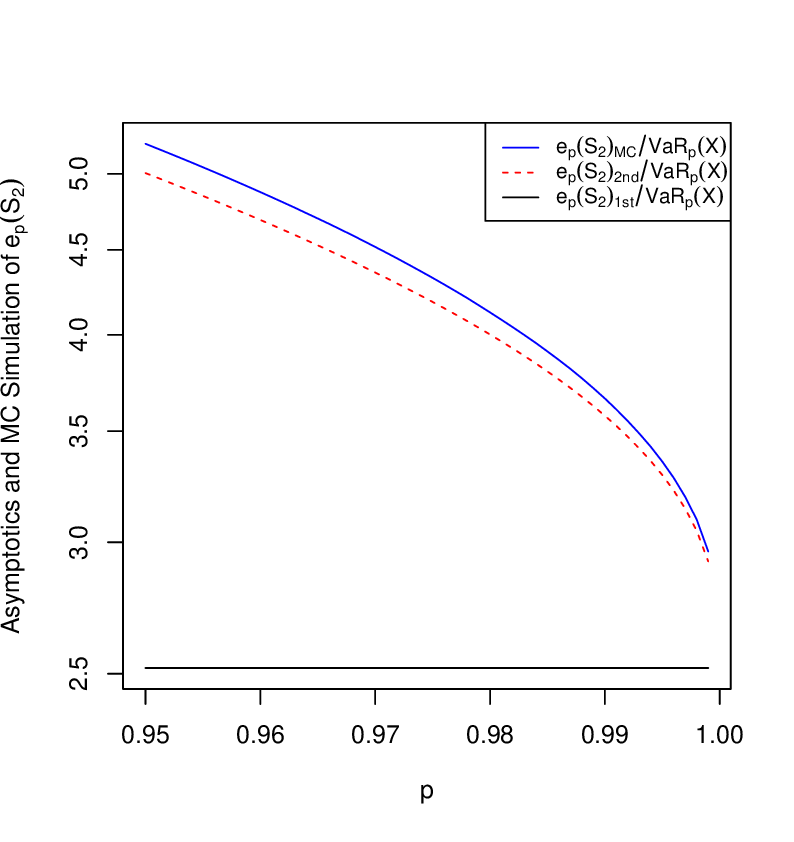}
  \end{minipage}%
  \begin{minipage}[t]{0.5\textwidth}
    \centering
    \includegraphics[width=\textwidth]{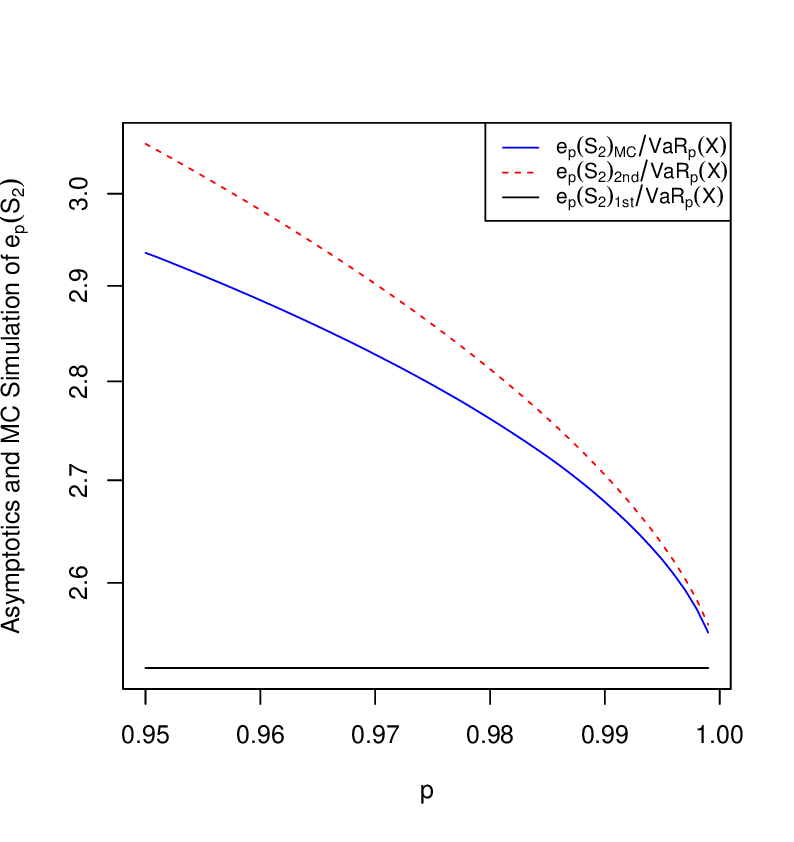}
  \end{minipage}
   \caption{Comparison of the simulation values (MC), first-order and second-order asymptotic values of $\VaR_p(S_2)/\VaR_p(X)$ and $\e_p(S_2)/\VaR_p(X)$. We use the Burr distribution with $\beta= - 0.5$ for the left panel and $\beta= -2$ for the right panel.}\label{f:3}
\end{figure}

\begin{figure}[thpb]
  \begin{minipage}[t]{0.5\textwidth}
    \centering
    \includegraphics[width=\textwidth]{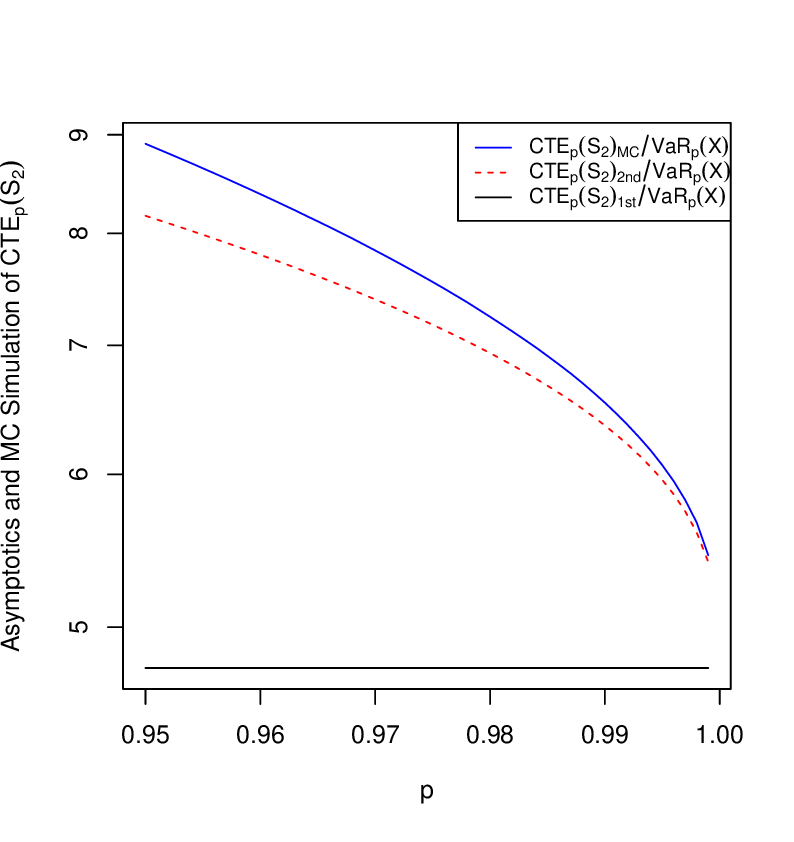}
  \end{minipage}%
  \begin{minipage}[t]{0.5\textwidth}
    \centering
    \includegraphics[width=\textwidth]{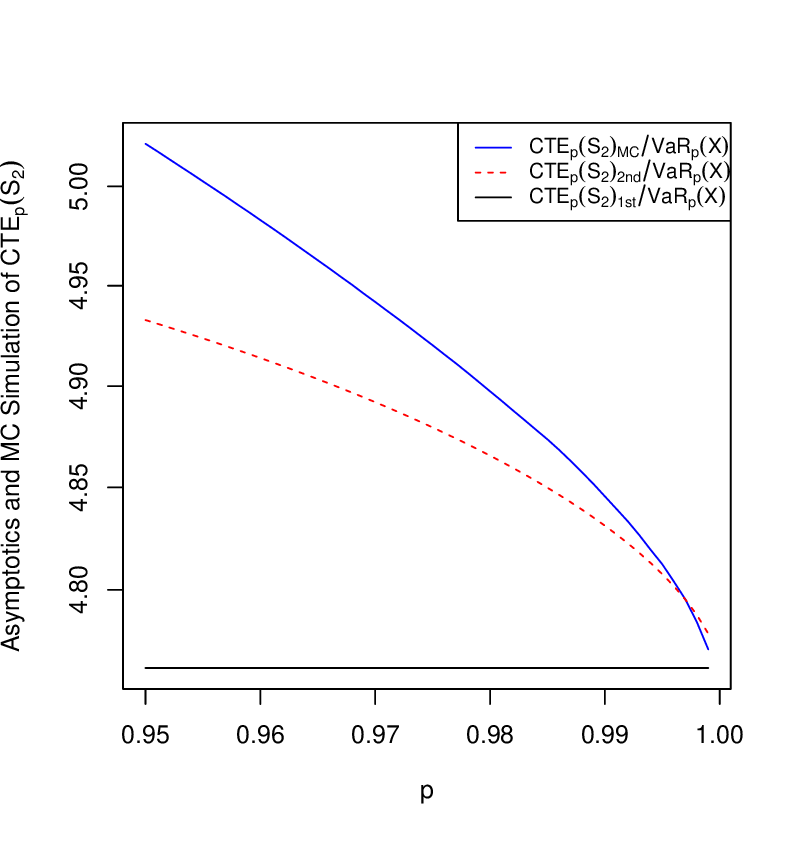}
  \end{minipage}
  \begin{minipage}[t]{0.5\textwidth}
    \centering
    \includegraphics[width=\textwidth]{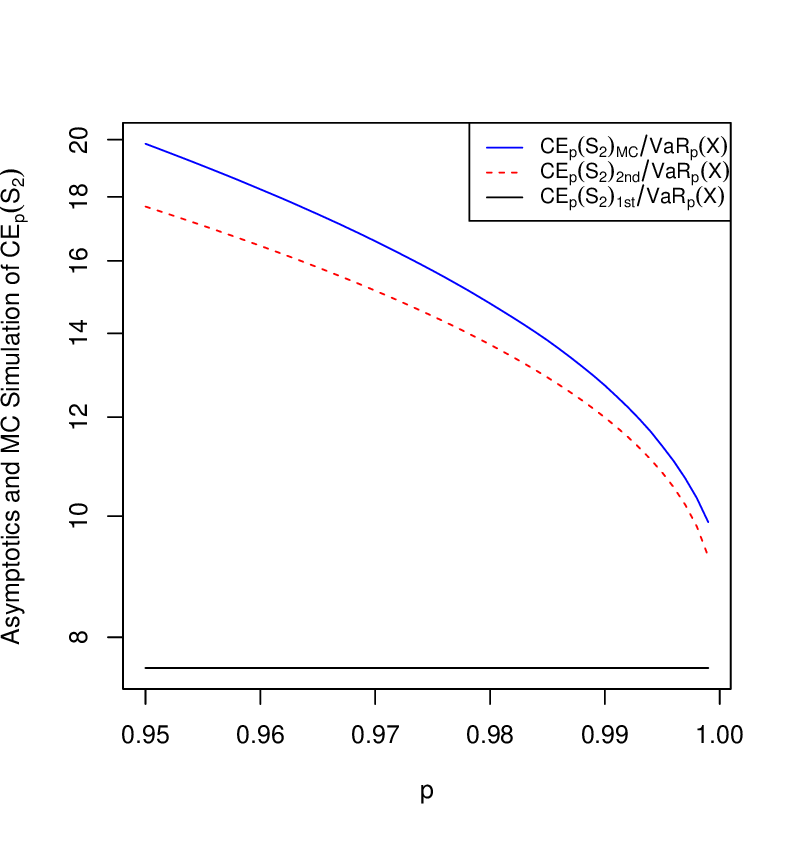}
  \end{minipage}%
  \begin{minipage}[t]{0.5\textwidth}
    \centering
    \includegraphics[width=\textwidth]{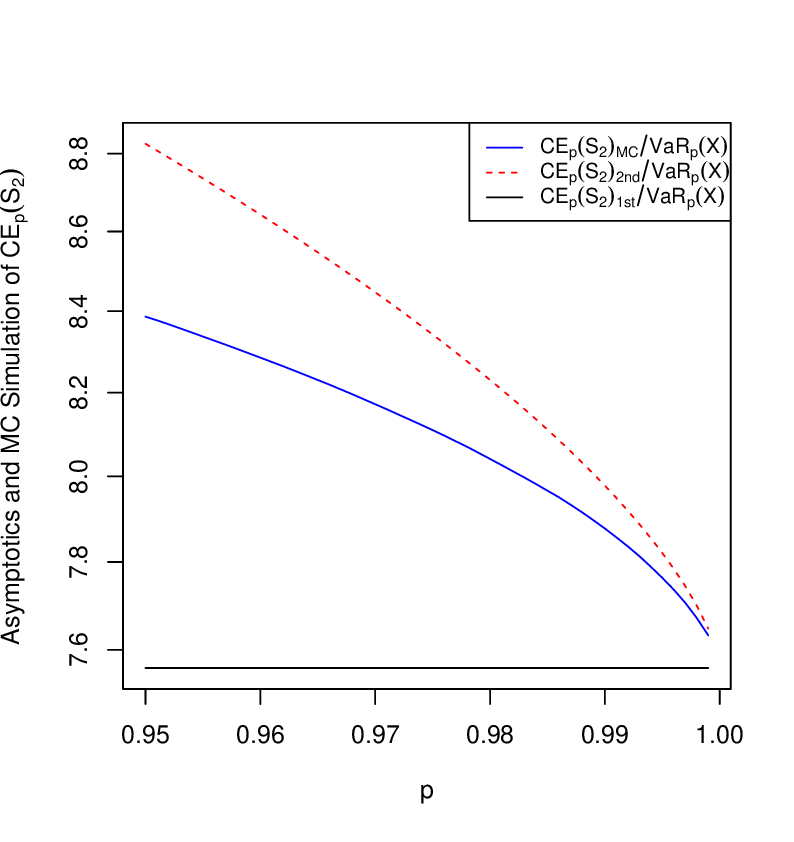}
  \end{minipage}
   \caption{Comparison of the simulation values (MC), first-order and second-order asymptotic values of  $\CTE_p(S_2)/\VaR_p(X)$ and $\CE_p(S_2)/\VaR_p(X)$. We use the Burr distribution with $\beta= - 0.5$ for the left panel and  $\beta= - 2$ for the right panel.}\label{f:4}
\end{figure}

\begin{figure}[thpb]
  \begin{minipage}[t]{0.5\textwidth}
    \centering
    \includegraphics[width=\textwidth]{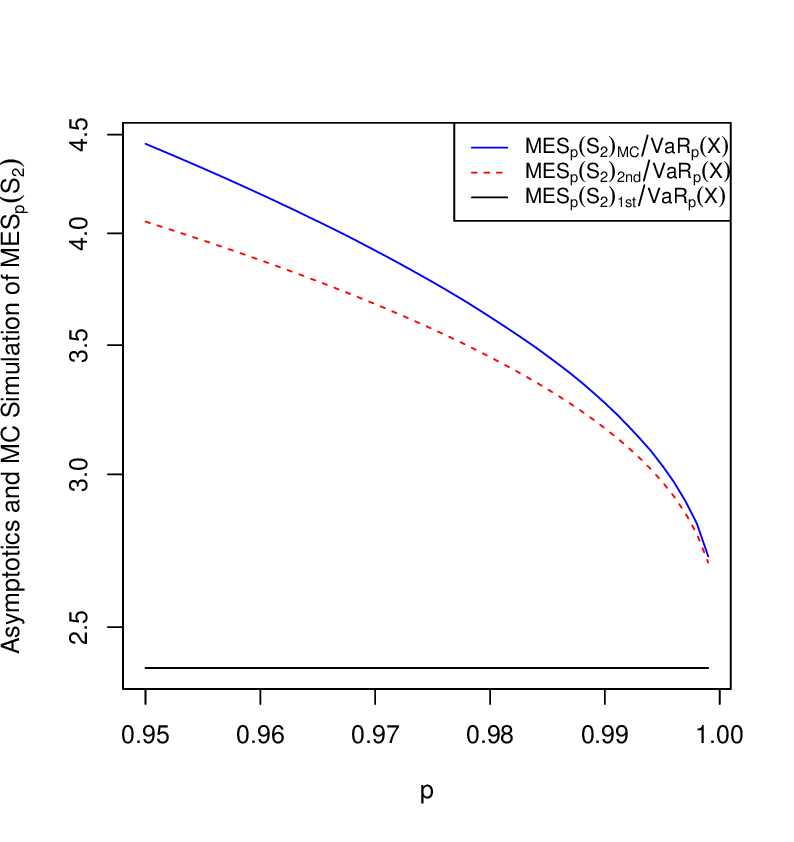}
  \end{minipage}%
  \begin{minipage}[t]{0.5\textwidth}
    \centering
    \includegraphics[width=\textwidth]{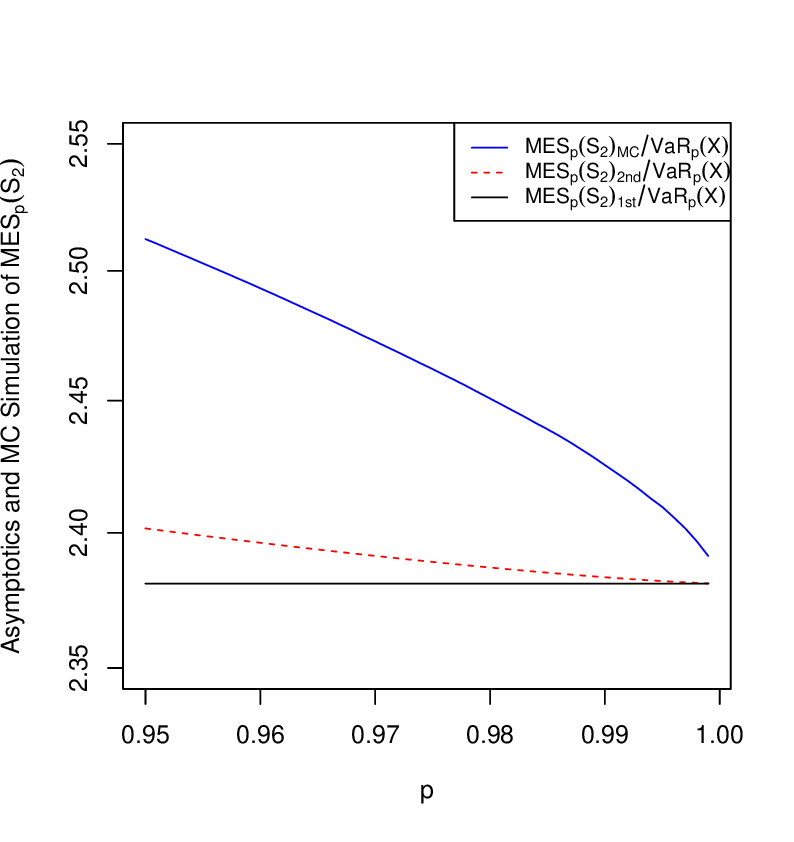}
  \end{minipage}
  \begin{minipage}[t]{0.5\textwidth}
    \centering
    \includegraphics[width=\textwidth]{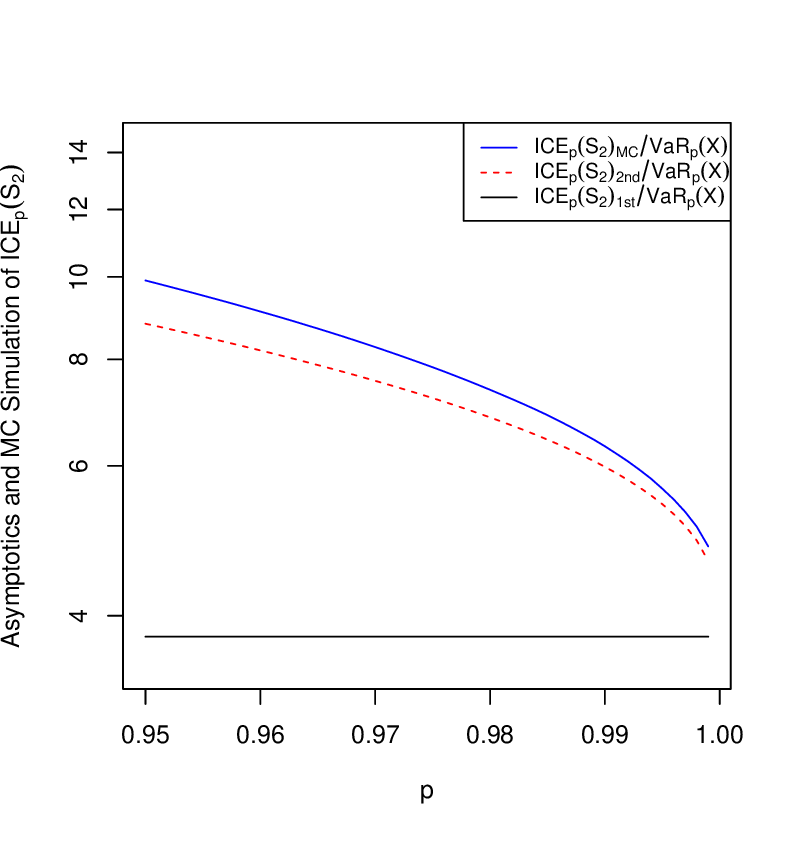}
  \end{minipage}%
  \begin{minipage}[t]{0.5\textwidth}
    \centering
    \includegraphics[width=\textwidth]{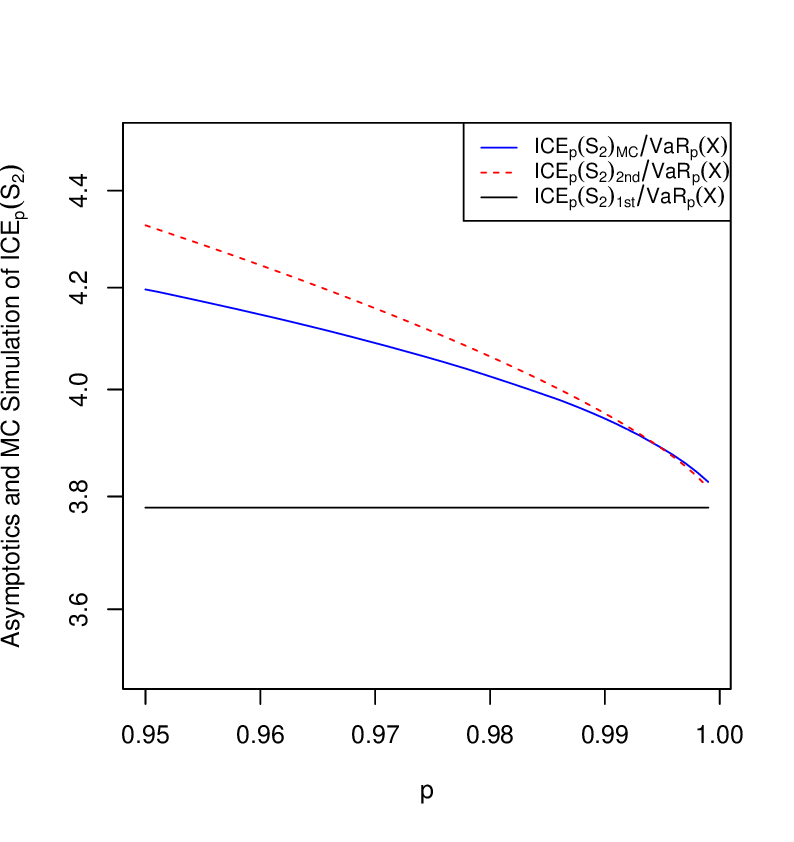}
  \end{minipage}
 \caption{Comparison of the simulation values (MC), first-order and second-order asymptotic values of  $\MES_p(S_2)/\VaR_p(X)$ and $\ICE_p(S_2)/\VaR_p(X)$. We use the Burr distribution with $\beta= - 0.5$ for the left panel and  $\beta= - 2$ for the right panel.}\label{f:5}
\end{figure}

\begin{figure}[thpb]
  \begin{minipage}[t]{0.5\textwidth}
    \centering
    \includegraphics[width=\textwidth]{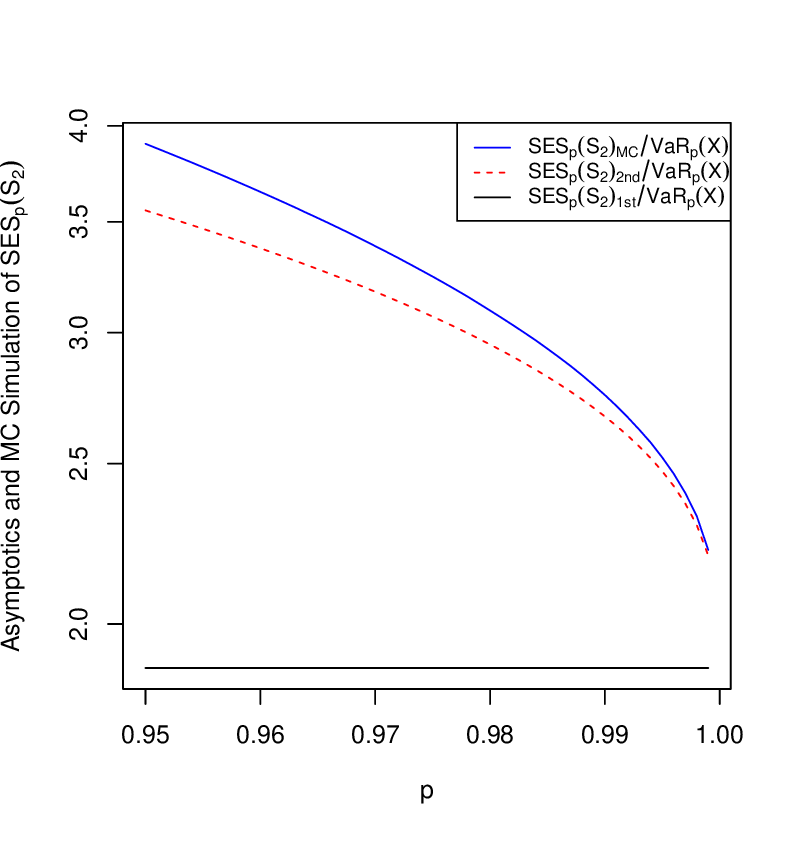}
  \end{minipage}%
  \begin{minipage}[t]{0.5\textwidth}
    \centering
    \includegraphics[width=\textwidth]{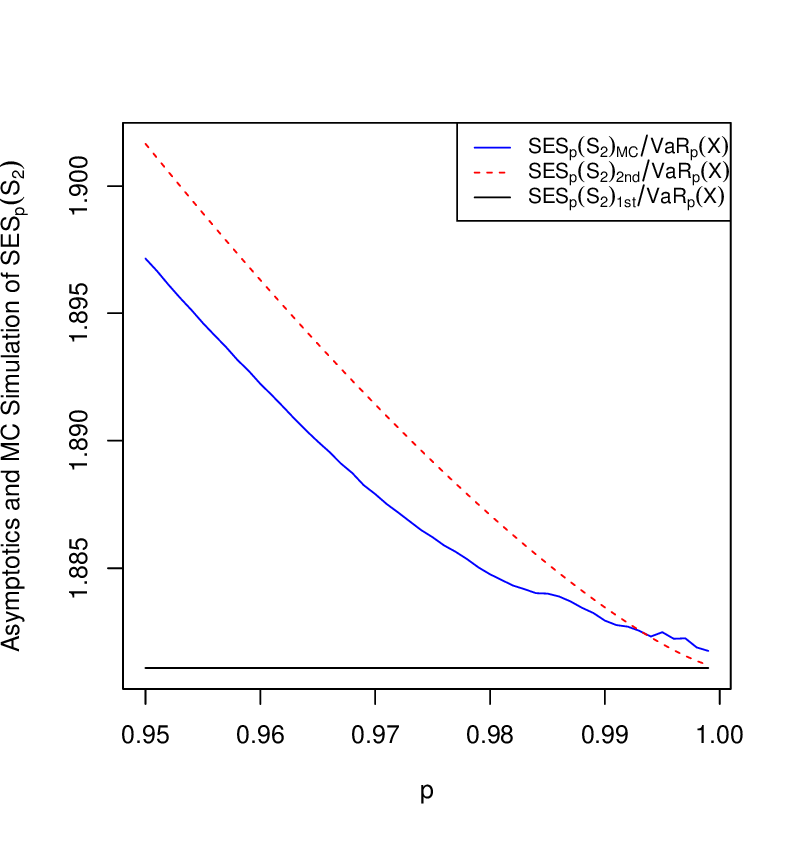}
  \end{minipage}
  \begin{minipage}[t]{0.5\textwidth}
    \centering

    \includegraphics[width=\textwidth]{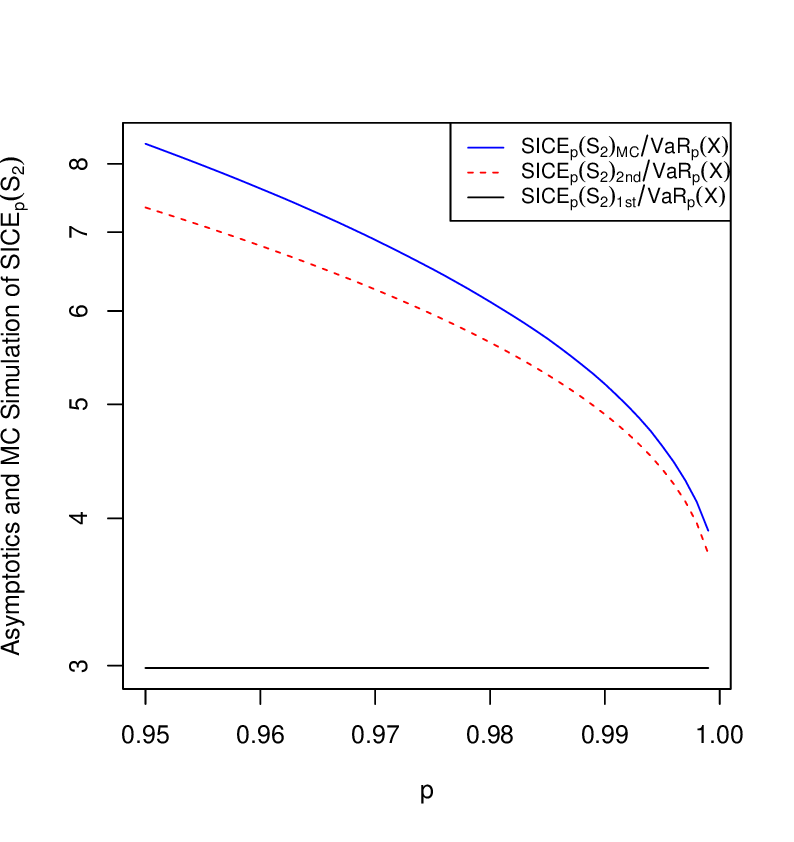}
  \end{minipage}%
  \begin{minipage}[t]{0.5\textwidth}
    \centering
    \includegraphics[width=\textwidth]{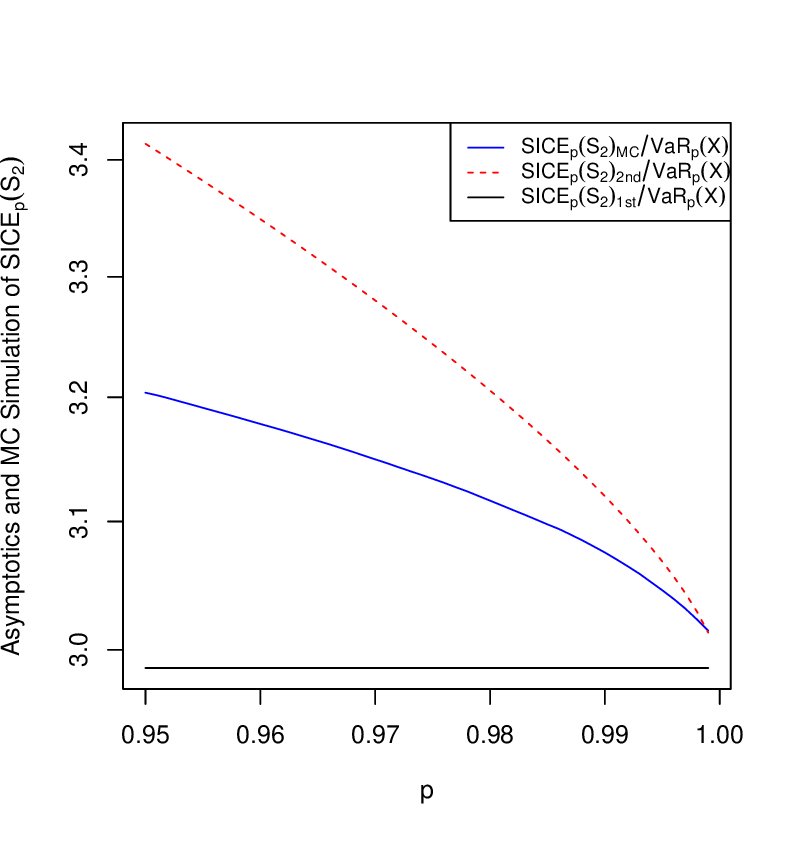}
  \end{minipage}
  \caption{Comparison of the simulation values (MC), first-order and second-order asymptotic values of  $\SES_p(S_2)/\VaR_p(X)$ and $\SICE_p(S_2)/\VaR_p(X)$. We use the Burr distribution with $\beta= - 0.5$ for the left panel and  $\beta= - 2$ for the right panel.}\label{f:6}
\end{figure}

\section{Application}\label{sec:app}

 {The concept of the \textit{diversification benefit} (see \cite{mcneil2015quantitative}) represents the retained capital gained by collectively managing all risks within a portfolio, in contrast to addressing each risk individually. Formally, for a fixed threshold of $0<p<1$, the diversification benefit is defined by}
\begin{align}\label{eq:D}
   D_p^{\rho}(S_n)=1-\frac{\rho_p(S_n)-\E[S_n]}{\sum\limits_{i=1}^n(\rho_p(X_i)-\E[X_i])},
\end{align}
where $\rho_p$ represents a risk measure at a specific level $p$ (e.g., $\VaR_p$, $\e_p$, $\CTE_p$ and $\CE_p$).   

In this framework, for a fixed risk measure $\rho$, $D_p^{\rho} > 0$ indicates that diversification is advantageous, potentially reducing an insurer's risk by market engagement. Conversely, $D_p^{\rho} \leq 0$ suggests that diversification is not advantageous for a single insurer. 
Further, this aspect of whether $D_p^{\rho} > 0$ is technically linked to the risk measure $\rho$'s subadditivity. It further relates to the so-called coherence; e.g., see \cite{artzner1999coherent}. 

The diversification benefit 
aids in portfolio selection. By maximizing diversification benefits, the investor can mitigate the risk and boost the performance of a portfolio. It is worthwhile to mention that the usage of $D_p^{\rho}(S_n)$ is not always applicable; its value depends on the number of risks involved and the specific risk measures employed. Recent findings from \cite{dacorogna2018validation} and \cite{chen2022ordering} highlighted that diversification benefits vary notably based on the type of dependence and the risk measures. 

Experts emphasize caution against careless diversification practices, especially when confronted with heavy-tailed risks. By adopting the above results of risk measures and deriving formulas for diversification benefits, we can evaluate the performance of a portfolio $S_n$ in contrast to individual risks operating independently. In the following, we first derive the second-order asymptotics of $D_p^{\rho}(S_n)$ with $\rho$ based on $\VaR$, $\e$, $\CTE$ and $\CE$. 
\begin{theorem}\label{The:D}
 Under the conditions of Proposition  \ref{pro:sum} with $\alpha>1$, we have, as $p\uparrow1$,
     \begin{align*}
     D_p^{\VaR}(S_n)&=1-n^{1/\alpha-1}\(1+\frac{n^{\beta/\alpha}-1}{\alpha\beta}A\(F^{\leftarrow}(p)\)\(1+o(1)\)\right.\\
     &\quad\left.+\frac{\mu_n^*\(F^{\leftarrow}(p)\)-(n-n^{1/\alpha})\mu}{n^{1/\alpha} F^{\leftarrow}(p)}\(1+o(1)\)\)+o\(\sum\limits_{i=1}^n|\phi_i\(F^{\leftarrow}(p)\)-d_i|\),
\end{align*}
\begin{align*}
 D_p^{\e}(S_n)&=1-n^{1/\alpha-1}\(1+\frac{\(n^{\beta/\alpha}-1\)(\alpha-1)^{1-\beta/\alpha}}{\alpha\beta(\alpha-\beta-1)}A\(F^{\leftarrow}(p)\)\(1+o(1)\)\right.\\
 &\quad\left. +\frac{(\alpha-1)^{1/\alpha+1}\(\mu_n^*\(F^{\leftarrow}(p)\)-(n-n^{1/\alpha})\mu\)}{\alpha n^{1/\alpha}  F^{\leftarrow}(p)}\(1+o(1)\)\)+o\(\sum\limits_{i=1}^n|\phi_i\(F^{\leftarrow}(p)\)-d_i|\),
\end{align*}
\begin{align*}
 D_p^{\CTE}(S_n)
 &=1-n^{1/\alpha-1}\(1+\frac{1}{\alpha\beta}\(\frac{n^{\beta/\alpha}(\alpha-1)-\beta}{\alpha-\beta-1}-1\)A\(F^{\leftarrow}(p)\)\(1+o(1)\)\right.\\
   &\quad\left.+\frac{(\alpha-1)\(\mu_n^*\(F^{\leftarrow}(p)\)-\(n-n^{1/\alpha}\)\mu\)}{\alpha n^{1/\alpha}  F^{\leftarrow}(p)}\(1+o(1)\)\)+o\(\sum\limits_{i=1}^n|\phi_i\(F^{\leftarrow}(p)\)-d_i|\),
\end{align*}
and
\begin{align*}
 D_p^{\CE}(S_n)
 &=1-n^{1/\alpha-1}\(1+\frac{\(n^{\beta/\alpha}-1\)(\alpha-1)^{-\beta/\alpha}\(\alpha+\beta-1\)}{\alpha\beta(\alpha-\beta-1)}A\(F^{\leftarrow}(p)\)\(1+o(1)\)\right.\\
 &\quad\left. +\frac{(\alpha-1)^{1/\alpha}(\alpha-2)\(\mu_n^*\(F^{\leftarrow}(p)\)-\(n-n^{1/\alpha}\)\mu\)}{\alpha n^{1/\alpha}  F^{\leftarrow}(p) }\(1+o(1)\)\)+o\(\sum\limits_{i=1}^n|\phi_i\(F^{\leftarrow}(p)\)-d_i|\).
\end{align*}    
Clearly, the first-order asymptotics of $D^{\VaR}_p(S_n)$, $D^{\e}_p(S_n)$, $D^{\CTE}_p(S_n)$ and $D^{\CE}_p(S_n)$ are $1-n^{1/\alpha-1}$.
 \end{theorem}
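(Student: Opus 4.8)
The plan is to reduce each diversification benefit to a ratio of two quantities whose second-order expansions are already in hand, and then expand that ratio. First I would exploit that the marginals are identical: since each $X_i$ has distribution $F$ and $\E[X_i]=\mu$, the denominator in \eqref{eq:D} collapses to $\sum_{i=1}^n(\rho_p(X_i)-\E[X_i])=n(\rho_p(X)-\mu)$, where $\rho_p(X)$ is the measure of a single generic marginal. Likewise, as computed inside the proof of Theorem \ref{the:4.1}, $\E[S_n]=n\mu$, so the numerator is $\rho_p(S_n)-n\mu$ and
$$D_p^{\rho}(S_n)=1-\frac{\rho_p(S_n)-n\mu}{n(\rho_p(X)-\mu)}.$$
The numerator ingredients are Theorem \ref{the:3.1} (for $\VaR$ and $\CTE$) and Theorem \ref{the:4.1} (for $\e$ and $\CE$). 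The marginal $\rho_p(X)$ is obtained by specializing these results to $n=1$, where $\mu_1^{*}\equiv0$: explicitly $\VaR_p(X)=F^{\leftarrow}(p)$ exactly, $\CTE_p(X)/F^{\leftarrow}(p)=\tfrac{\alpha}{\alpha-1}\bigl(1+\zeta_{\alpha,\beta}^1 A(F^{\leftarrow}(p))\bigr)$, $\CE_p(X)$ from Theorem \ref{the:4.1} at $n=1$, and $\e_p(X)$ directly from Proposition \ref{pro:e}.

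Next I would carry out the expansion. Writing $v=F^{\leftarrow}(p)$ and $A=A(v)$, I factor the leading first-order term $n^{1/\alpha}c_{\rho}\,v$ out of the numerator and $n\,c_{\rho}\,v$ out of the denominator, where $c_{\rho}$ is the common first-order constant ($c_{\VaR}=1$, $c_{\CTE}=\alpha/(\alpha-1)$, $c_{\e}=(\alpha-1)^{-1/\alpha}$, $c_{\CE}=\alpha/(\alpha-1)^{1/\alpha+1}$), so the leading ratio is $n^{1/\alpha-1}$; this already yields the stated first-order benefit $1-n^{1/\alpha-1}$. Inside each bracket there are two vanishing corrections: a second-order-regular-variation term proportional to $A$ (of index $\beta$) and a mean-correction of order $1/v$ built from $\mu_n^{*}(v)$, $n\mu$ and $\mu$. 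Using $(1+\varepsilon_{\mathrm{den}})^{-1}=1-\varepsilon_{\mathrm{den}}+o(\varepsilon_{\mathrm{den}})$ and discarding the cross-products $A\!\cdot\!A$, $A\!\cdot\!(1/v)$ and $(1/v)^2$ (genuinely smaller, since $A\in\RV_{\beta}$ and $1/v\in\RV_{-1}$ with $\beta\le0$), the coefficient of $A$ in $D_p^{\rho}$ emerges as the difference of the $S_n$-coefficient and the marginal coefficient, while the $1/v$-part combines $\mu_n^{*}(v)$ with the aggregation constant $n\mu$ and the marginal mean $\mu$. The $\VaR$ case is cleanest: because $\VaR_p(X)=v$ carries no $A$-correction, no subtraction occurs and the $A$-coefficient is simply $\tfrac{n^{\beta/\alpha}-1}{\alpha\beta}$ inherited from $\VaR_p(S_n)$, with the separate term $\tfrac{(1-n^{1/\alpha-1})\mu}{v}$ produced by the $-n\mu$ and the reciprocal-induced $+\mu/v$.

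The main work, and the step I expect to be the real obstacle, is the algebraic simplification of these differences of second-order coefficients into the compact closed forms quoted in the statement. Concretely one verifies $\zeta_{\alpha,\beta}^n-\zeta_{\alpha,\beta}^1=\tfrac{1}{\alpha\beta}\bigl(\tfrac{n^{\beta/\alpha}(\alpha-1)-\beta}{\alpha-\beta-1}-1\bigr)$, with $\zeta_{\alpha,\beta}^1=\tfrac{1}{\alpha(\alpha-\beta-1)}$; $\Xi_{\alpha,\beta}^n-\xi_{\alpha,\beta}=\tfrac{(n^{\beta/\alpha}-1)(\alpha-1)^{1-\beta/\alpha}}{\alpha\beta(\alpha-\beta-1)}$, where $\Xi_{\alpha,\beta}^n$ denotes the bracketed coefficient in the $\e_p(S_n)$-expansion of Theorem \ref{the:4.1} and $\Xi_{\alpha,\beta}^1=\xi_{\alpha,\beta}$; and $\chi_{\alpha,\beta}^n-\chi_{\alpha,\beta}^1=\tfrac{(n^{\beta/\alpha}-1)(\alpha-1)^{-\beta/\alpha}(\alpha+\beta-1)}{\alpha\beta(\alpha-\beta-1)}$. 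In each case the two $-1$ terms cancel and a single common factor ($\tfrac{\alpha-1}{\alpha-\beta-1}$ for $\CTE$, $\tfrac{(\alpha-1)^{1-\beta/\alpha}}{\alpha-\beta-1}$ for $\e$, $\tfrac{(\alpha-1)^{-\beta/\alpha}(\alpha+\beta-1)}{\alpha-\beta-1}$ for $\CE$) pulls out, leaving the factor $n^{\beta/\alpha}-1$. Throughout I would retain the $o\bigl(\sum_{i=1}^n|\phi_i(t)-d_i|\bigr)$ remainders of Theorems \ref{the:3.1} and \ref{the:4.1} in the error, and invoke the $\beta=0$ conventions recorded after those statements to cover the boundary case; setting $D_p^{\rho}=1-(\text{ratio})$ then delivers the four displayed formulas, all sharing the first-order value $1-n^{1/\alpha-1}$.
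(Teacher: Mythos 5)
Your route coincides with the paper's: the paper also reduces to $D_p^{\rho}(S_n)=1-\frac{\rho_p(S_n)-n\mu}{n(\rho_p(X)-\mu)}$, inserts Theorem \ref{the:3.1} (for $\VaR$, $\CTE$) and Theorem \ref{the:4.1} (for $\e$, $\CE$) into the numerator, and expands the reciprocal of the denominator to first order. Your only deviation --- obtaining the marginal $\CTE_p(X)$ and $\CE_p(X)$ by setting $n=1$ in the aggregate theorems with $\mu_1^*\equiv 0$, rather than, as the paper does, from Lemma \ref{lem:cr} (combined with Proposition \ref{pro:e} for $\CE$) --- is immaterial, since $\zeta^1_{\alpha,\beta}=\frac{1}{\alpha(\alpha-\beta-1)}$ and $\chi^1_{\alpha,\beta}=\xi_{\alpha,\beta}+\frac{(\alpha-1)^{-\beta/\alpha}}{\alpha(\alpha-\beta-1)}$ are exactly the marginal coefficients the paper's route produces. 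Your three coefficient-difference identities for the $A$-terms are also correct as stated.

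The gap is in the step you defer: ``the $1/v$-part combines $\mu_n^*(v)$ with $n\mu$ and $\mu$ \dots{} then delivers the four displayed formulas.'' Carrying that combination out does reproduce the stated $\VaR$ and $\CTE$ formulas, but not the $\e$ and $\CE$ ones. For the expectile, the constant in the numerator $\e_p(S_n)/n-\mu$ is $\frac{(\alpha-1)(\mu_n^*(F^{\leftarrow}(p))-n\mu)}{\alpha n}$ while that in the denominator $\e_p(X)-\mu$ is $-\frac{(\alpha-1)\mu}{\alpha}$; after inverting the denominator and multiplying, these $\mu$-pieces do not cancel, and your recipe leaves an additional term $\frac{(\alpha-1)^{1/\alpha+1}(1-n^{1/\alpha-1})\mu}{\alpha F^{\leftarrow}(p)}$ in $D_p^{\e}(S_n)$ (analogously $\frac{(\alpha-2)(\alpha-1)^{1/\alpha}(1-n^{1/\alpha-1})\mu}{\alpha F^{\leftarrow}(p)}$ in $D_p^{\CE}(S_n)$) that is absent from the statement. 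This term has exactly the order $1/F^{\leftarrow}(p)$ of the $\mu_n^*$-term being retained, and since $\alpha>1$ it is not $o\bigl(\sum_{i=1}^n|\phi_i(t)-d_i|\bigr)$ either (e.g.\ with FGM kernels $|\phi_i(t)-d_i|=2\overline{F}(t)=o(1/t)$), so it cannot be swept into any remainder; for $n\geq 2$ and $\mu>0$ it is genuinely nonzero. In other words, executing your own plan carefully contradicts two of the four displayed formulas rather than delivering them. The paper's proof is silent at precisely this point --- it jumps from the ratio to the final display --- so the discrepancy is as much an issue with the statement as with your write-up; but as a proof of the theorem as written, your argument does not close there, and the mean-term bookkeeping must be made explicit rather than asserted.
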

\begin{proof}
See the Appendix.
\end{proof}

For a numerical illustration, we give an example of the Weiss distribution.
\begin{example}
\label{ex:weiss}
(Weiss distribution) A Weiss distribution function $F$ satisfies that
$$F(x)=1-\frac{1}{2}x^{-\alpha}\(1+x^{\beta}\),~~~~ x\geq 1, $$
with parameters $\alpha>1, \beta<0$. It is easy to check that $\overline{F}(\cdot)\in 2\RV_{-\alpha,\beta}$  with an auxiliary function $A(t)=\beta t^{\beta}$. Let $X_1$ and $X_2$ have an identical Weiss distribution $F$. Suppose that the random vector $(X_1,X_2)$ follows a Sarmanov distribution in \eqref{eq:sarmanov} with $\phi_i(\cdot)=1-2F(\cdot)$. Clearly, $d_i=-1, ~i=1,2$. 

\end{example}
\begin{figure}[htbp]%
  \begin{minipage}[t]{0.5\textwidth}
    \centering
    \includegraphics[width=\textwidth]{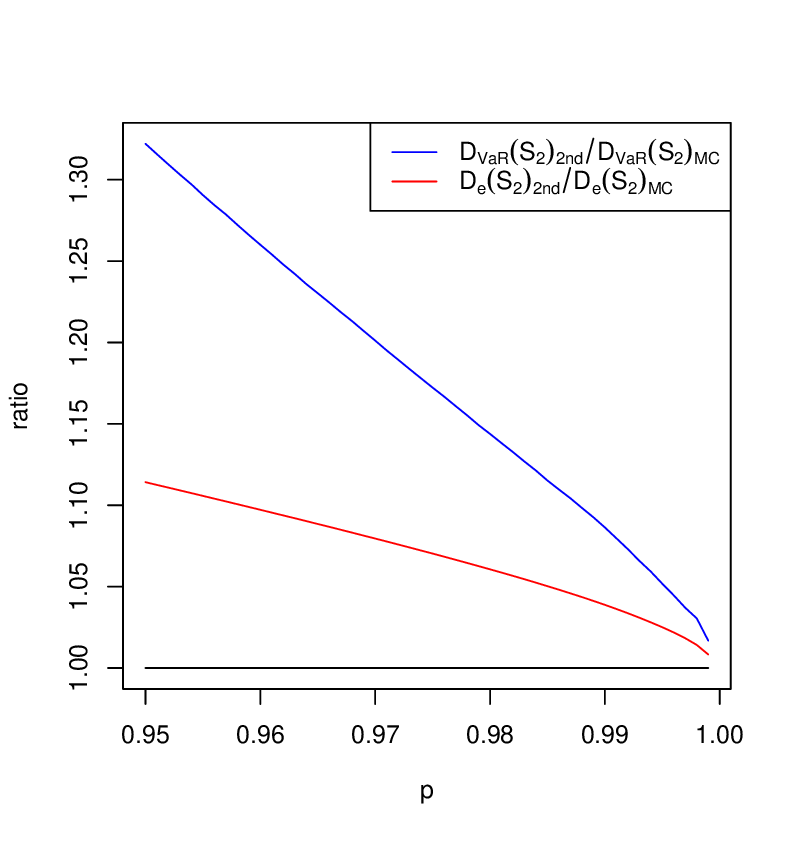}
  \end{minipage}%
  \begin{minipage}[t]{0.5\textwidth}
    \centering
    \includegraphics[width=\textwidth]{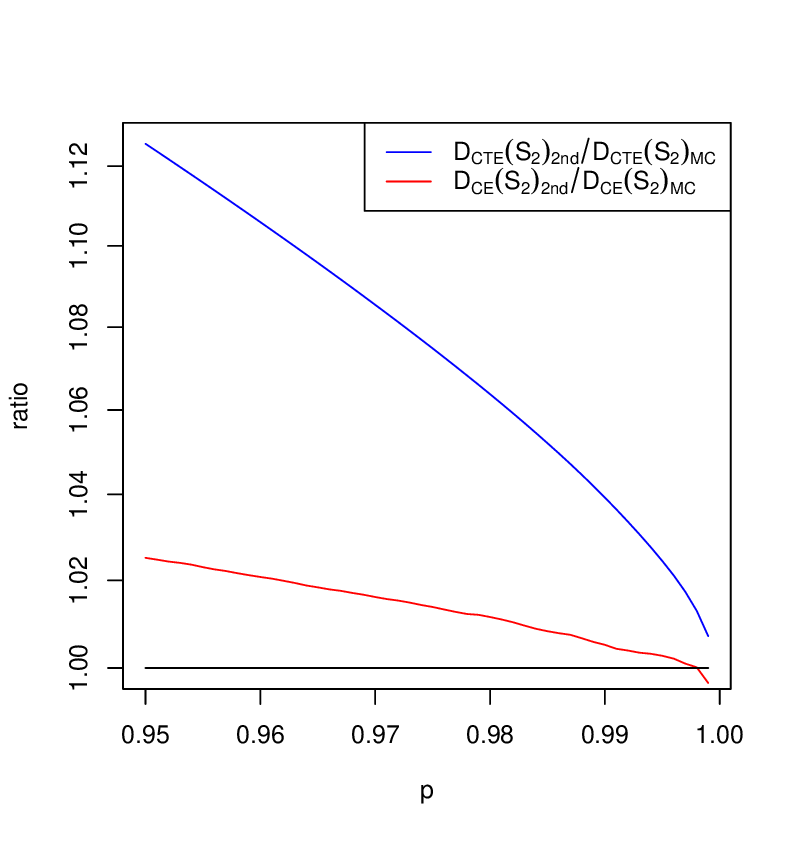}
  \end{minipage}%
 \caption{Comparison of the simulation values (MC), first-order and second-order asymptotic values of $D_{\VaR}(S_2)$, $D_{\e}(S_2)$, $D_{\CTE}(S_2)$ and $D_{\CE}(S_2)$. We use the Weiss distribution with $\alpha=2.5, ~\beta=-1$ and $a_{12}=0.5$. }\label{f:8}
\end{figure}
 {
\begin{table}[t]%
   \fontsize{8}{2}\centering
  \setlength{\tabcolsep}{0.8mm}{
  \begin{tabular}{cccc cccccc}%
  \hline\hline\noalign{\smallskip}
  $n$ & $D_{\rho}(S_n)_{1st}$& $D_{\VaR}(S_n)_{MC}$& $D_{\VaR}(S_n)_{2nd}$  & $D_{\e}(S_n)_{MC}$& $D_{\e}(S_n)_{2nd}$&$\frac{D_{\rho}(S_n)_{1st}}{D_{\VaR}(S_n)_{MC}}$ &$\frac{D_{\VaR}(S_n)_{2nd}}{D_{\VaR}(S_n)_{MC}}$ &$\frac{D_{\rho}(S_n)_{1st}}{D_{\e}(S_n)_{MC}}$ &$\frac{D_{\e}(S_n)_{2nd}}{D_{\e}(S_n)_{MC}}$ \\
\noalign{\smallskip}\hline\noalign{\smallskip}
 2& 0.3402& 0.2817 &0.3162& 0.3065 &0.3243& 1.2079 &1.1225 &1.1101& 1.0581\\
3& 0.4827& 0.4091& 0.4572 &0.4399& 0.4647& 1.1801& 1.1177& 1.0973& 1.0563\\
 4& 0.5647& 0.4877& 0.5409& 0.5200& 0.5479& 1.1580& 1.1092& 1.0860& 1.0537		 \\
 5 &0.6193& 0.5418& 0.5977 &0.5742& 0.6041& 1.1429& 1.1031 &1.0785& 1.0520\\
  \noalign{\smallskip}\hline
  \end{tabular}}
  \caption{Simulation values $D_{\VaR}(S_n)_{MC},D_{\e}(S_n)_{MC}$ versus the first-order asymptotic values $D_{\rho}(S_n)_{1st}$ (note that $D_{\rho}(S_n)_{1st}$ is the same for $\VaR$ and $\e$) and the second-order asymptotic values $D_{\VaR}(S_n)_{2nd},D_{\e}(S_n)_{2nd}$. We use the Weiss distribution with $\alpha=2.5,\beta=-1,a_{ij}=-0.1$ for all $1 \leq i< j\leq n$ and $p=0.99$.}
  \label{tab:6}
\end{table}
						
	\begin{table}[t]%
   \fontsize{8}{2}\centering
  \setlength{\tabcolsep}{0.8mm}{
  \begin{tabular}{cccc cccccc}%
  \hline\hline\noalign{\smallskip}
  $n$ & $D_{\rho}(S_n)_{1st}$& $D_{\CTE}(S_n)_{MC}$& $D_{\CTE}(S_n)_{2nd}$  & $D_{\CE}(S_n)_{MC}$& $D_{\CE}(S_n)_{2nd}$&$\frac{D_{\rho}(S_n)_{1st}}{D_{\CTE}(S_n)_{MC}}$ &$\frac{D_{\CTE}(S_n)_{2nd}}{D_{\CTE}(S_n)_{MC}}$ &$\frac{D_{\rho}(S_n)_{1st}}{D_{\CE}(S_n)_{MC}}$ &$\frac{D_{\CE}(S_n)_{2nd}}{D_{\CE}(S_n)_{MC}}$ \\
\noalign{\smallskip}\hline\noalign{\smallskip}
 2& 0.3402& 0.3085& 0.3258& 0.3307& 0.3346 &1.1029 &1.0561& 1.0289& 1.0117
\\
3& 0.4827& 0.4430& 0.4674& 0.4704& 0.4767 &1.0897& 1.0551 &1.0262& 1.0135
\\
 4 &0.5647& 0.5235& 0.5504& 0.5524& 0.5591& 1.0788& 1.0515& 1.0224& 1.0123
		 \\
5& 0.6193& 0.5779& 0.6063& 0.6067& 0.6140& 1.0715& 1.0491& 1.0206& 1.0120\\
  \noalign{\smallskip}\hline
  \end{tabular}}
  \caption{Simulation values $D_{\CTE}(S_n)_{MC},D_{\CE}(S_n)_{MC}$ versus the first-order asymptotic values $D_{\rho}(S_n)_{1st}$ (note that $D_{\rho}(S_n)_{1st}$ is the same for $\CTE$ and $\CE$) and the second-order asymptotic values $D_{\CTE}(S_n)_{2nd}$ and $D_{\CE}(S_n)_{2nd}$. We use the Weiss distribution with $\alpha=2.5,\beta=-1,a_{ij}=-0.1$ for all $1 \leq i< j\leq n$ and $p=0.99$.}
  \label{tab:7}
\end{table}	}

In the context of Example \ref{ex:weiss}, we aim to present the asymptotic performance for diversification benefits $D_p^{\rho}$ across four risk measures (i.e., VaR, expectile, CTE, and CE) based on the results obtained in Theorem \ref{The:D}. Here, we denote by $\Hat{D_p^{\rho}}$ the second-order asymptotic and employ the ratio $\Hat{D_p^{\rho}}/D_p^{\rho}$ to assess the asymptotic performance of diversification benefits. A value of $\Hat{D_p^{\rho}}/D_p^{\rho}$ closer to 1 indicates a more accurate asymptotic result, while deviations from 1 imply poorer outcomes. Besides, according to \cite{mcneil2015quantitative}, $\Hat{D_p^{\rho}}/D_p^{\rho} > 1$ signifies the overestimation of the diversification benefit.

The numerical experiment is shown in Figure \ref{f:8}. 
For comparison purposes, we use blue lines to represent the outcomes derived from VaR-based risk measures (i.e., VaR and CTE), while red lines represent the outcomes obtained from expectile-based risk measures (i.e., expectile and CE).  {Tables \ref{tab:6}-\ref{tab:7} show that as $n$ increases, all diversification benefits become larger and hence diversification is more advantageous as more risks are incorporated. 
Additionally, the numerical results show that the asymptotic performance of $\hat{D}^{\rho}_p / D^{\rho}_p$ improves as $n$ increases. 
Consistent with our earlier findings, expectile-based risk measures continue to provide more accurate asymptotic approximations than VaR-based ones, and second-order asymptotic approximations generally outperform first-order ones. 
}



 {
New methods to quantify diversification continue to emerge, such as the diversification quotient \cite{han2023diversification,han2022diversification}. Our asymptotic treatment provides a unified framework to investigate these new quotients, which will be studied in the future.
}

\section{Generalized-quantile-based systemic risk measures}\label{sec:7}

We discuss some related results in a more general framework of generalized quantiles in \cite{bellini2014generalized}. The general quantile is a risk measure defined as 
$$\psi_p(X) = \inf_{x\in\R}\Big\{
 p\E[\Phi_1((X-x)_+)]+(1-p)\E[\Phi_2((X-x)_-)]\Big\}, ~~ p \in (0, 1).
$$
Particularly, let $\Phi_1(t)=\Phi_2(t)=t^k$. If $k=1$, $\psi_p$ becomes the $\VaR_p$. If $k = 2$, $\psi_p$ becomes the expectile $\e_p$. We obtain the second-order asymptotic results of generalized quantiles.  

\begin{proposition}\label{pro:7}
    Assume that $X \sim F$ and $\overline{F}(\cdot)\in 2\RV_{-\alpha,\beta}$ with an auxiliary function $A(\cdot)$, $\alpha>1$ and $\beta\leq0$. Let $\Phi_1(t)=\Phi_2(t)=t^k$ for some $1< k<\alpha$ and $k\in \N$. Then, as $p\uparrow1$, we have
\begin{align*}
    \psi_p(X)=L_{\alpha,k}F^{\leftarrow}(p)\(1+\(\frac{\Delta_{\alpha,\beta,k}-1}{\alpha\beta}A(F^{\leftarrow}(p))+\frac{(k-1)\mu}{\alpha L_{\alpha,k}F^{\leftarrow}(p)}\)(1+o(1))\),
\end{align*}
where $$B_{\alpha,k}:=\int_0^1y^{\alpha-k}(1-y)^{k-2}\d y, ~~L_{\alpha,k}:=\((k-1)B_{\alpha,k}\)^{1/\alpha},~~\mbox{and}~~\Delta_{\alpha,\beta,k}:=L_{\alpha,k}^{\beta}B_{\alpha-\beta,k}B_{\alpha,k}^{-1}.$$  Clearly, the first-order asymptotic is given by $L_{\alpha,k}F^{\leftarrow}(p)$.
\end{proposition}

\begin{proof}
See the Appendix.
\end{proof}

In particular, for $k=2$, Proposition \ref{pro:7} reduces to the result in Proposition \ref{pro:e}:  
$$
\e_p(X)=\(\alpha-1\)^{-1/\alpha}F^{\leftarrow}(p)\(1+\(\frac{1}{\alpha\beta}\(\frac{(\alpha-1)^{1-\beta/\alpha}}{\alpha-\beta-1}-1\)A(F^{\leftarrow}(p))+\frac{(\alpha-1)^{1/\alpha}\mu}{\alpha F^{\leftarrow}(p)}\)\(1+o(1)\)\).
$$
Using Proposition \ref{pro:7}, we obtain the second-order asymptotics of generalized-quantile-based systemic risk measures: $\psi_p(S_n)$, $\E\[S_n|S_n>\psi_p(S_n)\],\E[X_m|S_n>\psi_p(S_n)]$ and $\E[(X_m-\psi_p(X_m))_+|S_n>\psi_p(S_n)]$.
\begin{theorem}\label{the:7.1}
Let $X_1,\ldots,X_n$ be nonnegative random variables with a common marginal distribution $F$ satisfying $\overline{F}(\cdot)\in 2\RV_{-\alpha,\beta}$ with $\alpha>1$, $\beta\leq0$ and an auxiliary function $A(\cdot)$. Assume that ($X_1,\ldots,X_n$) follows an $n$-dimensional Sarmanov distribution given by \eqref{eq:sarmanov} and $\lim\limits_{x_i\rightarrow\infty}\phi_i(x_i)=d_i\in \R, \phi_i(\cdot)-d_i\in \RV_{\rho_i}$ with $\rho_i\leq 0$  for each $i=1,\ldots,n$. Let  $\Phi_1(t)=\Phi_2(t)=t^k$ for some $1 < k<\alpha$ and $k\in \N$. Then as $p\uparrow 1$, we get that
\begin{align*}
    \frac{\psi_p(S_n)}{F^{\leftarrow}(p)}&=n^{1/\alpha}L_{\alpha,k}\(1+\frac{n^{\beta/\alpha}\Delta_{\alpha,\beta,k}-1}{\alpha\beta}A(F^{\leftarrow}(p))(1+o(1))\right.\\
    &\quad\left.+\frac{(\alpha-k+1)\mu_n^*\(F^{\leftarrow}(p)\)+(k-1)n\mu}{\alpha n^{1/\alpha} L_{\alpha,k}F^{\leftarrow}(p)}(1+o(1))\)+o\(\sum\limits_{i=1}^n|\phi_i\(F^{\leftarrow}(p)\)-d_i|\),
\end{align*}
\begin{align*}
    \frac{\E\[S_n|S_n>\psi_p(S_n)\]}{F^{\leftarrow}(p)}&=\frac{\alpha n^{1/\alpha}L_{\alpha,k}}{\alpha-1}\(1+\frac{1}{\alpha}\(\frac{n^{\beta/\alpha}L_{\alpha,k}^{\beta}}{\alpha-\beta-1}+\frac{n^{\beta/\alpha}\Delta_{\alpha,\beta,k}-1}{\beta}\)A(F^{\leftarrow}(p))(1+o(1))\right.\\
    &\quad \left.+\frac{(\alpha-k)\mu_n^*\(F^{\leftarrow}(p)\)+(k-1)n\mu}{\alpha n^{1/\alpha} L_{\alpha,k}F^{\leftarrow}(p)}(1+o(1))\)+o\(\sum\limits_{i=1}^n|\phi_i\(F^{\leftarrow}(p)\)-d_i|\),
\end{align*}
 \begin{align*}
      \frac{\E[X_m|S_n>\psi_p(S_n)]}{F^{\leftarrow}(p)}&=\frac{\alpha n^{1/\alpha} L_{\alpha,k} }{(\alpha-1)n}\(1+\frac{1}{\alpha}\(\frac{n^{\beta/\alpha}L_{\alpha,k}^{\beta}}{\alpha-\beta-1}+\frac{n^{\beta/\alpha}\Delta_{\alpha,\beta,k}-1}{\beta}\)A(F^{\leftarrow}(p))(1+o(1))\right.\\
         &\quad\left.+\frac{(k-1)\(n\mu-\mu_n^*\(F^{\leftarrow}(p)\)\)}{\alpha n^{1/\alpha} L_{\alpha,k}F^{\leftarrow}(p)}(1+o(1))\)+o\(\sum\limits_{i=1}^n|\phi_i\(F^{\leftarrow}(p)\)-d_i|\),
  \end{align*}
  and
   \begin{align*}
        \frac{\E[(X_m-\psi_p(X_m))_+|S_n>\psi_p(S_n)]}{F^{\leftarrow}(p)}&=\frac{\E[X_m|S_n>\psi_p(S_n)]}{F^{\leftarrow}(p)}-\frac{\psi_p(X)}{n F^{\leftarrow}(p)}.
    \end{align*}
Obviously, the first-order asymptotics of $\psi_p(S_n), \E\[S_n|S_n>\psi_p(S_n)\]$, $\E[X_m|S_n>\psi_p(S_n)]$ and $\E[(X_m-\psi(X_m))_+|S_n>\psi_p(S_n)]$ are $n^{1/\alpha}L_{\alpha,k}F^{\leftarrow}(p)$, $\frac{\alpha n^{1/\alpha}L_{\alpha,k}}{\alpha-1}F^{\leftarrow}(p)$, $\frac{\alpha n^{1/\alpha} L_{\alpha,k} }{(\alpha-1)n}F^{\leftarrow}(p)$ and $\(\frac{\alpha n^{1/\alpha}}{\alpha-1}-1\)\frac{L_{\alpha,k}}{n}F^{\leftarrow}(p)$. 
\end{theorem}
\begin{proof}
    According to Proposition \ref{pro:7} and Corollary \ref{cor:3.1}, as $p\uparrow1$, we have
    \begin{align*}
       &\quad \frac{\psi_p(S_n)}{F^{\leftarrow}(p)}\\
       &=L_{\alpha,k}\frac{G^{\leftarrow}(p)}{F^{\leftarrow}(p)}\(1+\(\frac{\Delta_{\alpha,\lambda,k}-1}{\alpha\lambda}A_G^{n}(G^{\leftarrow}(p))+\frac{(k-1)\E[S_n]}{\alpha L_{\alpha,k}G^{\leftarrow}(p)}\)(1+o(1))\)\\
          &=L_{\alpha,k}n^{1/\alpha}\(1+\(\frac{\mu_n^*\(F^{\leftarrow}(p)\)}{n^{1/\alpha}F^{\leftarrow}(p)}+\frac{n^{\beta/\alpha}-1}{\alpha\beta}A\(F^{\leftarrow}(p)\)\)\(1+o(1)\)+o\(\sum\limits_{i=1}^n|\phi_i\(F^{\leftarrow}(p)\)-d_i|\)\)\\
        &\quad\cdot \(1+\(\frac{(\Delta_{\alpha,\beta,k}-1)n^{\beta/\alpha}}{\alpha\beta}A(F^{\leftarrow}(p))+\frac{(\Delta_{\alpha,-1,k}-1)\mu_n^*\(F^{\leftarrow}(p)\)}{n^{1/\alpha}F^{\leftarrow}(p)}
        +\frac{(k-1)n\mu}{\alpha n^{1/\alpha} L_{\alpha,k}F^{\leftarrow}(p)}\)(1+o(1))\)\\
        &=n^{1/\alpha}L_{\alpha,k}\(1+\(\frac{n^{\beta/\alpha}\Delta_{\alpha,\beta,k}-1}{\alpha\beta}A(F^{\leftarrow}(p))+\frac{(\alpha-k+1)\mu_n^*\(F^{\leftarrow}(p)\) +(k-1)n\mu}{\alpha n^{1/\alpha} L_{\alpha,k}F^{\leftarrow}(p)}\)(1+o(1))\)\\
        &\quad +o\(\sum\limits_{i=1}^n|\phi_i\(F^{\leftarrow}(p)\)-d_i|\).
    \end{align*}
Next, by Lemma \ref{lem:cr} in the Appendix and Corollary \ref{cor:3.1}, as $p\uparrow1$, it follows that
\begin{align*}
    &\quad\frac{\E\[S_n|S_n>\psi_p(S_n)\]}{F^{\leftarrow}(p)}\\
    &=\frac{\alpha}{\alpha-1}\(1+\frac{1}{\alpha(\alpha-\lambda-1)}A_G^{n}(\psi_p(S_n))(1+o(1))\)\frac{\psi_p(S_n)}{F^{\leftarrow}(p)}\\
    &=\frac{\alpha}{\alpha-1}\(1+\(\frac{n^{\beta/\alpha}L_{\alpha,k}^{\beta}}{\alpha(\alpha-\beta-1)}A(F^{\leftarrow}(p))-\frac{\mu_n^*\(F^{\leftarrow}(p)\)}{\alpha n^{1/\alpha}L_{\alpha,k}F^{\leftarrow}(p)}\)(1+o(1))\)\\
    &\quad \cdot n^{1/\alpha}L_{\alpha,k}\(1+\(\frac{n^{\beta/\alpha}\Delta_{\alpha,\beta,k}-1}{\alpha\beta}A(F^{\leftarrow}(p))+\frac{(\alpha-k+1)\mu_n^*\(F^{\leftarrow}(p)\)+(k-1)n\mu}{\alpha n^{1/\alpha} L_{\alpha,k}F^{\leftarrow}(p)}\)(1+o(1))\)\\
    &\quad +o\(\sum\limits_{i=1}^n|\phi_i\(F^{\leftarrow}(p)\)-d_i|\)\\
    &=\frac{\alpha n^{1/\alpha}L_{\alpha,k}}{\alpha-1}\(1+\frac{1}{\alpha}\(\frac{n^{\beta/\alpha}L_{\alpha,k}^{\beta}}{\alpha-\beta-1}+\frac{n^{\beta/\alpha}\Delta_{\alpha,\beta,k}-1}{\beta}\)A(F^{\leftarrow}(p))(1+o(1))\right.\\
    &\quad \left.+\frac{(\alpha-k)\mu_n^*\(F^{\leftarrow}(p)\)+(k-1)n\mu}{\alpha n^{1/\alpha} L_{\alpha,k}F^{\leftarrow}(p)}(1+o(1))\)+o\(\sum\limits_{i=1}^n|\phi_i\(F^{\leftarrow}(p)\)-d_i|\).
\end{align*}
    Furthermore, by Lemma \ref{lem:csum} in the Appendix and Theorem \ref{the:7.1}, as $p\uparrow1$, we get that
    \begin{align*}
         &\quad \frac{\E[X_m|S_n>\psi_p(S_n)]}{F^{\leftarrow}(p)}\\
         &=\frac{\alpha }{(\alpha-1)n}\(1+\widetilde{B}(\psi_p(S_n))(1+o(1))\)\frac{\psi_p(S_n)}{F^{\leftarrow}(p)}+o\(\sum\limits_{i=1}^n|\phi_i(\psi_p(S_n))-d_i|\)\\
         &=\frac{\alpha }{(\alpha-1)n}\(1+\(\frac{n^{\beta/\alpha}L_{\alpha,k}^{\beta}}{\alpha(\alpha-\beta-1)}A(F^{\leftarrow}(p))-\frac{\mu_n^*\(F^{\leftarrow}(p)\)}{n^{1/\alpha} L_{\alpha,k}F^{\leftarrow}(p)}\)(1+o(1))\)\\
         &\quad \cdot n^{1/\alpha}L_{\alpha,k}\(1+\(\frac{n^{\beta/\alpha}\Delta_{\alpha,\beta,k}-1}{\alpha\beta}A(F^{\leftarrow}(p))+\frac{(\alpha-k+1)\mu_n^*\(F^{\leftarrow}(p)\)+(k-1)n\mu}{\alpha n^{1/\alpha} L_{\alpha,k}F^{\leftarrow}(p)}\)(1+o(1))\)\\
         &\quad +o\(\sum\limits_{i=1}^n|\phi_i\(F^{\leftarrow}(p)\)-d_i|\)\\
         &=\frac{\alpha n^{1/\alpha} L_{\alpha,k} }{(\alpha-1)n}\(1+\frac{1}{\alpha}\(\frac{n^{\beta/\alpha}L_{\alpha,k}^{\beta}}{\alpha-\beta-1}+\frac{n^{\beta/\alpha}\Delta_{\alpha,\beta,k}-1}{\beta}\)A(F^{\leftarrow}(p))(1+o(1))\right.\\
         &\quad\left.+\frac{(k-1)\(n\mu-\mu_n^*\(F^{\leftarrow}(p)\)\)}{\alpha n^{1/\alpha} L_{\alpha,k}F^{\leftarrow}(p)}(1+o(1))\)+o\(\sum\limits_{i=1}^n|\phi_i\(F^{\leftarrow}(p)\)-d_i|\).
    \end{align*}
  Lastly, according to Lemma \ref{lem:cr} and Proposition \ref{pro:7}, as $p\uparrow 1$, it follows that
    \begin{align*}
        \frac{\E[(X_m-\psi_p(X_m))_+|S_n>\psi_p(S_n)]}{F^{\leftarrow}(p)}&=\frac{\E[X_m|S_n>\psi_p(S_n)]}{F^{\leftarrow}(p)}-\frac{\psi_p(X)}{n F^{\leftarrow}(p)}.
    \end{align*}
 Thus, this completes the proof of the Theorem.   
\end{proof}

This theorem further generalizes the results of Theorems \ref{the:4.1} and \ref{the:4.2}. In particular, the asymptotics in this theorem are reduced to 
$\e_p(S_n)$, $\CE_p(S_n)$, $\ICE_{p,m}(S_n)$ and $\SICE_{p,m}(S_n)$ if $k=2$.

\section{Conclusion}\label{sec:conc}

In this paper, we study systemic risk measures with a multivariate Sarmanov distribution. We first classify them into two families of $\VaR$- and expectile-based systemic risk measures. We have the second-order asymptotics of $\VaR$, $\CTE$, $\MES$ and $\SES$ in the first family. Furthermore, we obtain the second-order asymptotics of expectile, $\CE$, $\ICE$, and $\SICE$ in the second family. In addition, we give concrete numerical and analytical examples to illustrate the main results. We emphasize that the second-order asymptotics can provide a much better approximation as $p\uparrow 1$ than the first-order asymptotics. Moreover, we provide a comprehensive comparison among $\VaR$- and expectile-based systemic risk measures. We find that expectile-based measures often yield higher asymptotic accuracy than $\VaR$-based measures. 
Finally, we apply the asymptotic treatment to financial diversification and extend second-order asymptotic results to generalized-quantile-based systemic risk measures. We believe that our results stimulate future research in risk management, and our findings have implications for financial practitioners and regulators striving to better understand systemic risks in complex financial systems.


\setcounter{equation}{0}
\section{Appendix}\label{Appendix}
\subsection{Proofs of Sections \ref{VaR}-\ref{expectile}}
 In order to prove the Proposition \ref{cor:3.1},  we need to show that $\overline{\widetilde{F}}(\cdot)\in 2\RV$ with
 $\widetilde{F}(t)=\int_0^t \(1-\frac{\phi(x)}{c}\)\d F(x)$ under some weak conditions, which is stated in the following lemma.

\begin{lemma}\label{lem:2rv}
    Assume that $\overline{F}(\cdot)\in 2\RV_{-\alpha,\beta}$ with $\alpha>0,\beta\leq 0$ and an auxiliary function $A(\cdot)$. Define $\widetilde{F}(t)=\int_0^t \(1-\frac{\phi(x)}{c}\)\d F(x)$, where $\phi: \R \to \R$ is a function with $\E\left[\phi(X)\right]=0$, $|\phi(x)|\leq c-1$ for all $x\geq 0$, $\lim\limits_{t\rightarrow\infty}\phi(t)=b$ and $\phi(\cdot)-b\in \RV_{\rho}$ with $b\in \R$ and $\rho\leq 0$. We have $\overline{\widetilde{F}}(\cdot)\in 2\RV_{-\alpha,\gamma}$ with $\gamma= \max\{\beta,\rho\}$ and $\widetilde{A}(\cdot)=A(\cdot)-\frac{\rho\alpha}{(c-b)(\alpha-\rho)}\(\phi(\cdot)-b\)$. In addition, as $t\rightarrow\infty$, we have
\begin{align}\label{eq:8.1}
     \frac{\overline{\widetilde{F}}(t)}{\overline{F}(t)}=\(1-\frac{b}{c}\)\(1-\frac{\alpha}{(c-b)(\alpha-\rho)}\(\phi(t)-b\)\(1+o(1)\)\).
\end{align}
\end{lemma}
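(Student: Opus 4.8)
The plan is to reduce the entire statement to a single Karamata-type tail-integral estimate and then feed it into the definition of $2\RV$. Since $\E[\phi(X)]=0$ and the underlying variable is nonnegative, $\int_0^\infty\phi\,\d F=0$, so $\overline{W}(t)=\int_t^\infty\phi(x)\,\d F(x)$. The first move is to center $\phi$ at its limit $b$ and split
\[
\overline{W}(t)=b\,\overline{F}(t)+\int_t^\infty(\phi(x)-b)\,\d F(x),
\]
which isolates the leading term $b\,\overline{F}(t)$ from a genuinely lower-order remainder, since $\phi-b\in\RV_{\rho}$ vanishes at infinity.

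Next I would establish the key estimate
\[
\int_t^\infty(\phi(x)-b)\,\d F(x)=\frac{\alpha}{\alpha-\rho}(\phi(t)-b)\,\overline{F}(t)\,(1+o(1)),\qquad t\to\infty.
\]
As $\phi-b\in\RV_{\rho}$ with $\rho\le0$ and $\overline{F}\in\RV_{-\alpha}$ with $\alpha>0$, the exponent $\alpha-\rho$ is strictly positive, which is exactly the regime covered by Karamata's theorem for Stieltjes tail integrals. I would obtain it by integration by parts, namely $\int_t^\infty(\phi-b)\,\d F=(\phi(t)-b)\overline{F}(t)+\int_t^\infty\overline{F}\,\d\phi$ (the boundary contribution at infinity vanishes as both factors tend to $0$), and then evaluate the remaining integral by Karamata, which produces the factor $\rho/(\alpha-\rho)$; since $1+\rho/(\alpha-\rho)=\alpha/(\alpha-\rho)$ this yields the stated constant. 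Dividing through by $\overline{F}(t)$ gives \eqref{eq:8.1} at once.

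For the $2\RV$ conclusion I would factor
\[
\frac{\overline{W}(tx)}{\overline{W}(t)}=\frac{\overline{W}(tx)/\overline{F}(tx)}{\overline{W}(t)/\overline{F}(t)}\cdot\frac{\overline{F}(tx)}{\overline{F}(t)}
\]
and expand each factor to first order in its own small parameter. The last factor is handled by the $2\RV_{-\alpha,\beta}$ property of $\overline{F}$ (scale $A(t)$, shape $H_{-\alpha,\beta}$). For the first factor, write $R:=\overline{W}/\overline{F}=b\,(1+\delta(t)(1+o(1)))$ with $\delta(t):=\frac{\alpha}{b(\alpha-\rho)}(\phi(t)-b)\in\RV_{\rho}$; since $\delta(tx)/\delta(t)\to x^{\rho}$, the ratio $R(tx)/R(t)$ equals $1+\delta(t)(x^{\rho}-1)(1+o(1))$, which after multiplication by the leading $x^{-\alpha}$ contributes $\rho\delta(t)H_{-\alpha,\rho}(x)$. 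Collecting terms gives
\[
\frac{\overline{W}(tx)}{\overline{W}(t)}-x^{-\alpha}=\Big(A(t)H_{-\alpha,\beta}(x)+\tfrac{\rho\alpha}{b(\alpha-\rho)}(\phi(t)-b)\,H_{-\alpha,\rho}(x)\Big)(1+o(1)).
\]
Comparing the two scales $A\in\RV_{\beta}$ and $\phi-b\in\RV_{\rho}$, the larger index dominates, and in each of the cases $\beta>\rho$, $\beta<\rho$, and $\beta=\rho$ the surviving shape is $H_{-\alpha,\gamma}$ with $\gamma=\max\{\beta,\rho\}$ while the surviving scale is exactly $\widetilde{A}(t)=A(t)+\frac{\rho\alpha}{b(\alpha-\rho)}(\phi(t)-b)$. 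This proves $\overline{W}\in2\RV_{-\alpha,\gamma}$ with the stated auxiliary function; the uniformity in $x$ required by the definition follows from the standard uniform (Potter-type) convergence bounds for regularly and second-order regularly varying functions.

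The main obstacle is the second step: securing the Karamata tail-integral asymptotic with the correct constant $\alpha/(\alpha-\rho)$ and with enough uniformity that the two second-order scales separate cleanly in the third step, in particular in the borderline case $\beta=\rho$ where the two contributions genuinely add into a single $H_{-\alpha,\gamma}$. The degenerate case $b=0$, for which the displayed coefficient is undefined, I would treat separately: the leading term $b\,\overline{F}$ disappears, one works directly from $\overline{W}\sim\frac{\alpha}{\alpha-\rho}\phi\,\overline{F}$, and reading off the correction reduces the auxiliary function to $\widetilde{A}=\rho\phi$.
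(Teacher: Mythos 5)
Your decomposition $\overline{W}(t)=b\,\overline{F}(t)+\int_t^\infty(\phi(x)-b)\,\d F(x)$, the target constant $\alpha/(\alpha-\rho)$, and the final factorization step are all consistent with the paper: indeed, expanding $\overline{W}(tx)/\overline{W}(t)$ through the ratio $R=\overline{W}/\overline{F}$ and letting the two scales $A\in\RV_\beta$ and $\phi-b\in\RV_\rho$ compete is legitimate and arguably tidier than the paper's own direct computation of $\overline{W}(tx)/\overline{W}(t)-x^{-\alpha}$ via Drees-type inequalities (your route mirrors what the paper does later for $\overline{G}$ in Corollary 3.1).

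The genuine gap is in your key second step. You integrate by parts to get $\int_t^\infty(\phi-b)\,\d F=(\phi(t)-b)\overline{F}(t)+\int_t^\infty\overline{F}\,\d\phi$ and then propose to ``evaluate the remaining integral by Karamata.'' Two problems. First, the hypothesis is only that $\phi$ is measurable with $\phi-b\in\RV_\rho$; $\phi$ is not assumed monotone or of locally bounded variation near infinity, so $\d\phi$ need not be a well-defined signed measure and the integration by parts itself is not available. Second, even granting bounded variation, Karamata's theorem for Stieltjes tail integrals requires a monotone (one-signed, regularly varying) integrator: regular variation of $\phi-b$ controls only the pointwise ratios $(\phi(tx)-b)/(\phi(t)-b)\to x^\rho$ and says nothing about the local behavior of the measure $\d\phi$, so the asymptotic $\int_t^\infty\overline{F}\,\d\phi\sim\frac{\rho}{\alpha-\rho}(\phi(t)-b)\overline{F}(t)$ is a monotone-density-type statement that your hypotheses do not deliver. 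The paper's proof avoids exactly this by reversing the order of operations: it keeps $\d F$ (a genuine measure) as the integrator, substitutes $x\mapsto tx$, uses Potter-type bounds for $\phi-b$ to replace $(\phi(tx)-b)/(\phi(t)-b)$ by $x^\rho$ inside $\int_1^\infty\cdots\,\d\overline{F}(tx)$ with a uniformly controlled error, and only then integrates by parts against the smooth monotone function $x^\rho$, finishing with dominated convergence and $1+\int_1^\infty x^{-\alpha}\,\d x^{\rho}=\alpha/(\alpha-\rho)$. If you swap your ``integrate by parts, then Karamata'' for this ``replace, then integrate by parts'' maneuver, the rest of your argument goes through. (Two caveats you share with the paper rather than introduce yourself: when $\beta=\rho$ the two second-order contributions can cancel, so $\widetilde{A}$ may fail to be eventually one-signed as the $2\RV$ definition requires; and when $b=0$ with $\rho<0$ one actually has $\overline{W}\in\RV_{\rho-\alpha}$, so the claim $\widetilde{A}=\rho\phi$ needs reinterpretation in both accounts.)
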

\begin{proof}
Firstly, we need to prove that $\overline{\widetilde{F}}(\cdot)\in \RV_{-\alpha}$. Because $\lim\limits_{t\rightarrow\infty}\phi(t)=b$, for any $\epsilon > 0$, there exists $t_0> 0$ such that
$$
b - \epsilon \leq \phi(t) \leq b + \epsilon, \;\;\;\; \forall ~t > t_0.
$$
Fix any $x > 0$. 
We have
\begin{align*}
\lim_{t\rightarrow\infty}\frac{\overline{\widetilde{F}}(tx)}{\overline{\widetilde{F}}(t)}=\lim_{t\rightarrow\infty}\frac{\(1-\frac{b}{c}\)\overline{F}(tx)-\int_{tx}^\infty \(\frac{\phi(y)-b}{c}\)\d F(y)}{\(1-\frac{b}{c}\)\overline{F}(t)-\int_t^\infty \(\frac{\phi(y)-b}{c}\)\d F(y)}=x^{-\alpha}.
\end{align*}
Secondly, because $\overline{F}(\cdot) \in \RV_{-\alpha}$, $\phi(\cdot)-b\in \RV_{\rho}$ and Potter's inequality (Proposition B.1.9 (5) of \cite{de2006extreme}), for any $\delta>0$, there exists $t_1> t_0$ such that for $t,ty>t_1$, 
\begin{align*}
    \left|\frac{\overline{F}(ty)}{\overline{F}(t)}-y^{-\alpha}\right|\leq \epsilon\max\{y^{-\alpha+\delta},y^{-\alpha-\delta}\}, 
\end{align*}
and 
\begin{align*}
    \left|\frac{\phi(ty)-b}{\phi(t)-b}-x^\rho\right|\leq \epsilon\max\{y^{\rho+\delta},y^{\rho-\delta}\}.
\end{align*}
 For any $t>t_1$, by the dominated convergence theorem, we have
\begin{align*}
\frac{\overline{\widetilde{F}}(t)}{\overline{F}(t)}&=\(1-\frac{b}{c}\)-\int_t^\infty\frac{\phi(x)-b}{c\overline{F}(t)}\d F(x)\nonumber\\
    &=\(1-\frac{b}{c}\)+\frac{\phi(t)-b}{c\overline{F}(t)}\int_1^\infty\(\frac{\phi(tx)-b}{\phi(t)-b}-x^\rho+x^\rho\)\d\overline{F}(tx)\nonumber\\
     &=\(1-\frac{b}{c}\)-\(1+\int_1^\infty\(\frac{\overline{F}(tx)}{\overline{F}(t)}-x^{-\alpha}+x^{-\alpha}\)\d x^\rho\)\frac{\phi(t)-b}{c}\(1+o(1)\)\nonumber\\
    &=\(1-\frac{b}{c}\)\(1-\frac{\alpha}{(c-b)(\alpha-\rho)}\(\phi(t)-b\)\(1+o(1)\)\).
\end{align*}
By $\overline{F}(\cdot)\in 2\RV_{-\alpha,\beta}$ and Drees-type inequality (e.g., \cite{mao2013secondorder}), there exists $t_2>t_1$ such that for all $y > 0$ and $t > \max\{t_2, \frac{t_2}{y} \}$,
\begin{align*}
   \left|\frac{1}{A(t)}\(\frac{\overline{F}(ty)}{\overline{F}(t)}-y^{-\alpha}\) -H_{-\alpha,\beta}(y)\right|\leq \epsilon y^{-\alpha+\rho}\max\{y^{\delta},y^{-\delta}\}.
\end{align*}
For any $t > \max\{t_2, \frac{t_2}{x} \}$,
according to the dominated convergence theorem, for all $x>0$, it follows that
\begin{align*}
    \frac{\overline{\widetilde{F}}(tx)}{\overline{\widetilde{F}}(t)}-x^{-\alpha}&=\frac{\(1-\frac{b}{c}\)\overline{F}(tx)\(1-\frac{\alpha}{(c-b)(\alpha-\rho)}\(\phi(tx)-b\)\(1+o(1)\)\)}{\(1-\frac{b}{c}\)\overline{F}(t)\(1-\frac{\alpha}{(c-b)(\alpha-\rho)}\(\phi(t)-b\)\(1+o(1)\)\)}-x^{-\alpha}\nonumber\\
    &=\(x^{-\alpha}+x^{-\alpha}\frac{x^{\beta}-1}{\beta}A(t)\(1+o(1)\)\)\(1-\frac{\alpha(x^\rho-1)}{(c-b)(\alpha-\rho)}\(\phi(t)-b\)\(1+o(1)\)\)-x^{-\alpha}\\
     &=x^{-\alpha}\frac{x^{\beta}-1}{\beta}A(t)\(1+o(1)\)-x^{-\alpha}\frac{x^{\rho}-1}{\rho}\frac{\rho\alpha}{(c-b)(\alpha-\rho)}\(\phi(t)-b\)\(1+o(1)\).
\end{align*}
Thus, this completes the proof of Lemma \ref{lem:2rv}.   
\end{proof}

 The next lemma extends Lemma 2.4 of \cite{mao2013second}.
\begin{lemma}\label{lem:beta}
Let $F$ be the distribution function of a nonnegative random variable satisfying $\overline{F}(\cdot)\in \RV_{-\alpha}$ with $\alpha>1$. For any fixed $z\in (0,1)$ and $\beta>0$, define
\begin{align*}
V_{\beta}(zt)=\int_0^{zt}\(\(1-\frac{y}{t}\)^{-\beta}-1\)\d F(y), ~~~t>0.
\end{align*}
Then, as $t\rightarrow\infty$, we have
\begin{align*}
V_{\beta}(zt)\sim\beta t^{-1}\mu(t).
\end{align*}
\end{lemma}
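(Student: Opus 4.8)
The plan is to use the hypothesis $z<1$ to keep the factor $1-y/t$ bounded away from zero over the whole range of integration, so the integrand is harmless and admits a clean first-order factorization; a dominated-convergence argument, powered by the finite mean that $\alpha>1$ guarantees, then yields the equivalence.

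First I would record the two analytic inputs. Since $\overline{F}(\cdot)\in\RV_{-\alpha}$ with $\alpha>1$, the mean $\mu=\E[X]=\int_0^\infty\overline{F}(x)\,\d x$ is finite, so that $\mu(t)=\int_0^t x\,\d F(x)\uparrow\mu$ as $t\to\infty$; in particular $\mu(t)\sim\mu$ with $\mu>0$. It therefore suffices to prove $V_\beta(zt)\sim\beta t^{-1}\mu$. Next I would introduce $h(u):=\bigl((1-u)^{-\beta}-1\bigr)/(\beta u)$ for $u\in(0,z]$ and set $h(0):=1$. On $[0,z]$ with $z<1$ the map $u\mapsto(1-u)^{-\beta}$ is smooth, and the Taylor expansion $(1-u)^{-\beta}-1=\beta u+O(u^2)$ shows the singularity of $h$ at $0$ is removable with $h(0)=1$; hence $h$ is continuous, and thus bounded, on $[0,z]$, say $|h|\le M$. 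This lets me rewrite
\[
V_\beta(zt)=\int_0^{zt}\beta\,\frac{y}{t}\,h\!\left(\frac{y}{t}\right)\d F(y)
=\frac{\beta}{t}\int_0^{zt} y\,h\!\left(\frac{y}{t}\right)\d F(y).
\]

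The main step is to show $\int_0^{zt} y\,h(y/t)\,\d F(y)\to\mu$. I would view this as $\int_0^\infty y\,h(y/t)\,\one_{\{y\le zt\}}\,\d F(y)$ and invoke the dominated convergence theorem: for each fixed $y$ one has $y\le zt$ for all large $t$ and $y/t\to0$, so the integrand converges pointwise to $y\,h(0)=y$; and on $\{y\le zt\}$ one has $y/t\le z$, whence the integrand is dominated by $My$, which is $\d F$-integrable precisely because $\int_0^\infty y\,\d F(y)=\mu<\infty$. Consequently the integral tends to $\int_0^\infty y\,\d F(y)=\mu$, giving $V_\beta(zt)=\beta t^{-1}(\mu+o(1))\sim\beta t^{-1}\mu\sim\beta t^{-1}\mu(t)$, as claimed.

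The only delicate point is securing the dominating function, and it rests on exactly the two standing assumptions: the boundedness of $h$ on $[0,z]$ (which uses $z<1$ crucially, since $z=1$ would allow $1-y/t\downarrow0$ and ruin the bound) together with the finiteness of the first moment, i.e. $\alpha>1$. Once these are in place the remainder of the argument is routine.
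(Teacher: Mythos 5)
Your proof is correct, and it takes a genuinely different---and more elementary---route than the paper's. The paper integrates by parts to write $V_{\beta}(zt)$ as a boundary term plus $\frac{\beta}{t}\int_0^{zt}\overline{F}(y)\left(1-\frac{y}{t}\right)^{-\beta-1}\d y$, disposes of the boundary term via Karamata's theorem (using $t\overline{F}(t)=o\left(\int_0^t\overline{F}(y)\d y\right)$), and then sandwiches the kernel between $1+(\beta+1)x$ and $1+(\beta+1)(1-z)^{-\beta-2}x$ for $x\in(0,z)$ to conclude that the remaining integral is asymptotic to $\beta t^{-1}\mu(zt)\sim\beta t^{-1}\mu(t)$. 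You instead factor the integrand as $\beta\,\frac{y}{t}\,h\!\left(\frac{y}{t}\right)$ with $h$ continuous, hence bounded, on the compact set $[0,z]$, and apply dominated convergence with dominating function $My$; this identifies the limit directly as $\beta t^{-1}\mu$, and the final equivalence $\beta t^{-1}\mu\sim\beta t^{-1}\mu(t)$ holds because $\mu(t)\to\mu\in(0,\infty)$ (positivity of $\mu$ being automatic from $\overline{F}\in\RV_{-\alpha}$, which forces $\overline{F}>0$ eventually). What each approach buys: yours is shorter, avoids Karamata's theorem entirely, and exploits the fact that under $\alpha>1$ the asymptote $\beta t^{-1}\mu(t)$ is really just $\beta t^{-1}\mu$. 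The paper's survival-function formulation keeps both statement and proof in terms of the truncated mean $\mu(t)$, mirroring Lemma 2.4 of \cite{mao2013second} which this lemma extends; that structure is what would survive in an infinite-mean regime $\alpha\le 1$, where $\mu(t)\to\infty$ and your dominating function $My$ is no longer $\d F$-integrable. For the lemma as actually stated (with $\alpha>1$), both proofs are complete, and yours is arguably the cleaner one.
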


\begin{proof}
Since $\alpha>1$ and $\mu(t)\rightarrow\mu$ as $t\rightarrow\infty$, we have that $\mu(\cdot)<\infty$ and $\frac{\mu(\cdot)}{\cdot}\in \RV_{-1}$. We have
\begin{align*}\label{eq:lem1}
\mu(t)=\int_0^tx\d F(x)=-\int_0^tx\d \overline{F}(x)=-t\overline{F}(t)+\int_0^t\overline{F}(x)\d x.
\end{align*}
According to Karamata's theorem, it is easy to check that
\begin{align}
\mu(t)\sim \int_0^t\overline{F}(x)\d x ~~\mbox{as}~~ t\rightarrow\infty, ~~\mbox{and}~~~ \lim_{t\rightarrow\infty}\frac{t\overline{F}(t)}{\int_0^t\overline{F}(y)\d y}=0.
\end{align}
By the integration by parts, it follows that
\begin{align*}
V_{\beta}(zt)&=-\int_0^{zt}\(\(1-\frac{y}{t}\)^{-\beta}-1\)\d \overline{F}(y)=-(1-z)^{-\beta}\overline{F}(z t)+\frac{\beta}{t}\int_0^{z t}\overline{F}(y)\(1-\frac{y}{t}\)^{-\beta-1}\d y.
\end{align*}
For any fixed $z\in (0,1)$, by \eqref{eq:lem1},
\begin{align*}
\lim_{t\rightarrow\infty}\frac{t\overline{F}(zt)}{\int_0^{zt}\overline{F}(y)\(1-\frac{y}{t}\)^{-\beta-1}\d y}\leq\lim_{t\rightarrow\infty}\frac{t\overline{F}(z t)}{\int_0^{zt}\overline{F}(y)\d y}=0.
\end{align*}
Since \eqref{eq:lem1} holds for all $\alpha>1$ and  $1+(\beta+1)x\leq(1-x)^{-\beta-1}\leq 1+(\beta+1)(1-z)^{-\beta-2}x$ for $x\in (0,z)$, we obtain
\begin{align*}
\lim_{t\rightarrow\infty}\frac{tV_{\beta}(zt)}{\mu(zt)}&=\beta \lim_{t\rightarrow\infty}\frac{\int_0^{zt}\overline{F}(y)\(1-\frac{y}{t}\)^{-\beta-1}\d y}{\int_0^{zt}\overline{F}(y)\d y}\nonumber\\
&\geq\beta \lim_{t\rightarrow\infty}\frac{\int_0^{z t}\overline{F}(y)\d y+(\beta+1)\int_0^{z t}\overline{F}(y)y/t\d y}{\int_0^{z t}\overline{F}(y)\d y}\nonumber\\
&=\beta \(1+(\beta+1)\lim_{t\rightarrow\infty}\frac{zt\overline{F}(zt)}{\int_0^{zt}\overline{F}(y)\d y+t\overline{F}(zt)}\)\nonumber\\
&=\beta ,
\end{align*}
and
\begin{align*}
\lim_{t\rightarrow\infty}\frac{tV_{\alpha}(zt)}{\mu(zt)}\leq\beta \(1+(\beta+1)(1-z)^{-\beta-2}\lim_{t\rightarrow\infty}\frac{\int_0^{zt}\overline{F}(y)y\d y}{t\int_0^{zt}\overline{F}(y)\d y}\)=\beta .
\end{align*}
By $\frac{\mu(\cdot)}{\cdot}\in\RV_{-1}$, it follows that
$$V_{\beta}(zt)\sim \beta t^{-1}\mu(zt)\sim\beta t^{-1}\mu(t),~~\mbox{as} ~~t\rightarrow\infty.$$
This completes the proof.
\end{proof}

\begin{lemma}\label{lem:csum}
   Under the conditions of Theorem \ref{the:3.2}, as $t\rightarrow\infty$, we have that
\begin{align*}
    \E\left[X_m\big|S_n>t\right]=\frac{\alpha t}{(\alpha-1)n}\(1+\widetilde{B}(t)(1+o(1))\)+o\(\sum\limits_{i=1}^n|\phi_i(t)-d_i|\),
\end{align*}
where
\begin{align}\label{eq:B}
\widetilde{B}(t)=\frac{1}{\alpha(\alpha-\beta-1)}A(t)-\frac{\mu_n^*(t)}{t}.
\end{align} 
\end{lemma}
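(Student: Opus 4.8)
The plan is to write $\E\left[X_m\big|S_n>t\right]=\E\left[X_m\id_{\{S_n>t\}}\right]/\overline{G}(t)$ and expand numerator and denominator separately to second order. The denominator is already controlled by Proposition \ref{pro:sum}: since $\alpha>1$ we have $\overline{G}(t)=n\overline{F}(t)\(1+\alpha t^{-1}\mu_n^*(t)(1+o(1))\)+o\(\sum_{i=1}^n|\phi_i(t)-d_i|\)$. So the whole problem reduces to a second-order expansion of the numerator $N_m(t):=\E\left[X_m\id_{\{S_n>t\}}\right]$.

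For $N_m(t)$ I would mirror the decomposition used in the proof of Proposition \ref{pro:sum}: insert the Sarmanov density and split $N_m(t)$ into the independent (iid) contribution $N_m^0(t):=\int_{\Omega_t}x_m\prod_k\d F(x_k)=\E\left[X_m^*\id_{\{S_n^*>t\}}\right]$ plus the pairwise corrections $\sum_{i<j}a_{ij}\int_{\Omega_t}x_m\phi_i(x_i)\phi_j(x_j)\prod_k\d F(x_k)$, each linearised through the tilted laws $\widetilde{F_i}$ (with $\d\widetilde{F_i}=(1-\phi_i/c_i)\d F$) exactly as there. For the independent part I would exploit exchangeability: $N_m^0(t)=\frac1n\E\left[S_n^*\id_{\{S_n^*>t\}}\right]=\frac1n\E\left[S_n^*\big|S_n^*>t\right]\overline{G^*}(t)$, where $\overline{G^*}$ is the iid ($a_{ij}=0$) case of Corollary \ref{cor:3.1}, so $\overline{G^*}\in 2\RV_{-\alpha,\lambda}$ with auxiliary function $A_{G^*}^{(n)}$; then Lemma \ref{lem:cr} gives $\E\left[S_n^*\big|S_n^*>t\right]=\frac{\alpha t}{\alpha-1}\(1+\frac{1}{\alpha(\alpha-\lambda-1)}A_{G^*}^{(n)}(t)(1+o(1))\)$. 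This already produces the leading $\frac{\alpha t}{(\alpha-1)n}$ and, through the auxiliary function, the $A(t)$-contribution with coefficient $\frac{1}{\alpha(\alpha-\beta-1)}$ (the genuine second-order term when $\beta>-1$).

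The pairwise corrections I expect to enter only at second order: because $\E[\phi_i(X_i)]=0$, the single-big-jump configurations in which the large coordinate does not carry a $\phi$-factor average to zero, so what survives is governed by the truncated moments $\mu_i(t)=\int_0^t x\phi_i(x)\d F(x)$ and the limits $d_i=\lim\phi_i$. This is exactly where Lemma \ref{lem:2rv} (the $2\RV$ behaviour of $\overline{\widetilde{F_i}}$ and the constant $\tfrac{\alpha}{\alpha-\rho}\(\phi(t)-b\)$) and Lemma \ref{lem:beta} (the near-diagonal convolution asymptotics $V_\beta(zt)\sim\beta t^{-1}\mu(t)$) enter; together they should show the corrections contribute precisely $\sum_{i<j}\frac{a_{ij}\(d_i\mu_j(t)+d_j\mu_i(t)\)}{n}\,t^{-1}$, i.e.\ the Sarmanov part of $\mu_n^*(t)$, up to an error $o\(\sum_{i=1}^n|\phi_i(t)-d_i|\)$. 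Dividing $N_m(t)$ by $\overline{G}(t)$, the ratio $\overline{G^*}/\overline{G}$ together with the surviving correction pieces assembles the $-\mu_n^*(t)/t$ term, and collecting everything yields $\widetilde{B}(t)=\frac{1}{\alpha(\alpha-\beta-1)}A(t)-\frac{\mu_n^*(t)}{t}$ as in \eqref{eq:B}.

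The hard part will be the second-order bookkeeping in these last two steps rather than any single estimate. Concretely: (i) isolating the second-order contribution of the \emph{asymmetric} pairwise integrals — establishing the $\E[\phi_i]=0$ cancellation at leading order and then extracting the exact $\mu_i(t),d_i$ dependence so that the symmetric combination $\mu_n^*$ reappears with the weights $1/n$; and (ii) tracking which of the two competing scales, the $\RV_\beta$ scale carried by $A(t)$ and the $\RV_{-1}$ scale carried by $t^{-1}\mu_n^*(t)$, is dominant, so that the $A$-coefficient collapses to $\frac{1}{\alpha(\alpha-\beta-1)}$ while the remaining $\RV_{-1}$ pieces combine into $-\mu_n^*(t)/t$. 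Once these are in place the division by $\overline{G}(t)$ is routine, and the substitution $t=\VaR_p(S_n)$ via Theorem \ref{the:3.1} is precisely what the proof of Theorem \ref{the:3.2} then invokes.
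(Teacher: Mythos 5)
Your route is genuinely different from the paper's. The paper never splits $\E\left[X_m\id_{\{S_n>t\}}\right]$ from $\overline{G}(t)$: it uses the tail-integral identity $\E\left[X_m|S_n>t\right]=t\,\bigl(1+\tfrac{\overline{F}(t)}{\overline{G}(t)}(Q_1(t)-Q_2(t))\bigr)$ with $Q_1(t)=\int_1^\infty\p(X_m>zt)/\overline{F}(t)\,\d z$ and $Q_2(t)=\int_0^1\p(S_n>t,X_m\leq zt)/\overline{F}(t)\,\d z$, and evaluates $Q_2$ by re-running the Sarmanov decomposition of Proposition \ref{pro:sum} on the regions $\{S_n>t,X_m\leq zt\}$, with Lemma \ref{lem:beta} handling the $J_1$-type integrals. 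Your plan (iid block via exchangeability, $N_m^0(t)=\tfrac1n\E\left[S_n^*\id_{\{S_n^*>t\}}\right]$, then Lemma \ref{lem:cr} applied to $\overline{G^*}\in 2\RV_{-\alpha,\lambda}$ with the auxiliary function of Corollary \ref{cor:3.1}) is a legitimate alternative strategy. But it does \emph{not} produce the formula \eqref{eq:B} you assert at the end, and the obstruction is a hard factor of $\alpha$, not finishable bookkeeping: Lemma \ref{lem:cr} gives the second-order term $\tfrac{1}{\alpha(\alpha-\lambda-1)}A_G^{(n)}(t)$ with $A_G^{(n)}(t)=A(t)-\alpha t^{-1}\mu_n^*(t)$ and $\lambda=\max\{-1,\beta\}$. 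When $\beta\leq-1$ (so $\lambda=-1$, precisely the regime in which the $\mu_n^*$ term in \eqref{eq:B} is not swallowed by the $A$-term) this equals $-\tfrac{\mu_n^*(t)}{\alpha t}(1+o(1))$, whereas \eqref{eq:B} demands $-\tfrac{\mu_n^*(t)}{t}$; for $-1<\beta\leq 0$ both collapse to $\tfrac{1}{\alpha(\alpha-\beta-1)}A(t)(1+o(1))$, which is why the clash is invisible there.

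The striking part is that the discrepancy indicts the statement rather than your method. In the iid case ($a_{ij}=0$, kernels $\phi_i=1-2F$ so that $|\phi_i(t)-d_i|=2\overline{F}(t)=o(t^{-1})$ hides nothing), exchangeability forces $\E\left[X_m|S_n>t\right]=\tfrac1n\E\left[S_n|S_n>t\right]$ exactly, and the right-hand side is governed by Lemma \ref{lem:cr} plus Corollary \ref{cor:3.1} --- the very step the paper itself uses for the $\CTE$ part of Theorem \ref{the:3.1}; so the printed Lemma \ref{lem:csum} contradicts results the paper has already proved. The single-big-jump check confirms which side is right: given $S_n>t$, one summand is huge and the rest contribute mean $(n-1)\mu$, so $\E\left[S_n|S_n>t\right]=(n-1)\mu+\tfrac{\alpha}{\alpha-1}(t-(n-1)\mu)+o(1)=\tfrac{\alpha t}{\alpha-1}\bigl(1-\tfrac{(n-1)\mu}{\alpha t}\bigr)+o(1)$, matching your $-\mu_n^*(t)/(\alpha t)$, not $-\mu_n^*(t)/t$. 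The paper's proof loses the factor inside $Q_2$: for each \emph{fixed} $z\in(0,1)$ it discards $\overline{F}(zt)$ from $J_1(t,z)=F(zt)+V_\alpha(zt)$ as $o(\mu(t)/t)$ and only afterwards integrates in $z$, but $\int_0^1\overline{F}(zt)\,\d z=\tfrac1t\int_0^t\overline{F}(v)\,\d v\sim\mu(t)/t$ is itself of exact second order, so the interchange inflates $Q_2$ by $(n-1)\tfrac{\mu(t)}{t}(1+o(1))$ already in the iid case; propagating the corrected $Q_2$ through the paper's algebra returns $\widetilde{B}(t)=\tfrac{1}{\alpha(\alpha-\beta-1)}A(t)-\tfrac{\mu_n^*(t)}{\alpha t}$, i.e.\ your answer, and restores the iid consistency $\MES_{p,m}(S_n)=\tfrac1n\CTE_p(S_n)$ between Theorems \ref{the:3.1} and \ref{the:3.2}, which the printed statements violate at order $1/F^{\leftarrow}(p)$. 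So, as a proof of the lemma as printed, your proposal fails at its final step and cannot succeed; as mathematics, it yields the correct constant, and it is the lemma (and the $\mu_n^*$-sensitive conclusions downstream in Theorems \ref{the:3.2} and \ref{the:4.2}) that needs the $1/\alpha$ correction. Separately, note that your treatment of the Sarmanov correction integrals is only sketched; even after the fix, that part still has to be carried out explicitly, along the lines of the paper's $K_i$ and $K_{i,j}$ computations.
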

\begin{proof}
It follows that
\begin{align*}
\E\left[X_m\big|S_n>t\right]&=\int_0^\infty \frac{\p\left(S_n>t,X_m>z\right)}{\p(S_n>t)}\d z\\
&=t\(1+\frac{\p(X>t)}{\p(S_n>t)}\(\int_1^\infty\frac{\p(X_m>zt)}{\p(X>t)}\d z-\int_0^1\frac{\p\left(S_n>t,X_m\leq zt\right)}{\p(X>t)}\d z\)\)\\
&:=t\(1+\frac{\p(X>t)}{\p(S_n>t)}\(Q_1(t)-Q_2(t)\)\).
\end{align*}
For $Q_1(t)$, using the fact that $\overline{F}(\cdot)\in 2\RV_{-\alpha,\beta} $ with $\alpha>1$, $\beta\leq0$ and an auxiliary function $A(\cdot)$, as $t\rightarrow \infty$, we have
\begin{align*}
Q_1(t)&=\int_1^\infty z^{-\alpha}\(1+\frac{z^{\beta}-1}{\beta}A(t)(1+o(1))\)\d z=\frac{1}{\alpha-1}\(1+\frac{1}{\alpha-\beta-1}A(t)(1+o(1))\).
\end{align*}
For $t>0$ and $z\in(0,1)$, write $\Omega_{t,z}:=\{(x_1,\ldots,x_n)\in \R_+^n: \sum_{i=1}^nx_i>t, x_m\leq zt\}$.
For $Q_2(t)$, the key idea is to connect $\p\left(S_n>t,X_m\leq zt\right)$.  Similar to the proof of Proposition \ref{pro:sum}, we have that
\begin{align}\label{eq:Q_2}
 &\quad \p\left(S_n>t,X_m\leq zt\right)\nn\\
&=\int_{\Omega_{t,z}}\prod_{k=1}^{n}\d F(x_k)+\sum_{1\leq i< j\leq n}a_{ij}c_ic_j\(\int_{\Omega_{t,z}}\prod_{k=1}^{n}\d F(x_k)-\int_{\Omega_{t,z}}\prod_{k=1,k\neq i}^{n}\d F(x_k)\d\widetilde{F_i}(x_i)\right.\nonumber\\
&\quad\left.-\int_{\Omega_{t,z}}\prod_{k=1,k\neq j}^{n}\d F(x_k)\d\widetilde{F_j}( x_j)+\int_{\Omega_{t,z}}\prod_{k=1,k\neq i,j}^{n}\d F(x_k)\d\widetilde{F_i}(x_i)\d\widetilde{F_j}(x_j)\)\nonumber\\
&:=J(t,z)+\sum_{1\leq i< j\leq n}a_{i j}c_i c_j\(J(t,z)-K_i(t,z)-K_j(t,z)+K_{i,j}(t,z)\).
\end{align}
For simplicity, denote $S_n^{(m)}=\sum\limits_{i = 1, i\neq m}^nX_i^*$  which has the distribution $G_m$.
  By Theorem 3.5 of \cite{mao2013second} with $\alpha>1$ and the induction assumption, as $t\rightarrow \infty$, it is easy to check that
  \begin{align*}
\frac{\overline{G_m}(t)}{\overline{F}(t)}=(n-1)\(1+(n-2)\alpha t^{-1}\mu(t)\(1+o(1)\)\)+o\(A(t)\).
\end{align*}
 Clearly, $\overline{G_m}(\cdot)\in 2\RV_{-\alpha,\kappa}$ with an auxiliary function $B(\cdot)$. 
  Then, $B(\cdot)$ is given by
 \begin{align*}
B(t)=A(t)-(n-2)\alpha t^{-1}\mu(t).
\end{align*}
For $J(t,z)$, by Corollary 3.3 of \cite{mao2013secondorder} and the dominated convergence theorem, as $t\rightarrow \infty$, it follows that
 \begin{align*}
J(t,z)&=\int_0^{zt}\overline{G_m}(t-y)\d F(y)\nonumber\\
 &=\overline{G_m}(t)\int_0^{zt} \(\(1-\frac{y}{t}\)^{-\alpha}+H_{-\alpha,\lambda}\(1-\frac{y}{t}\)B(t)\(1+o(1)\) \) \d F(y)\nonumber\\
 &:=\overline{G_m}(t)\(J_1(t,z)+J_2(t,z)\).
\end{align*}
For $J_1(t,z)$, by Lemma \ref{lem:beta} and the fact that $\overline{F}(t)=o\(\frac{\mu(t)}{t}\)$ as $t\rightarrow \infty$, we have that
\begin{align*}
J_1(t,z)&=\int_0^{zt}\(\(1-\frac{y}{t}\)^{-\alpha}-1\)\d F(y)+F(zt)=\alpha t^{-1}\mu(t)+1-\overline{F}(zt)=1+\alpha t^{-1}\mu(t)\(1+o(1)\).
\end{align*}
For $J_2(t,z)$, since $H_{\alpha,\lambda}\(1-\frac{y}{t}\)\leq\frac{(1-z)^{-\alpha}}{|\lambda|}\(\(1-\frac{y}{t}\)^{\lambda}-1\)$ for any $y\in (0,zt)$ and $z\in (0,1)$, by Lemma 5.6 in \cite{barbe2005asymptotic}, we have
\begin{align*}
\int_0^{z t}H_{\alpha,\lambda}\(1-\frac{y}{t}\)\d F(y)\leq \int_0^{z t}\frac{(1-z)^{-\alpha}}{|\lambda|}\(\(1-\frac{y}{t}\)^{\lambda}-1\)\d F(y)=0.
\end{align*}
Then,
\begin{align*}
J_2(t,z)=o\(B(t)\).
\end{align*}
Thus,
\begin{align*}
\frac{J(t,z)}{\overline{F}(t)}&=\frac{\overline{G_m}(t)}{\overline{F}(t)}\Big(1+\alpha t^{-1}\mu(t)\(1+o(1)\)+o\(B(t)\)\Big)\\
&=(n-1)\Big(1+(n-1)\alpha t^{-1}\mu(t)\(1+o(1)\)+o\(|A(t)|\)\Big).
\end{align*}
 Similarly, it is easy to see that
\begin{align*}
\frac{K_i(t,z)}{\overline{F}(t)}&=
\(n-1-\frac{d_i}{c_i}\)\Big(1+\alpha t^{-1}(n-1)\mu(t)\(1+o(1)\)\Big)+o\(|A(t)|\)\\
&\quad-\(\frac{\alpha(n-1)\mu_i(t)}{c_it}+\frac{\alpha(\phi_i(t)-d_i)}{c_i(\alpha-\rho_i)}\)(1+o(1)),
\end{align*}
and
\begin{align*}
\frac{K_{i,m}(t,z)}{\overline{F}(t)}&=
\(n-1-\frac{d_i}{c_i}-\frac{d_m}{c_m}\)\Big(1+\alpha t^{-1}(n-1)\mu(t)\(1+o(1)\)\Big)+o\(|A(t)|\)\\
&\quad -\(\frac{\alpha(n-1)\mu_i(t)}{c_it}+\frac{\alpha(\phi_i(t)-d_i)}{c_i(\alpha-\rho_i)}+\frac{\alpha(n-1)\mu_m(t)}{c_mt}+\frac{\alpha(\phi_m(t)-d_m)}{c_m(\alpha-\rho_m)}\)(1+o(1)).
\end{align*}
Plugging $J(t,z)$, $K_i(t,z),K_m(t,z)$ and $K_{i,m}(t,z)$ into \eqref{eq:Q_2} yields that
\begin{align*}
 \frac{\p\left(S_n>t,X_m\leq zt\right)}{\overline{F}(t)}=\(n-1\)\Big(1+\alpha t^{-1}\mu_n^*(t)\(1+o(1)\)\Big)+o\(|A(t)|+\sum_{i=1}^n|\phi_i(t)-d_i|\).
\end{align*}
Thus, we have that
\begin{align*}
Q_2(t)=\int_0^1\frac{\p\left(S_n>t,X_m\leq zt\right)}{\overline{F}(t)}\d z=\(n-1\)\Big(1+\alpha t^{-1}\mu_n^*(t)\(1+o(1)\)\Big)+o\(|A(t)|+\sum_{i=1}^n|\phi_i(t)-d_i|\).
\end{align*}
According to the dominated convergence theorem, it follows that as $t\rightarrow\infty$,
\begin{align*}
\E\left[X_m\big|S_n>t\right]&=t\(1+\(n\(1+\widetilde{A}_n(t)(1+o(1))\)\)^{-1}\(Q_1(t)-Q_2(t)\)\)\nonumber\\
&=t\(1+\frac{1}{n}\(1-\alpha t^{-1}\mu_n^*(t)(1+o(1))+o\(A(t)+\sum_{i=1}^n(\phi_i(t)-d_i)\)\)\right.\\
&\quad\left.\cdot\(\frac{1}{\alpha-1}\(1+\frac{1}{\alpha-\beta-1}A(t)(1+o(1))\)-(n-1)\(1+\alpha t^{-1}\mu_n^*(t)+o(|A(t)|)\)\)\)\\
&=\frac{\alpha t}{(\alpha-1)n}\(1+\(\frac{1}{\alpha(\alpha-\beta-1)}A(t)-\frac{\mu_n^*(t)}{t}\)(1+o(1))\)+o\(\sum_{i=1}^n|\phi_i(t)-d_i|\).
\end{align*}
This completes the proof of the lemma.
\end{proof}

\begin{lemma}\label{lem:5.2}
Under the conditions of Theorem \ref{the:3.2}, there exists some sufficiently large $t_1$ such that for any $t\geq t_1$, 
\begin{align*}
    \E\left[\(X_m-t_1\)_+\Big|S_n>t\right]=\frac{\alpha t}{(\alpha-1)n}\(1+\widetilde{B}(t)(1+o(1))\)-\frac{t_1}{n}+o\(\sum\limits_{i=1}^n|\phi_i(t)-d_i|\),
\end{align*}
where  $\widetilde{B}(t)$ is defined in \eqref{eq:B}.
\end{lemma}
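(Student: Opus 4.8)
The plan is to reduce the claim to Lemma~\ref{lem:csum} by stripping off a bounded functional of $X_m$. Using $(X_m-t_1)_+ = X_m - (X_m\wedge t_1)$ together with the layer--cake identity, I would begin from
\[
\E\left[(X_m-t_1)_+ \mid S_n>t\right] = \int_{t_1}^{\infty}\p(X_m>u\mid S_n>t)\,\d u = \E\left[X_m\mid S_n>t\right] - \int_0^{t_1}\p(X_m>u\mid S_n>t)\,\d u .
\]
The first term on the right is exactly the quantity expanded in Lemma~\ref{lem:csum}, so the entire problem reduces to evaluating the finite integral $\int_0^{t_1}\p(X_m>u\mid S_n>t)\,\d u$ as $t\to\infty$ (with $t_1$ a fixed, large level; in the intended application $t_1=F^{\leftarrow}(p)\uparrow\infty$).

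The key step is a single-big-jump evaluation of the conditional tail $\p(X_m>u\mid S_n>t)=1-\p(S_n>t,\,X_m\le u)/\p(S_n>t)$ at a fixed level $u$. For the numerator I would use the same kernel-splitting $\overline G=I+\sum_{i<j}a_{ij}c_ic_j(\,\cdot\,)$ as in Proposition~\ref{pro:sum} and Lemma~\ref{lem:csum}: on $\{X_m\le u\}$ the exceedance $\{S_n>t\}$ must be carried by the remaining $n-1$ coordinates, whose sum has tail $\overline{G_m}(\cdot)\sim(n-1)\overline F(\cdot)$ as established in Lemma~\ref{lem:csum}. The independent part then contributes $(n-1)\overline F(t)F(u)$ and the Sarmanov kernels contribute the $\phi_i$-corrections, so that $\p(S_n>t,X_m\le u)\sim(n-1)\overline F(t)F(u)$ to leading order. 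Dividing by $\p(S_n>t)\sim n\overline F(t)$ from Proposition~\ref{pro:sum} produces the pointwise limit
\[
\p(X_m>u\mid S_n>t)\longrightarrow \frac{1+(n-1)\overline F(u)}{n},\qquad t\to\infty .
\]

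Integrating this over $u\in(0,t_1)$ --- justified by dominated convergence, since the integrand is bounded by $1$ and the convergence is uniform on the compact $u$-range via the Potter- and Drees-type bounds already used in Lemma~\ref{lem:2rv} --- yields
\[
\int_0^{t_1}\p(X_m>u\mid S_n>t)\,\d u = \frac{t_1}{n} + \frac{n-1}{n}\int_0^{t_1}\overline F(u)\,\d u + o(1).
\]
The dominant correction is precisely $-t_1/n$. The leftover $\frac{n-1}{n}\int_0^{t_1}\overline F(u)\,\d u$ is bounded by $\frac{n-1}{n}\mu=O(1)$ (finite because $\alpha>1$), hence of strictly lower order than the $O(t)$ main term from Lemma~\ref{lem:csum} and than $t_1/n$ in the regime $t_1\uparrow\infty$; it is therefore absorbed into the error. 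Substituting the Lemma~\ref{lem:csum} expansion for $\E[X_m\mid S_n>t]$ and the subtracted $t_1/n$ delivers the stated formula.

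The hard part will not be the pointwise asymptotics but the uniform bookkeeping: I must pass the limit through $\int_0^{t_1}$ while simultaneously controlling the Sarmanov-kernel remainders so that the corrections collected from $\p(S_n>t,X_m\le u)$ are genuinely $o\!\left(\sum_{i=1}^n|\phi_i(t)-d_i|\right)$ uniformly in $u$. This forces me to re-run the quantitative splitting of $\overline G(\cdot)$ from Proposition~\ref{pro:sum} (with the Potter/Drees inequalities), rather than invoke a soft single-big-jump heuristic, so that the kernel-dependence terms are tracked at exactly the order claimed in the statement.
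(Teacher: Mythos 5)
Your decomposition $(X_m-t_1)_+ = X_m-(X_m\wedge t_1)$ plus Lemma \ref{lem:csum} is essentially the paper's own computation run on the complementary range: the paper integrates $\p(X_m>x\mid S_n>t)$ over $x\in(t_1,\infty)$ directly and re-runs the $Q_1$/$Q_2$ splitting of Lemma \ref{lem:csum} with lower integration limit $t_1/t$, while you evaluate $\int_0^{t_1}\p(X_m>u\mid S_n>t)\,\d u$ and subtract it from the full conditional mean. Your pointwise limit and the dominated-convergence step on the compact range are fine. The genuine gap is the very last step: the leftover $\frac{n-1}{n}\int_0^{t_1}\overline{F}(u)\,\d u$ cannot be ``absorbed into the error.'' For fixed $t_1$ it is a strictly positive constant; it does not shrink as $t\to\infty$, and the joint regime does not help either, since as $t_1\uparrow\infty$ it tends to $\frac{(n-1)\mu}{n}>0$. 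By contrast, every error class the statement allows --- $o\left(\sum_{i=1}^n|\phi_i(t)-d_i|\right)$ and the $o(1)$ multiplying $\widetilde{B}(t)$ --- vanishes as $t\to\infty$, and the lemma explicitly tracks constants of exactly this size: $-t_1/n$, and the term $-\frac{\alpha\mu_n^*(t)}{(\alpha-1)n}$ contained in $\frac{\alpha t}{(\alpha-1)n}\widetilde{B}(t)$. Comparing your term against the $O(t)$ main term, or against $t_1/n$ ``as $t_1\uparrow\infty$'', is a comparison at the wrong precision. What your argument actually establishes is the stated expansion with an additional $-\frac{n-1}{n}\int_0^{t_1}\overline{F}(u)\,\d u$ on the right-hand side, which is a different statement.

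There is a real subtlety here that your route makes visible. If Lemma \ref{lem:csum} is taken at face value, your (correct) evaluation $\int_0^{t_1}\p(X_m>u\mid S_n>t)\,\d u=\frac{t_1}{n}+\frac{n-1}{n}\int_0^{t_1}\overline{F}(u)\,\d u+o(1)$ actually contradicts Lemma \ref{lem:5.2}; the paper's two lemmas are mutually consistent only because both proofs discard constants of the same kind, namely the contribution of the region $zt=O(t_1)$ when the fixed-$z$ asymptotics of $\p(S_n>t,\,X_m\le zt)/\overline{F}(t)$ are integrated over $z$ --- this neglects $(n-1)t^{-1}\int_{t_1}^{t}\overline{F}(u)\,\d u$, i.e.\ a constant of size $\frac{n-1}{n}\int_{t_1}^{\infty}\overline{F}(u)\,\d u$ after multiplying by $t\,\overline{F}(t)/\overline{G}(t)\approx t/n$. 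The discrepancy is concrete: for $n=2$, $a_{12}=0$ and Pareto marginals, exchangeability gives $\E[X_m\mid S_2>t]=\frac{1}{2}\E[S_2\mid S_2>t]$, and Lemma \ref{lem:cr} combined with Corollary \ref{cor:3.1} yields the constant term $\frac{\alpha-2}{2(\alpha-1)^2}$, whereas Lemma \ref{lem:csum} yields $-\frac{1}{2(\alpha-1)^2}$; the gap is exactly $\mu/2$. Such constants are immaterial in the intended applications (Theorems \ref{the:3.2} and \ref{the:4.2}, where $t_1=F^{\leftarrow}(p)$ or $\e_p(X_m)\to\infty$ and all quantities are divided by $F^{\leftarrow}(p)$), but at the precision at which Lemma \ref{lem:5.2} is stated your proposal does not close the proof: you must either carry the extra constant explicitly and show it is negligible in the regime where the lemma is invoked, or rework Lemma \ref{lem:csum} with the same constant-order care so that the two constants cancel in your subtraction.
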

\begin{proof}
This proof proceeds similarly as that of Proposition \ref{pro:sum}. Applying the integration by parts, we conclude that
\begin{align*}
\E\left[\(X_m-t_1\)_+\Big|S_n>t\right]
&=t\int_{\frac{t_1}{t}}^\infty\frac{\p\(X_m>zt,S>t\)}{\p\(S>t\)}\d z\\
&=t\(1-\frac{t_1}{t}+\frac{\p(X>t)}{\p(S_n>t)}\(\int_1^\infty\frac{\p(X_m>zt)}{\p(X>t)}\d z-\int_{\frac{t_1}{t}}^1\frac{\p\left(S_n>t,X_m\leq z t\right)}{\p(X>t)}\d z\)\).
\end{align*}
Following a similar analysis of $Q_2(t)$ of Proposition \ref{pro:sum}, we have that
\begin{align*}
\int_{\frac{t_1}{t}}^1\frac{\p\left(S_n>t,X_m\leq zt\right)}{\p(X>t)}\d z=(1-t_1/t)\(n-1\)\(1+\alpha t^{-1}\mu_n^*(t)\)+o\(A(t)+\sum_{i=1}^n(\phi_i(t)-d_i)\).
\end{align*}
Thus, by  {Taylor's expansion}, as $t\rightarrow \infty$, we have
\begin{align*}
 &\quad\E\left[\(X_m-t_1\)_+\Big|S_n>t\right]\nn\\
&= t\(1-\frac{t_1}{t}+\(n\(1+\widetilde{A}_n(t)(1+o(1))\)\)^{-1}\(\frac{1}{\alpha-1}\(1+\frac{1}{\alpha-\beta-1}A(t)\)\right.\right.\\
&\quad\left.\left.-(1-t_1/t)\(n-1\)\(1+\alpha t^{-1}\mu_n^*(t)+o\(A(t)+\sum_{i=1}^n(\phi_i(t)-d_i)\)\)\)\)\\
&=\frac{\alpha t}{(\alpha-1)n}\(1+\(\frac{1}{\alpha(\alpha-\beta-1)}A(t)-\frac{\mu_n^*(t)}{t}\)(1+o(1))\)-\frac{t_1}{n}+o\(\sum_{i=1}^n(\phi_i(t)-d_i)\).
\end{align*}
We complete the proof of this lemma.
\end{proof}

\begin{lemma}\label{lem:cr}
Let $Y$ be the nonnegative random variable with a distribution $H$ satisfying that $\overline{H}(\cdot)\in 2\RV_{-\alpha, \rho}$
 with $\alpha>1$, $\rho\leq 0$ and an auxiliary function $A_H(\cdot)$. As $t\rightarrow\infty$, we have
 \begin{align*}
     \E\left[Y\big|Y>t\right]=\frac{\alpha t}{\alpha-1}\(1+\frac{1}{\alpha(\alpha-\rho-1)}A_H(t)(1+o(1))\).
 \end{align*}
\end{lemma}
\begin{proof}
    By the dominated convergence theorem ensured by Theorem 2.3.9 of \cite{de2006extreme}, it follows that, as $t\rightarrow \infty$,
\begin{align*}
\E\left[Y\big|Y>t\right]&=\int_0^\infty\frac{\p\left(Y>z,Y>t\right)}{\p(Y>t)}\d z\\
&=t\(1+\int_1^\infty z^{-\alpha}\(1+\frac{z^\rho-1}{\rho}A_H(t)(1+o(1))\)\d z\)\\
&=\frac{\alpha t}{\alpha-1}\(1+\frac{1}{\alpha(\alpha-\rho-1)}A_H(t)(1+o(1))\).
\end{align*}
Thus, we prove this lemma. 
\end{proof}

\noindent\textbf{Proof of Proposition \ref{pro:e}.}
Because of Equation \eqref{eq:1.1}, for large enough $p\uparrow1$ satisfying $\e_p(X)>0$, we have
\begin{align*}
1-\frac{\E(X)}{\e_p(X)}=\frac{2p-1}{1-p}\E\Bigg[\left(\frac{X}{\e_p(X)}-1\right)\id_{\{X/\e_p(X)\geq 1\}}\Bigg].
\end{align*}
 Applying the integration by parts, we have
\begin{align*}
\E\Bigg[\left(\frac{X}{\e_p(X)}-1\right)\id_{\{X/\e_p(X)\geq 1\}}\Bigg]
&=\overline{F}\(\e_p(X)\)\(\int_1^\infty x^{-\alpha}\d x+\int_1^\infty\(\frac{\overline{F}\(x\e_p(X)\)}{\overline{F}\(\e_p(X)\)}-x^{-\alpha}\)\d x\)\\
&=\overline{F}\(\e_p(X)\)\(\frac{1}{\alpha-1}+\int_1^\infty H_{-\alpha,\beta}(x)A(\e_p(X))\(1+o(1)\)\d x\)\\
&=\frac{\overline{F}\(\e_p(X)\)}{\alpha-1}\(1+\frac{1}{\alpha-\beta-1}A(\e_p(X))\(1+o(1)\)\),
\end{align*}
where the third step is due to the dominated convergence theorem ensured by Theorem 2.3.9 of \cite{de2006extreme}. In particular, \cite{bellini2014generalized} showed that
$$\e_p(X)\sim (\alpha-1)^{-1/\alpha}F^{\leftarrow}(p),~~~~~ p\uparrow1.$$
Since $\e_p(X)\rightarrow\infty$, $1-p\downarrow 0$ and $A(\e_p(X))\downarrow 0$ as $p\uparrow 1$,  by the first-order Taylor's expansion, we have that
\begin{align*}
\frac{1-p}{\overline{F}\(\e_p(X)\)}&=\frac{1}{\alpha-1}(1-2(1-p))\(1-\frac{\mu}{\e_p(X)}\)^{-1}\(1+\frac{1}{\alpha-\beta-1}A(\e_p(X))\(1+o(1)\)\)\\
&=\frac{1}{\alpha-1}(1-2(1-p))\(1+\frac{\mu}{\e_p(X)}\(1+o(1)\)\)\(1+\frac{1}{\alpha-\beta-1}A(\e_p(X))\(1+o(1)\)\)\\
&=\frac{1}{\alpha-1}\(1+\frac{(\alpha-1)^{-\beta/\alpha}}{\alpha-\beta-1}A(F^{\leftarrow}(p))\(1+o(1)\)+\frac{(\alpha-1)^{1/\alpha}\mu}{F^{\leftarrow}(p)}\(1+o(1)\)\),
\end{align*}
where in the third step we use $1-p\sim\overline{F}(F^{\leftarrow}(p))$ as $p\uparrow 1$. {Notably, in the second last step, we use $\lim\limits_{p\uparrow1}\overline{F}(F^{\leftarrow}(p))F^{\leftarrow}(p)=0$, and thus $\overline{F}(F^{\leftarrow}(p))=o(1/F^{\leftarrow}(p))$.} 
In addition, due to the fact that $U_F(\cdot)\in \RV_{1/\alpha,\beta/\alpha}$ with an auxiliary function $\alpha^{-2}A\circ U_F(\cdot)$, it follows that
\begin{align*}
\frac{\e_p(X)}{F^{\leftarrow}(p)}
&=\(\frac{1-p}{\overline{F}\(\e_p(X)\)}\)^{1/\alpha}\(1+\frac{\(\frac{1-p}{\overline{F}\(\e_p(X)\)}\)^{\beta/\alpha}-1}{\beta/\alpha}\alpha^{-2}A\circ U_F\(1/(1-p)\)\)\\
&=\(\alpha-1\)^{-1/\alpha}\(1+\frac{(\alpha-1)^{-\beta/\alpha}}{\alpha(\alpha-\beta-1)}A(F^{\leftarrow}(p))\(1+o(1)\)+\frac{(\alpha-1)^{1/\alpha}\mu}{\alpha F^{\leftarrow}(p)}\(1+o(1)\)\)\\
&\quad\cdot\(1+\frac{\(\alpha-1\)^{-\beta/\alpha}-1}{\alpha\beta}A(F^{\leftarrow}(p))\(1+o(1)\)\)\\
&=\(\alpha-1\)^{-1/\alpha}\(1+\(\frac{1}{\alpha\beta}\(\frac{(\alpha-1)^{1-\beta/\alpha}}{\alpha-\beta-1}-1\)A(F^{\leftarrow}(p))+\frac{(\alpha-1)^{1/\alpha}\mu}{\alpha F^{\leftarrow}(p)}\)\(1+o(1)\)\).
\end{align*}
Thus, we obtain the desired results.

\subsection{Proofs of Sections \ref{sec:app}-\ref{sec:7}}
\noindent\textbf{Proof of Theorem \ref{The:D}.}
By Theorem \ref{the:3.1}, \eqref{eq:D} and  {Taylor's expansion}, as $p\uparrow1$,  based on $\VaR$, we obtain
     \begin{align*}
     D_p^{\VaR}(S_n)
     &=1-\frac{\VaR_p(S_n)-n\mu}{nF^{\leftarrow}(p)}\(1-\frac{\mu}{F^{\leftarrow}(p)}\)^{-1}\\
     &=1-\frac{\VaR_p(S_n)-n\mu}{n F^{\leftarrow}(p)}\(1+\frac{\mu}{F^{\leftarrow}(p)}(1+o(1))\)\\
     &=1-n^{1/\alpha-1}\(1+\frac{n^{\beta/\alpha}-1}{\alpha\beta}A\(F^{\leftarrow}(p)\)\(1+o(1)\)\right.\\
     &\quad\left.+\frac{\mu_n^*\(F^{\leftarrow}(p)\)-(n-n^{1/\alpha})\mu}{n^{1/\alpha} F^{\leftarrow}(p)}\(1+o(1)\)\)+o\(\sum\limits_{i=1}^n(\phi_i\(F^{\leftarrow}(p)\)-d_i)\).
\end{align*}
According to  Proposition \ref{pro:e} and Theorem \ref{the:4.1}, as $p\uparrow 1$, we have
\begin{align*}
 D_p^{\e}(S_n)
&=1-\frac{\e_p(S_n)/n-\mu}{\e_p(X)-\mu}\\
 &=1-n^{1/\alpha-1}\(1+\frac{\(n^{\beta/\alpha}-1\)(\alpha-1)^{1-\beta/\alpha}}{\alpha\beta(\alpha-\beta-1)}A\(F^{\leftarrow}(p)\)\(1+o(1)\)\right.\\
 &\quad \left.+\frac{(\alpha-1)^{1/\alpha+1}\(\mu_n^*\(F^{\leftarrow}(p)\)-(n-n^{1/\alpha})\mu\)}{\alpha n^{1/\alpha} F^{\leftarrow}(p)}\(1+o(1)\)\)+o\(\sum\limits_{i=1}^n(\phi_i\(F^{\leftarrow}(p)\)-d_i)\).
\end{align*}   
Based on $\CTE$ and Lemma \ref{lem:cr}, as $p\uparrow1$, we have
 \begin{align*}
 D_p^{\CTE}(S_n)
 &=1-\frac{\CTE_p(S_n)-n\mu}{\frac{n\alpha F^{\leftarrow}(p) }{\alpha-1}\(1+\frac{1}{\alpha(\alpha-\beta-1)}A\(F^{\leftarrow}(p)\)-\frac{(\alpha-1)\mu}{\alpha F^{\leftarrow}(p)}\)}\\
   &=1-n^{1/\alpha-1}\(1+\(\zeta_{\alpha,\beta}^n-\frac{1}{\alpha(\alpha-\beta-1)}\)A\(F^{\leftarrow}(p)\)\(1+o(1)\)\)+o\(\sum\limits_{i=1}^n(\phi_i\(F^{\leftarrow}(p)\)-d_i)\)\\
   &\quad-\(\frac{(\alpha-1)\mu_n^*\(F^{\leftarrow}(p)-n\mu\)}{n \alpha F^{\leftarrow}(p)}+\frac{(\alpha-1) n^{1/\alpha-1} \mu}{\alpha F^{\leftarrow}(p)}\)\(1+o(1)\)\\
   &=1-n^{1/\alpha-1}\(1+\frac{1}{\alpha\beta}\(\frac{n^{\beta/\alpha}(\alpha-1)-\beta}{\alpha-\beta-1}-1\)A\(F^{\leftarrow}(p)\)\(1+o(1)\)\right.\\
   &\quad\left.+\frac{(\alpha-1)\(\mu_n^*\(F^{\leftarrow}(p)\)-\(n-n^{1/\alpha}\)\mu\)}{\alpha n^{1/\alpha} F^{\leftarrow}(p)}\(1+o(1)\)\)+o\(\sum\limits_{i=1}^n(\phi_i\(F^{\leftarrow}(p)\)-d_i)\).
\end{align*}
In addition, applying Lemma \ref{lem:cr}, it follows that 
\begin{align*}
 D_p^{\CE}(S_n)
 &=1-\frac{\CE_p(S_n)-n\mu}{\frac{n\alpha \e_p(X) }{\alpha-1}\(1+\frac{1}{\alpha(\alpha-\beta-1)}A\(\e_p(X)\)\)-n\mu}\\
  &=1-\frac{\CE_p(S_n)/n-\mu}{\frac{\alpha \e_p(X) }{\alpha-1}\(1+\frac{(\alpha-1)^{-\beta/\alpha}}{\alpha(\alpha-\beta-1)}A\(F^{\leftarrow}(p)\)(1+o(1))\)-\mu}\\
 &=1-n^{1/\alpha-1}\(1+\frac{\(n^{\beta/\alpha}-1\)(\alpha-1)^{-\beta/\alpha}\(\alpha+\beta-1\)}{\alpha\beta(\alpha-\beta-1)}A\(F^{\leftarrow}(p)\)\(1+o(1)\)\right.\\
 &\quad \left.+\frac{(\alpha-1)^{1/\alpha}(\alpha-2)\(\mu_n^*\(F^{\leftarrow}(p)\)-\(n-n^{1/\alpha}\)\mu\)}{\alpha n^{1/\alpha} F^{\leftarrow}(p) }\(1+o(1)\)\)+o\(\sum\limits_{i=1}^n(\phi_i\(F^{\leftarrow}(p)\)-d_i)\).
\end{align*}
Thus, this completes the proof of Theorem \ref{The:D}.

\noindent\textbf{Proof of Proposition \ref{pro:7}.}
According to Corollary 3 of \cite{bellini2014generalized}, we have
\begin{align*}
    p\E\[((X-\psi_p(X))_+)^{k-1}\]=(1-p)\E\[((X-\psi_p(X))_-)^{k-1}\].
\end{align*}
If $k-1$ is odd (i.e., $k=2,4,6,\dots$), using $S^{k-1} = (S_+)^{k-1} - (S_-)^{k-1}$, we have
\begin{align}\label{eq:1}
\E\[(\psi_p(X)-X)^{k-1}\]=\frac{2p-1}{1-p}\E\[((X-\psi_p(X))_+)^{k-1}\].
\end{align} 
If $k-1$ is even (i.e., $k=3,5,7,\dots$), using $S^{k-1} = (S_+)^{k-1} + (S_-)^{k-1}$, we have
\begin{align*}
    \E\[(\psi_p(X)-X)^{k-1}\]=\frac{1}{1-p}\E\[((X-\psi_p(X))_+)^{k-1}\].
\end{align*}
As $p\uparrow 1$ and $\alpha>1$, we have $1-p\sim \overline{F}(F^{\leftarrow}(p))=o\(1/F^{\leftarrow}(p)\)$ and $2p-1=1-2(1-p)=1-o\(1/F^{\leftarrow}(p)\)$, which implies $\frac{2p-1}{1-p}=\frac{1}{1-p}\Big(1-o\(1/F^{\leftarrow}(p)\)\Big).$ 
Hence, the above cases share asymptotically equivalent coefficients, and it suffices to prove our results under the case \eqref{eq:1} where $k-1$ is odd. First, we need to prove the first-order asymptotic. For the left-hand side of \eqref{eq:1}, applying the integration by parts, as $p\uparrow 1$, we have
 \begin{align*}
    \frac{\E\[((X-\psi_p(X))_+)^{k-1}\]}{\overline{F}(\psi_p(X))}&=\frac{1}{\overline{F}(\psi_p(X))}\int_{\psi_p(X)}^\infty(y-\psi_p(X))^{k-1}\d F(y)\nonumber\\
     &=(k-1)\psi_p(X)^{k-1}\int_1^\infty\overline{F}(\psi_p(X)y)(y-1)^{k-2}\d y\nonumber\\
     &\sim (k-1)\psi_p(X)^{k-1}\int_1^\infty y^{-\alpha}(y-1)^{k-2}\d y\nonumber\\
     &=(k-1)\psi_p(X)^{k-1}\int_0^1y^{\alpha-k}(1-y)^{k-2}\d y. 
 \end{align*}   
For the left-hand side of \eqref{eq:1}, as $p\uparrow 1$, we have 
\begin{align*}
    \E\[(\psi_p(X) -X)^{k-1}\]&= \psi_p(X) ^{k-1}\E\[\(1-\frac{X}{\psi_p(X) }\)^{k-1}\].
\end{align*}
Thus,
\begin{align*}
    \frac{1-p}{\overline{F}(\psi_p(X) )}
\sim \frac{(2p-1)(k-1)B_{\alpha,k}}{\E\[\(1-\frac{X}{\psi_p(X) }\)^{k-1}\]}.
\end{align*}
Clearly, as $p\uparrow 1$, $\psi_p(X)\rightarrow \infty$ and  $1-p\sim\overline{F}\(F^{\leftarrow}(p)\)$, we have
\begin{align}\label{eq:2}
   \psi_p(X)\sim L_{\alpha,k}F^{\leftarrow}(p).
\end{align}
Furthermore, we prove the second-order asymptotic. As $p\uparrow 1$, we have
\begin{align*}
&\quad\E\[(X-\psi_p(X) )_+^{k-1}\]\nn\\
&= (k-1)\psi_p(X)^{k-1}\overline{F}(\psi_p(X) )\int_1^\infty\frac{\overline{F}(\psi_p(X) y)}{\overline{F}(\psi_p(X) )}(y-1)^{k-2}\d y\nonumber\\
  &=(k-1)\psi_p(X) ^{k-1}\overline{F}(\psi_p(X) )\int_1^\infty y^{-\alpha}\(1+\frac{y^\beta-1}{\beta}A(\psi_p(X) )(1+o(1))\)(y-1)^{k-2}\d y\nonumber\\
  &=(k-1)\psi_p(X) ^{k-1}\overline{F}(\psi_p(X) )B_{\alpha,k}\(1+\frac{B_{\alpha-\beta,k}-B_{\alpha,k}}{\beta B_{\alpha,k}}A(\psi_p(X) )(1+o(1))\),
\end{align*}
where the second step is due to the dominated convergence theorem. For the left-hand side of \eqref{eq:1}, as $p\uparrow 1$, we have $1-p\sim\overline{F}\(F^{\leftarrow}(p)\)(1+o(1)) $ and
\begin{align}\label{eq:9.9}
    \E\[(\psi_p(X) -X)^{k-1}\]&= \psi_p(X) ^{k-1}\E\[\(1-\frac{X}{\psi_p(X) }\)^{k-1}\](1+o(1))\nonumber\\
    &= \psi_p(X) ^{k-1}\(1-(k-1)\frac{\mu}{\psi_p(X) }(1+o(1))\).
\end{align}
By Taylor's expansion and \eqref{eq:2}, it follows that
\begin{align*}
   &\quad \frac{1-p}{\overline{F}(\psi_p(X) )}\\
   &=(k-1)B_{\alpha,k}\Big(1-2(1-p)\Big)\(1+\frac{B_{\alpha-\beta,k}-B_{\alpha,k}}{\beta B_{\alpha,k}}A(\psi_p(X) )(1+o(1))\)\(1+(k-1)\frac{\mu}{\psi_p(X) }\)\nonumber\\
   &=(k-1)B_{\alpha,k}\(1+\(\frac{B_{\alpha-\beta,k}-B_{\alpha,k}}{\beta B_{\alpha,k}}L_{\alpha,k}^{\beta}A(F^{\leftarrow}(p))+\frac{(k-1)\mu}{L_{\alpha,k}F^{\leftarrow}(p)}\)(1+o(1))\).
\end{align*}
 In addition, due to the fact that $U_F(\cdot)\in2\RV_{1/\alpha,\beta/\alpha}$ with an auxiliary function $\alpha^{-2}A\circ U_F(\cdot)$, it follows that
\begin{align*}
    \frac{\psi_p(X) }{F^{\leftarrow}(p)}&=\(\frac{1-p}{\overline{F}(\psi_p(X) )}\)^{1/\alpha}\(1+\frac{\(\frac{1-p}{\overline{F}(\psi_p(X) )}\)^{\beta/\alpha}-1}{\alpha\beta}A\circ U_F(1/(1-p))\)\nonumber\\
    &=L_{\alpha,k}\(1+\(\frac{B_{\alpha-\beta,k}-B_{\alpha,k}}{\alpha\beta B_{\alpha,k}}L_{\alpha,k}^{\beta}A(F^{\leftarrow}(p))+\frac{(k-1)\mu}{\alpha L_{\alpha,k}F^{\leftarrow}(p)}\)(1+o(1))\)\nonumber\\
    &\quad\cdot\(1+\frac{L_{\alpha,k}^{\beta}-1}{\alpha\beta}A(F^{\leftarrow}(p))(1+o(1))\)\nonumber\\
    &=L_{\alpha,k}\(1+\(\frac{L_{\alpha,k}^{\beta}B_{\alpha-\beta,k}B_{\alpha,k}^{-1}-1}{\alpha\beta}A(F^{\leftarrow}(p))+\frac{(k-1)\mu}{\alpha L_{\alpha,k}F^{\leftarrow}(p)}\)(1+o(1))\).
\end{align*}
Thus, as $p\uparrow 1$, we have
\begin{align*}
   \psi_p(X) =L_{\alpha,k}F^{\leftarrow}(p)\(1+\(\frac{\Delta_{\alpha,\beta,k}-1}{\alpha\beta}A(F^{\leftarrow}(p))+\frac{(k-1)\mu}{\alpha L_{\alpha,k}F^{\leftarrow}(p)}\)(1+o(1))\).
\end{align*}
This completes the proof.

\subsubsection*{Acknowledgement} 
The authors would like to thank the editors and two anonymous referees for their insightful suggestions and valuable comments, which helped us greatly improve the paper. The authors are grateful to Ziyue Shi, Shijie Wang, Zhichen Wang, Fan Yang, Yang Yang and members of the research group on financial mathematics and risk management at The Chinese University of Hong Kong, Shenzhen for their valuable comments and useful conversations.
Y. Liu acknowledges financial support from the National Natural Science Foundation of China (Grant No. 12401624), Guangdong Science and Technology Program (Grant No. 2024QN11X076), Shenzhen Science and Technology Program (Grant No. RCBS20231211090814028, JCYJ20250604141203005, 2025TC0010) as well as The Chinese University of Hong Kong (Shenzhen) University Development Fund (Grant No. UDF01003336) and is partly supported by the Guangdong Provincial Key Laboratory of Mathematical Foundations for Artificial Intelligence (Grant No. 2023B1212010001). 
B. Geng acknowledges financial support from the research startup fund (Grant No. S020318033/015) at Anhui University and the Provincial Natural Science Research Project of Anhui Colleges (2024AH050037). Y. Zhao is supported by the Hickman Scholarship from the Society of Actuaries.



\small
\bibliographystyle{apalike}
\bibliography{reference}

\end{CJK}
\end{document}